\definecolor{Brown}{rgb}{0.55,0.0,0.10}
\definecolor{dgreen}{rgb}{0.00,0.56,0.00}
\definecolor{vertmoinsfonce}{rgb}{0.00,0.50,0.00}
\definecolor{vert}{rgb}{0.00,0.60,0.00}
\definecolor{llightggray}{rgb}{0.97,0.97,0.97}
\definecolor{lightggray}{rgb}{0.9,0.9,0.9}
\definecolor{ggray}{rgb}{0.5,0.5,0.5}
\definecolor{darkggray}{rgb}{0.25,0.25,0.25}
\definecolor{ddarkggray}{rgb}{0.1,0.1,0.1}
\definecolor{bleu}{rgb}{0.00,0.00,1.00}
\definecolor{darkblue}{rgb}{0,0,0.7}
\newtheorem{lemma}{Lemma}
\newtheorem{theorem}{Theorem}[section]
\newtheorem{definition}[theorem]{Definition}
\newtheorem{corollary}[theorem]{Corollary}
\newtheorem{proposition}[theorem]{Proposition}
\theoremstyle{plain}{\theorembodyfont{\rmfamily}%
}
\theoremstyle{plain}{\theorembodyfont{\rmfamily}%
}
\theoremstyle{plain}{
\theorembodyfont{\rmfamily}

	\newtheorem{remark}[theorem]{Remark}

	}
\newcommand{\vex}{\ensuremath{\operatorname{vex}}}
\newcommand{\argmin}{\ensuremath{\operatorname{argmin}}}
\newcommand{\argmax}{\ensuremath{\operatorname{argmax}}}
\newcommand{\R}{\mathbb{R}}
\newcommand{\N}{\mathbb{N}}
\newcommand{\prob}{\mathbb{P}}
\newcommand{\E}{\mathbb{E}}
\newcommand{\Q}{\mathbb{Q}}
\font\dsrom=dsrom10 scaled 1200 \def \indic{\textrm{\dsrom{1}}}
\newcommand{\UN}{\indic}
\newcommand{\QQ}{\mathcal{Q}}
\newcommand{\PP}{\mathcal{P}}
\newcommand{\EE}{\mathcal{E}}
\newcommand{\mc}{\mathcal}
\newcommand\const[3][6]{%
    \edef\temporary{round(#3}%
    \expandafter\FPeval\csname#2\expandafter\endcsname
        \expandafter{\temporary:#1)}%
        \pstVerb{/#2 \csname#2\endcsname\space def}%
}
\newcommand{\PlotConditionalEntropy}[3]{
%#1 stands for p_0
%#2 stands for \delta_1
%#3 stands for \delta_2
\def\qqA{#1 #3 mul  #1 #3 mul #1 neg 1 add #2 neg 1 add mul add div}
\def\qqB{#1 #3 neg 1 add mul #1 #3 neg 1 add mul #1 neg 1 add  #2 mul add div}
\def\ppA{x \qqA \space mul  #1 neg 1 add #2 neg 1 add mul #1 #3 mul add mul #1  x neg 1 add #2 neg 1 add mul x #3 mul add mul div} %%% \ppA stands for p_1(q_1) and p_3(q_2)
\def\ppB{x \qqB \space mul  #1 neg 1 add #2 mul #1 #3 neg 1 add mul add mul #1  x neg 1 add #2 mul x #3 neg 1 add mul add mul div}%%% \ppB stands for p_2(q_1) and p_4(q_2)
\def\lambdaA{x neg 1 add #2 neg 1 add mul x #3 mul add}
\def\lambdaB{x neg 1 add #2 mul #3 neg 1 add x mul add}
\def\entropyA{\ppA \space ln 2 ln div \ppA \space mul neg 1 \ppA \space neg add ln 2 ln div 1 \ppA \space neg add mul neg add}
\def\entropyB{\ppB \space  ln 2 ln div \ppB \space mul neg 1 \ppB \space neg add ln 2 ln div 1 \ppB \space neg add mul neg add}
\def\HUZw{\lambdaA \space \entropyA \space mul \lambdaB\space  \entropyB \space mul add}
\psplot[plotpoints=100]{0.001}{0.999}{\HUZw}
}
\newcommand{\PlotGq}[3]{
%#1 stands for p_0
%#2 stands for \delta_1
%#3 stands for \delta_2
\def\qqA{#1 #3 mul  #1 #3 mul #1 neg 1 add #2 neg 1 add mul add div}
\def\qqB{#1 #3 neg 1 add mul #1 #3 neg 1 add mul #1 neg 1 add  #2 mul add div}
\def\ppA{x \qqA \space mul  #1 neg 1 add #2 neg 1 add mul #1 #3 mul add mul #1  x neg 1 add #2 neg 1 add mul x #3 mul add mul div} %%% \ppA stands for p_1(q_1) and p_3(q_2)
\def\ppB{x \qqB \space mul  #1 neg 1 add #2 mul #1 #3 neg 1 add mul add mul #1  x neg 1 add #2 mul x #3 neg 1 add mul add mul div}%%% \ppB stands for p_2(q_1) and p_4(q_2)
\def\lambdaA{x neg 1 add #2 neg 1 add mul x #3 mul add}
\def\lambdaB{x neg 1 add #2 mul #3 neg 1 add x mul add}
\def\entropyA{\ppA \space ln 2 ln div \ppA \space mul neg 1 \ppA \space neg add ln 2 ln div 1 \ppA \space neg add mul neg add}
\def\entropyB{\ppB \space  ln 2 ln div \ppB \space mul neg 1 \ppB \space neg add ln 2 ln div 1 \ppB \space neg add mul neg add}
\def\HUZw{\lambdaA \space \entropyA \space mul \lambdaB\space  \entropyB \space mul add}
\def\Gq{\HUZw \space neg 0.8813 add}
\psplot[linecolor=blue,plotpoints=100]{0.001}{0.145}{\Gq}
\psplot[linestyle=dashed,plotpoints=100]{0.145}{0.999}{\Gq}
%\psplot[plotpoints=100]{0.001}{0.999}{\HUZw}
}
\newcommand{\FigDynamicNew}[4]{
%#1 stands for p_0
%#2 stands for \delta_1
%#3 stands for \delta_2
%#4 stands for \gamma

\const{qqA}{#1 * #3 / (#1 * #3 + (1 - #1) * (1 - #2))}
\const{qqB}{#1 * (1-#3) / (#1 * (1-#3) + (1 - #1) * #2)}
%\const{nuA}{#4 * #1  * (1 - #2) / #4 * #1  * (1 - #2)}
\const{nuA}{#4 * #1  * (1 - #2)/ (\qqA * (1 - #2) + (\qqA - #4) * #1 * (#2 + #3 - 1))}
\const{nuB}{#4 * #1  * #2/ (\qqB * #2 + (\qqB - #4) * #1 * (1 - #2 - #3))}

\const{pBnuA}{#4 * (1-#2)  * \qqB * ((1  -#1) * #2 + #1 * (1 - #3)) /((\qqA * (1 - #1) * (1 - #2) + (\qqA - #4) * #1 * #3 )* #2 + #4 * #1 * ( 1 - #2) * ( 1 - #3))}
\const{pAnuB}{#4 * #2  * \qqA * ((1  -#1) * (1 - #2) + #1 *  #3) /((\qqB * (1 - #1) * #2 + (\qqB - #4) * #1 * (1 - #3) )* (1 - #2) + #4 * #1 *  #2 *  #3)}

\def\ppA{x \qqA \space mul  #1 neg 1 add #2 neg 1 add mul #1 #3 mul add mul #1  x neg 1 add #2 neg 1 add mul x #3 mul add mul div} %%% \ppA stands for p_1(q_1) and p_3(q_2)
\def\ppB{x \qqB \space mul  #1 neg 1 add #2 mul #1 #3 neg 1 add mul add mul #1  x neg 1 add #2 mul x #3 neg 1 add mul add mul div}%%% \ppB stands for p_2(q_1) and p_4(q_2)
%%%avec les infos: \def\ppA{x \qqB \space mul  ((#1 neg 1 add) (#2 neg 1 add) mul (#1 #3 mul) add ) mul (#1 (( x neg 1 add) (#2 neg 1 add) mul (x #3 mul) add)mul )div}\def\ppB{x \qqB \space mul  ((#1 neg 1 add) #2  mul (#1 (#3 neg 1 add) mul) add ) mul (#1 (( x neg 1 add) #2  mul (x (#3 neg 1 add) mul) add)mul )div}
%\show\ppA

\begin{figure}[!h]
\begin{center}
\psset{xunit=5cm,yunit=5cm}
\begin{pspicture}(-0.2,-0.28)(1.2,1.2)
\psline{->}(0,-0.1)(0,1.1)
\psline{->}(-0.1,0)(1.2,0)
\psline{-}(1,-0.1)(1,1.1)
\psline{-}(-0.1,1)(1.1,1)

\rput[r](-0.02,-0.05){$0$}
\rput[r](-0.02,1.04){$1$}
\rput[u](1.03,-0.05){$1$}
\rput[u](1.13,-0.05){$q$}

\rput[l]{-45}(#1,-0.06){$p_0 = #1$}
\psline[linestyle=dotted](#1,0)(#1,1)
\psline[linecolor=blue,linestyle=dotted]{-}(0,#1)(1,#1)
\rput[r](-0.02,#1){$p_0=\PP(u_1) = #1$}
%%
%\rput(\qqA, \qqB){(\qqA,\qqB)}
\psline[linecolor=blue,linestyle=dotted]{-}(0,\qqA)(1,\qqA)
\rput[r](-0.02, \qqA){$\PP(u_1|z_0) = \qqA$}
\psline[linecolor=blue,linestyle=dotted]{-}(0,\qqB)(1,\qqB)
\rput[r](-0.02, \qqB){$\PP(u_1|z_1) = \qqB$}
\psdots(#1,\qqA)(#1,\qqB)(0,\qqA)(0,\qqB)(0,#1)(#1,#1)(#1,0)
\Aput*{$p_1$}
\psplot[plotpoints=100,linecolor=brown]{0}{1}{ \ppA}
\psplot[plotpoints=100,linecolor=brown]{0}{1}{ \ppB}%\Bput{$p_2$}
\psline{-}(0,0)(1,1)
\psline[linecolor=blue,linestyle=dashed]{-}(\nuA,0)(\nuA,1)
\psline[linecolor=blue,linestyle=dashed]{-}(\nuB,0)(\nuB,1)
\psline[linecolor=blue]{-}(0,#4)(1,#4)
\rput[l](1.04, #4){$\gamma = #4$}
\rput[l]{-45}(\nuA,-0.06){$\nu_0 = \nuA$}
\rput[l]{-45}(\nuB,-0.06){$\nu_1 = \nuB$}
\psdots[linecolor=blue](\nuA,0)(\nuB,0)(\nuA,#4)(\nuB,#4)(1,#4)
%\rput[l]{-45}(0.06,-0.045){$\textcolor[rgb]{0,0,1}{p_2^{\star}=0.06}$}
%%
%\pcline{<->}(0,1.12)(#1,1.12)\Aput{$q_1$}
%\pcline{<->}(#1,1.12)(1,1.12)\Aput{$q_2$}
%%
\psline[linecolor=red,linestyle=dashed]{-}(1,\pBnuA)(\nuA,\pBnuA)
\psline[linecolor=red,linestyle=dashed]{-}(1,\pAnuB)(\nuB,\pAnuB)
\psdots[linecolor=red](\nuA,\pBnuA)(\nuB,\pAnuB)(1,\pBnuA)(1,\pAnuB)
\rput[l](1.04, \pAnuB){$p_0(\nu_1) = \pAnuB$}
\rput[l](1.04, 0.97){$p_1(\nu_0) = \pBnuA$}

%\rput[u](! #1 \space 2 \space div  \qqA){$p_5$}
%\rput[u](! #1 \space neg \space 1 \space add \space 2 \space div  0.6){$p_7$}

\rput[u](0.25, 0.6){$p_1(q)$}
\rput[u](0.72,0.45 ){$p_0(q)$}

\end{pspicture}
\caption{The posterior beliefs functions $p_0(q)$ and $p_1(q)$ defined in \eqref{eq:VeritableBelief0} and \eqref{eq:VeritableBelief1}, depending on the \textit{interim belief} $q\in [0,1]$, for $p_0=#1$, $\delta_1=#2$, $\delta_2=#3$ and $\gamma = #4$.}\label{fig:DynamicNew#1_#2_#3_#4}
\end{center}
\end{figure}}
\begin{document}

% paper title
% can use linebreaks \\ within to get better formatting as desired

\title{Strategic Communication with Side Information at the Decoder}
%\title{Information-Theoretic Limits of Strategic Communication}
%\title{Strategic Communication for the Coordination}
%\title{Strategic Communication for the Coordination of Autonomous Devices}

%other titles: 
%Strategic Communication for the Coordination of Autonomous Devices
%Strategic Transmission for the Coordination of Autonomous Devices
%Fundamental Limits of Strategic Communication
%Strategic Communication: an Interplay between Coordination and Persuasion
%

%
%
% author names and IEEE memberships
% note positions of commas and nonbreaking spaces ( ~ ) LaTeX will not break
% a structure at a ~ so this keeps an author's name from being broken across
% two lines.
% use \thanks{} to gain access to the first footnote area
% a separate \thanks must be used for each paragraph as LaTeX2e's \thanks
% was not built to handle multiple paragraphs
%

%\author{Ma\"{e}l Le Treust and Tristan Tomala}
%\author{\IEEEauthorblockN{Ma\"{e}l Le Treust}\\
%\IEEEauthorblockA{
%ETIS, UMR 8051 / ENSEA, Université Cergy-Pontoise, CNRS,\\% F-95000, Cergy
%%ETIS, CNRS UMR8051, ENSEA, Université de Cergy-Pontoise,\\
%6, avenue du Ponceau,\\
%95014 CERGY-PONTOISE CEDEX,\\
%FRANCE\\
%Email: mael.le-treust@ensea.fr}
%}

\author{\IEEEauthorblockN{Ma\"{e}l Le Treust\IEEEauthorrefmark{1} and 
Tristan Tomala \IEEEauthorrefmark{2}}\\
\IEEEauthorblockA{\IEEEauthorrefmark{1}
ETIS UMR 8051, Université Paris Seine, Université Cergy-Pontoise, ENSEA, CNRS,\\
6, avenue du Ponceau, 95014 Cergy-Pontoise CEDEX, FRANCE\\
Email: mael.le-treust@ensea.fr}\\
\thanks{\IEEEauthorrefmark{1} Maël Le Treust gratefully acknowledges financial support from INS2I CNRS, DIM-RFSI, SRV ENSEA, UFR-ST UCP, The Paris Seine Initiative and IEA Cergy-Pontoise.}
\IEEEauthorblockA{\IEEEauthorrefmark{2}
HEC Paris, GREGHEC UMR 2959\\
1 rue de la Libération, 78351 Jouy-en-Josas CEDEX, FRANCE\\
Email: tomala@hec.fr}
\thanks{\IEEEauthorrefmark{2} Tristan Tomala gratefully acknowledges the support of the HEC foundation and ANR/Investissements d'Avenir under grant ANR-11-IDEX-0003/Labex Ecodec/ANR-11-LABX- 0047.} \thanks{This work was presented in part at the 54th Allerton Conference, Monticello, Illinois, Sept. 2016 \cite{LeTreustTomala(Allerton)16}; the XXVI Colloque Gretsi, Juan-Les-Pins, France, Sept. 2017 \cite{LeTreustTomala(Gretsi)17}; the International Zurich Seminar on Information and Communication, Switzerland, Feb. 2018 \cite{LeTreustTomala(IZS)18}. This research has been conducted as part of the project Labex MME-DII (ANR11-LBX-0023-01). The authors would like to thank Institute Henri Poincaré (IHP) in Paris, France, for hosting numerous research meetings.}}

%% The paper headers
%\markboth{Journal of \LaTeX\ Class Files,~Vol.~6, No.~1, January~2007}%
%{Shell \MakeLowercase{\textit{et al.}}: Bare Demo of IEEEtran.cls for Journals}

% make the title area
\vspace{-0.9cm}

\maketitle

% For peer review articles, you can put extra information on the cover
% page as needed:
% \ifCLASSOPTIONpeerreview
% \begin{center} \bfseries EDICS Category: 3-BBND \end{center}
% \fi
%
% For peerreview articles, this IEEEtran command inserts a page break and
% creates the second title. It will be ignored for other modes.

\IEEEpeerreviewmaketitle

%\newpage

%\newpage

%\tableofcontents

%%%%%%%%%%%%%%%%%%%%%%%%%%%%%%%%%%%%%%%%%%%%%%%%%%%%%%%%%%%%%%%%%%%%%%%%%%%%%%%%%%%%%%%%%%%%%%%%%%%%%%%%%%%%%%%%%%%%%%%%%%%%%%%%%%%%%%%%%%%%%%%%%%%%%%%%%%%%%%%%%%%%%%%%%%%%%%%%%%%%%%%%%%%%%%%%%%%%%%%%%%%%%%%%%%%%%%%%%%%%%%%%%%%%%%%%%%%%%%%%%%%%%%%%%%%%%%%%%%%%%%%%%%%%%%%%%%%%%%%%%%%%%%%%%%%%%%%%%%%%%%%%%%%%%%%%%%%%%%%%%%%%%%%%%%%%%%%%

\vspace{-1.2cm}

\begin{abstract}

We investigate the problem of strategic point-to-point communication with side information at the decoder, in which the encoder and the decoder have mismatched distortion functions. The decoding process is not supervised, it returns the output sequence that minimizes the decoder's distortion function. The encoding process is designed beforehand and takes into account the decoder's distortion mismatch. When the communication channel is perfect and no side information is available at the decoder, this problem is referred to as {\em the Bayesian persuasion game} of Kamenica-Gentzkow in the Economics literature. We formulate the  strategic communication scenario as a joint source-channel coding problem with side information at the decoder. The informational content of the source influences the design of the encoding since it impacts differently the two distinct distortion functions. The  side information complexifies the analysis since the encoder is uncertain about the decoder's belief on the source statistics. We characterize the single-letter optimal solution by controlling the posterior beliefs induced by the Wyner-Ziv's source encoding scheme. This confirms the benefit of sending encoded data bits even if the decoding process is not supervised. 
\end{abstract}

%%%%%%%%%%%%%%%%%%%%%%%%%%%%%%%%%%%%%%%%%%%%%%%%%%%%%%%%%%%%%%%%%%%%%%%%%%%%%%%%%%%%%%%%%%%%%%%%%%%%%%%%%%%%%%%%%%%%%%%%%%%%%%%%%%%%%%%%%%%%%%%%%%%%%%%%%%%%%%%%%%%%%%%%%%%%%%%%%%%%%%%%%%%%%%%%%%%%%%%%%%%%%%%%%%%%%%%%%%%%%%%%%%%%%%%%%%%%%%%%%%%%%%%%%%%%%%%%%%%%%%%%%%%%%%%%%%%%%%%%%%%%%%%%%%%%%%%%%%%%%%%%%%%%%%%%%%%%%%%%%%%%%%%%%%

\section{Introduction}\label{sec:Introduction}

What information should be communicated to a receiver who minimizes a mismatched distortion metric? This new question arises in the context of the internet of things (IoT) composed of a variety of devices which are able to interact and coordinate with each other in order to create new applications/services and reach their own goals. In this context,  wireless devices may have distinct objectives. For example, adjacent access points in crowded downtown areas, seeking to transmit at the same time, compete for the use of bandwidth; cognitive radio devices mitigate the interference effects by allocating their power budget over several parallel multiple access channels, as in \cite[Sec. IV]{LeTreustTomala(Allerton)16}. Such situations require new efficient techniques to coordinate communication traffic between  devices whose objectives are \emph{neither aligned, nor antagonistic}. This question differs from the classical paradigm in Information Theory which assumes that  communicating devices are of two types:  transmitters who pursue the common goal of transferring information; or opponents who try to mitigate the communication, e.g. the jammer corrupts the information, the eavesdropper infers it, the warden detects the covert transmission. In this work, we characterize the information-theoretic limits of  strategic communication between interacting autonomous devices having general distortion functions, not necessarily aligned.

\subsection{Scenario and contributions}

 We formulate the \emph{strategic communication problem} as a joint source-channel coding problem with decoder's side information, in which the encoder and the decoder are endowed with distinct distortion functions $d_{\textsf{e}}$ and $d_{\textsf{d}}$. Both distortion functions depend on the symbols of source, side information and decoder's outputs. We consider that the decoder is not supervised and strategic, \textit{i.e.} it selects a decoding strategy $\tau$ which is optimal for its own distortion function. The encoder anticipates the mismatch of the decoder's distortion's and implements an encoding strategy $\sigma$ that minimizes its distortion. The problem we consider lies on the bridge between Information Theory and Game Theory, and is given by
\begin{align}
&\lim_{n\to+\infty}\inf_{\sigma}\max_{{\tau} \in \argmin_{\tilde{\tau}} d_{\textsf{d}}^{\,n}(\sigma, \tilde{\tau})} d_{\textsf{e}}^{\,n}(\sigma, {\tau} ). \label{eq:GameProblem000}
\end{align}

The main difficulty comes from the fact that the decoder can possibly choose an output sequence that induces a catastrophic distortion for the encoder. This modifies the encoder's objective, which is to control the decoder's posterior beliefs regarding source symbols, rather than transferring information. The closest paper in the literature is \cite{LeTreustTomala19}, in which no side information is available at the decoder. 

In this article, we demonstrate that an optimal strategy of the decoder produces a sequence of outputs which is almost the same as the one prescribed by the Wyner-Ziv's coding in \cite{wyner-it-1976}, adapted to the joint source-channel scenario by Merhav-Shamai in \cite{MerhavShamai03}. This demonstrates that the Wyner-Ziv's source coding reveals nothing but the \emph{exact} amount of information needed by the decoder. We establish a single-letter expression for the decoder's posterior belief, that allows us to characterize the solution of problem \eqref{eq:GameProblem000}. Our solution boils down to the previous results in \cite{wyner-it-1976} and \cite{MerhavShamai03}, when both distortion functions are equal.

Then, we reformulate the solution in terms of a convex closure of an auxiliary distortion function with an entropy constraint. This simplifies the optimization over the set of  probability distributions by reducing the dimension of the single-letter problem. This second solution also relates our result to the literature on Bayesian persuasion games, see \cite{KamenicaGentzkow11}. As an illustration, we consider the example of the doubly symmetric binary source introduced by Wyner-Ziv in \cite[Sec. II]{wyner-it-1976}, for which we compute the optimal solution explicitly. We notice that the optimal cardinality of the auxiliary random variable is either two or three, depending on the source and channel parameters.

We point out three essential features of  strategic communication problem with decoder's side information. 
\begin{itemize}
\item[1.] Each source symbol has a different impact on the encoder and the decoder's distortion functions, hence it is optimal to encode each symbol differently.
\item[2.] The noiseless version of this problem without decoder's side information  corresponds to the Bayesian persuasion game of Kamenica-Gentzkow \cite{KamenicaGentzkow11}. In that case, the optimal information disclosure policy requires a fixed amount of information bits. When the channel capacity is larger than this amount, it is optimal not to use all the channel resource.
\item[3.] The decoder's side information has two opposite effects on the optimal encoder's distortion: it enlarges the set of decoder's posterior beliefs, so it may decrease the encoder's distortion; it reveals partial information to the decoder, so it forces some decoder's best-reply symbols which might be sub-optimal for the encoder's distortion.
\end{itemize}

\subsection{Related literature}

 The problem of ``strategic communication'' in information-theoretic setting has been formulated by Akyol \textit{et al.} in \cite{AkyolLangbortBasar15},  \cite{AkyolLangbortBasar16}, \cite{AkyolLangbortBasarIEEE17}. The authors characterize the optimal solution for Gaussian source, side information and channel, with the Crawford-Sobel's quadratic cost functions \cite{CrawfordSobel1982StrategicInformation}. They prove that the optimal solution in the one-shot problem is also optimal when considering several strategic communication problems. This is not the case for general discrete source, channel and mismatched distortion functions. These results were further extended in \cite{NadendlaLangbortBasar(TCOM)18} for non-identical prior beliefs about the source and the channel. The problem of strategic communication was introduced in the Control Theory literature by Sar{\i}ta\c{s} \textit{et al.} in \cite{SaritasYukselGezici(ArXiV)17} and  \cite{SaritasYukselGezici(TAC)17}. The authors extend the model of Crawford-Sobel to multidimentional sources and noisy channels and they determine whether  the optimal policies are linear or based on  some quantization. The connection to the binary hypothesis-testing problem was pointed out in \cite{SaritasGeziciYuksel(TSP)19}. Sender-receiver games are also investigated in \cite{FarokhiTeixeiraLangbort17}, for the problem of ``strategic estimation'' involving self-interested sensors; and in \cite{MarecekShortenYu15}, \cite{TavafoghiTeneketzis(Allerton)17}, for the ``network congestion'' problem. In \cite{DughmiXu16}, \cite{Dughmi17}, \cite{DughmiKempeQiang16}, the authors investigate the computational aspects of the Bayesian persuasion game, when the signals are noisy. In \cite{BerryTse(ShannonMetNash)11}, \cite{PerlazaTandonPoorHan15}, the interference channel coding problem is formulated  as a game in which the users, i.e. the pairs of encoder/decoder, are allowed to use \emph{any} encoding/decoding strategy. The authors compute the set of Nash equilibria for linear deterministic and Gaussian channels. The non-aligned devices' objectives are captured by distinct distortion functions. Coding for several distortion measures is investigated for ``multiple descriptions coding'' in  \cite{GamalCover82}, for the lossy version of ``Steinberg's common reconstruction'' problem in \cite{LapidothMalarWigger14}, for the problem of minimax distortion redundancy in \cite{DemboWeissman03}, for ``lossy broadcasting'' in \cite{TimoGrantKramer13}, for an alternative measure of ``secrecy'' in \cite{Yamamoto88}, \cite{Yamamoto97}, \cite{SchielerCuff(RateDistortion14)}, \cite{SchielerCuff(Henchman)16}. 

The lossy source coding problem with mismatched distortion functions was formulated by Lapidoth, in \cite{Lapidoth97}. In this model, the decoder attempts to reconstruct a source sequence that was encoded with respect to another distortion metric. The problem of the mismatch channel capacity was studied in \cite{MerhavKaplanLapidothShamai94}, \cite{Zamir02}, \cite{Somekh15}, \cite{ScarlettMartinezFabregas16}, \cite{ZhouTanMotani19}, in which the decoding metric is not necessarily matched with the channel statistics.

\begin{figure}[!ht]
\begin{center}
\psset{xunit=0.9cm,yunit=0.9cm}
\begin{pspicture}(0,-1.1)(8.5,1.7)
\pscircle(0,0.5){0.45}
\psframe(2,0)(3,1)
\pscircle(5,0.5){0.45}
\psframe(7,0)(8,1)
\psline[linewidth=1pt]{->}(0.5,0.5)(2,0.5)
\psline[linewidth=1pt]{->}(3,0.5)(4.5,0.5)
\psline[linewidth=1pt]{->}(5.5,0.5)(7,0.5)
\psline[linewidth=1pt]{->}(8,0.5)(9,0.5)
\psline[linewidth=1pt]{->}(0,0)(0,-0.5)(7.5,-0.5)(7.5,0)
\rput[u](1,-0.2){$Z^{n}$}
\rput[u](1,0.8){$U^{n}$}
\rput[u](3.75,0.8){$X^n$}
\rput[u](6.25,0.8){$Y^n$}
\rput[u](8.5,0.8){$V^n$}
\rput(0,0.5){$\PP$}
\rput(5,0.5){$\mc{T}$}
%\rput(2.5,0.5){$\EE$}
%\rput(7.5,0.5){$\DD$}
\rput(2.5,0.5){$\textsf{e}$}
\rput(7.5,0.5){$\textsf{d}$}
\rput(2.5,1.5){$d_{\textsf{e}}(u,v)$}
\rput(7.5,1.5){$d_{\textsf{d}}(u,v)$}
\end{pspicture}
\caption{The information source $U$ and side information $Z$ are drawn i.i.d. according to $\PP_{UZ}$ and the channel $\mc{T}_{Y|X}$ is memoryless. The encoder $\textsf{e}$ and the decoder $\textsf{d}$ minimize mismatched distortion functions $d_{\textsf{e}}(u,v) \neq d_{\textsf{d}}(u,v)$. }
\label{fig:StrategicEmpiricalCoordination}
\end{center}
\end{figure}

The problem of ``strategic information transmission'' has been well studied in the Economics literature since the seminal paper by Crawford-Sobel \cite{CrawfordSobel1982StrategicInformation}. In this model, a better-informed sender transmits a signal to a receiver, who takes an action which impacts both sender and receiver's utility functions. The problem consists in determining the  \emph{optimal information disclosure} policy given that the receiver' best-reply action affects the sender's utility, see \cite{Forges94} for a survey.  In \cite{KamenicaGentzkow11}, Kamenica-Gentzkow introduced the Bayesian persuasion game in which the sender \emph{commits} to an information disclosure policy before the game starts. This subtle change of rules of the game induces a very different equilibrium solution related to \emph{Stackelberg equilibrium} \cite{stackelberg-book-1934}, instead of \emph{Nash equilibrium} \cite{Nash51}. This problem was later referred to as ``information design'' in \cite{BergemannMorris16}, \cite{Taneva16}, \cite{BergemannMorris17} and extended to the setting with ``heterogeneous beliefs'' in \cite{AlonsoCamara(JET)2016} and \cite{LaclauRenou17}. In most of the articles in the Economics literature, the transmission between the sender and the receiver is noise-free; except in \cite{TsakasTsakas2017}, \cite{Blume}, \cite{HernandezVonStengel14} where the noisy transmission is investigated in a finite block-length with no-error regime. 
%Noise-free transmission does not require the use of block coding techniques for mitigating the noise effects, also via compression. 
Interestingly, Shannon's mutual information is widely accepted as a \emph{cost of information} for the problem of ``rational inattention'' in \cite{Sims03} and for the problem of ``costly persuasion'' in \cite{GentzkowKamenica14}, without explicit reference to a coding problem.

Entropy and mutual information appear endogenously in repeated games with finite automata and bounded recall \cite{NeymanOkada99}, \cite{NeymanOkada00}, \cite{NeymanOkada09}, with private observation \cite{GossnerVieille02}, or with imperfect monitoring \cite{GossnerTomala06}, \cite{GossnerTomala07}, \cite{GossnerLarakiTomala09}. In \cite{GossnerHernandezNeyman06}, the authors investigate a sender-receiver game with common interests by formulating a coding problem. They characterize the optimal solution via the mutual information. This result was later refined by Cuff in  \cite{Cuff(ImplicitCoordination)11} and referred to as the  ``coordination problem'' in \cite{CuffPermuterCover10}, \cite{CuffSchieler11}, \cite{LeTreust(EmpiricalCoordination)17}, \cite{CerviaLuzziLeTreustBloch(IT)18}. 

%}

 The paper is organized as follows. The strategic communication problem is formulated in Sec. \ref{sec:StrategicCoding}. The encoding and decoding strategies and the distortion functions are defined in Sec. \ref{sec:StrategyUtility}. The strategic communication scenario is introduced in Sec. \ref{sec:PersuasionGame}. Our coding result and the four different characterizations are stated in Sec. \ref{sec:Characterization}. The first one is a linear program under an information constraint, formulated in Sec. \ref{sec:LinearProgram}. The main Theorem is stated in Sec.\ref{sec:MainResult}, and the sketch of proof is in Sec. \ref{sec:SketchProof}. In Sec. \ref{sec:Concavification}, we reformulate the solution in terms of three different convex closures. Sec. \ref{sec:ExampleBinary} provides an example based on a binary source, binary side information and binary decoder's actions. The proofs are stated in App \ref{sec:ProofConcavification} - \ref{sec:ProofPropDSBS}.

\section{Strategic communication problem}\label{sec:StrategicCoding}

\subsection{Coding strategies and distortion functions}\label{sec:StrategyUtility}

We denote by $\mc{U}$, $\mc{Z}$, $\mc{X}$, $\mc{Y}$, $\mc{V}$ the finite sets of information source, decoder's side information, channel inputs, channel outputs and decoder's outputs. Uppercase letters $U^n=(U_1,\ldots,U_n)\in\mc{U}^n$ and $Z^n$, $X^n$, $Y^n$, $V^n$ stand for sequences of random variables, whereas lowercase letters $u^n=(u_1,\ldots,u_n)\in\mc{U}^n$ and $z^n$, $x^n$, $y^n$, $v^n$ stand for sequences of realizations. We denote by $\Delta(\mc{X})$ the set of  probability distributions over $\mc{X}$, i.e. the probability simplex. For a probability distribution $\QQ_{X}\in  \Delta(\mc{X})$, we write $\QQ(x)$ instead of $\QQ_X(x)$ for the probability value assigned to realization $x\in \mc{X}$. The notation $\QQ_{X}(\cdot|y)\in \Delta(\mc{X})$ denotes the conditional probability distribution of $X\in\mc{X}$ given the realization $y\in \mc{Y}$ and $\QQ_{X}^{\otimes n}\in \Delta(\mc{X}^n)$ denotes the i.i.d. probability distribution. The distance between two probability distributions $\QQ_X$ and $\PP_X$ is based on $L^1$ norm, denoted by $||\QQ_X - \PP_X||_{1}= \sum_{x\in\mc{X}} |\QQ(x) - \PP(x)|$. We denote by $D(\QQ_X||\PP_X)$ the K-L (Kullback-Leibler) divergence. The notation $U  -\!\!\!\!\minuso\!\!\!\!-X    -\!\!\!\!\minuso\!\!\!\!-  Y$ stands for the Markov chain property corresponding to $\PP_{Y|XU} = \PP_{Y|X}$. We consider an i.i.d. information source and a memoryless channel distributed according to $\PP_{UZ}\in \Delta(\mc{U}\times \mc{Z})$ and $\mc{T}_{Y|X} : \mc{X} \to \Delta(\mc{Y})$, as depicted in Fig. \ref{fig:StrategicEmpiricalCoordination}.

\begin{definition}[Encoding and decoding strategies]\label{def:Code}$\;$\\
The encoding strategy $\sigma$ and the decoding strategy $\tau$ are defined by
\begin{align}
\sigma& : \mc{U}^{n} \longrightarrow \Delta(\mc{X}^n)  ,\label{eq:EncodingFunction}\\
\tau& : \mc{Y}^n \times   \mc{Z}^n  \longrightarrow  \Delta( \mc{V}^n)  . \label{eq:DecodingFunction}
\end{align}
Both strategies $(\sigma,\tau)$ are stochastic and induce a joint probability distribution $\PP_{\sigma,\tau} \in\Delta(\mc{U}^{n} \times\mc{Z}^{n} \times\mc{X}^{n}\times\mc{Y}^{n}  \times\mc{V}^{n} )$ over the $n$-sequences of symbols, defined by
\begin{align}
 \PP_{\sigma,\tau}=&\bigg(\prod_{t=1}^n\PP_{U_tZ_t} \bigg)\sigma_{X^n|U^n}
   \bigg(\prod_{t=1}^n \mc{T}_{Y_t|X_t} \bigg) \tau_{V^n|Y^nZ^n},
\end{align}
where $\sigma_{X^n|U^n}$ and $\tau_{V^n|Y^nZ^n}$ denote to the  conditional probability distributions induced by the strategies $\sigma$ and $\tau$.
\end{definition}
%{\red ne devrait-on pas écrire $\PP^{\sigma,\tau}$ au lieu de $\PP_{\sigma,\tau}$?}
The encoding and decoding strategies $(\sigma,\tau)$ are defined in the same way as for the {joint source-channel coding problem with side information at the decoder} studied in \cite{MerhavShamai03}, based on Wyner-Ziv's setting in \cite{wyner-it-1976}. Unlike these previous works, we assume that the encoder and the decoder minimize distincts distortion functions. 

\begin{definition}[Distortion functions]\label{def:Utilities} 
The single-letter distortion functions of the encoder and decoder are defined by
\begin{align}
d_{\textsf{e}} : \mc{U} \times \mc{Z} \times \mc{V} \longrightarrow \R,\\
d_{\textsf{d}} : \mc{U} \times \mc{Z} \times \mc{V} \longrightarrow \R.
\end{align}
The long-run distortion functions $d_{\textsf{e}}^{\,n}(\sigma,\tau)$ and $d_{\textsf{d}}^{\,n}(\sigma,\tau)$ are evaluated with respect to the probability distribution $\PP_{\sigma,\tau}$ induced by the strategies $(\sigma,\tau)$
\begin{align}
d_{\textsf{e}}^{\,n}(\sigma,\tau) =& \E_{\sigma,\tau} \Bigg[ \frac{1}{n} \sum_{t=1}^n d_{\textsf{e}}(U_t,Z_t,V_t) \Bigg] \nonumber\\
=& \sum_{u^n,z^n,v^n}\PP_{\sigma,\tau}\big(u^n,z^n,v^n \big) \cdot  \Bigg[  \frac{1}{n} \sum_{t=1}^n d_{\textsf{e}}(u_t,z_t,v_t) \Bigg],\\
d_{\textsf{d}}^{\,n}(\sigma,\tau) =& \sum_{u^n,z^n,v^n}\PP_{\sigma,\tau}\big(u^n,z^n,v^n \big) \cdot  \Bigg[  \frac{1}{n} \sum_{t=1}^n d_{\textsf{d}}(u_t,z_t,v_t) \Bigg].
\end{align}
\end{definition}

\subsection{Strategic communication scenario}\label{sec:PersuasionGame}

In this work, the encoder and the decoder are autonomous devices that choose the encoding strategy $\sigma$ and the decoding strategy $\tau$ in order to minimize their long-run distortion $d_{\textsf{e}}^{\,n}(\sigma,\tau)$ and $d_{\textsf{d}}^{\,n}(\sigma,\tau)$. We assume that the strategic communication takes place as follows:
\begin{itemize}
\item[$\bullet$] Before the transmission starts, the encoder chooses the strategy $\sigma$ and announces it to the decoder. 
\item[$\bullet$] The sequences $(U^n, Z^n,X^n,Y^n)$ are drawn according to the joint probability distribution $\Big(\prod_{t=1}^n\PP_{U_tZ_t} \Big) \sigma_{X^n|U^n} \Big(\prod_{t=1}^n\mc{T}_{Y_t|X_t}\Big)$.
\item[$\bullet$] The decoder knows $\sigma$, observes the sequences of symbols $(Y^n,Z^n)$, and is free to choose any decoding strategy $\tau$, in order to return a sequence of symbols $V^n$.
\end{itemize}

This setting corresponds to the Bayesian persuasion game \cite{KamenicaGentzkow11}, in which the encoder commits to an \emph{information disclosure policy} $\sigma$, and the decoder chooses a decoding strategy $\tau$ accordingly.

%Since the encoding strategy $\sigma$ is fixed before the transmission starts, it cannot be modified afterwards. 
%By knowing $\sigma$ in advance, the decoder $\DD$ can compute the set of best-reply decoding strategies. 
\begin{definition}[Decoder's Best-Replies]\label{def:BestReplyStrat} 
For any encoding strategy $\sigma$, the set of best-reply decoding strategies $ \textsf{BR}_{\textsf{d}}(\sigma)$ is defined by
\begin{align}
 \textsf{BR}_{\textsf{d}}(\sigma) =&\argmin_{\tau} d_{\textsf{d}}^{\,n}(\sigma, \tau) = \bigg\{\tau ,\text{ s.t. } \;  d_{\textsf{d}}^{\,n}(\sigma, \tau) \leq d_{\textsf{d}}^{\,n}(\sigma, \widetilde{\tau})  , \; \forall \widetilde{\tau} \neq \tau \bigg\}.
\end{align}
\end{definition}
In case there are several best-reply strategies, we assume that the decoder chooses the one that maximizes the encoder's distortion  $\max_{\tau \in \textsf{BR}_{\textsf{d}}(\sigma)} d_{\textsf{e}}^{\,n}(\sigma, \tau)$, so that the solution is robust to the exact specification of decoder's strategy. 

We aim at characterizing the asymptotic behavior of
%characterizing the asymptotic behavior of the optimal long-run distortion of the encoder
\begin{align}
\inf_{\sigma}\max_{\tau \in \textsf{BR}_{\textsf{d}}(\sigma)} d_{\textsf{e}}^{\,n}(\sigma, \tau). \label{eq:GameProblem}
\end{align}

The decoding process $\tau$ is not supervised, it is strategic, causing the mismatch of the sequence of decoder's outputs. The design of the encoding strategy $\sigma$ anticipates this mismatch. Does the ``strategic decoder'' necessarily decode the coded bits of information? We provide a positive answer in Theorem \ref{theo:MaxMinStackelberg}, by refining the analysis of the Wyner-Ziv's encoding scheme  \cite{wyner-it-1976}. More precisely, we show that the symbols induced by Wyner-Ziv's decoding $\tau^{\textsf{wz}}$ coincide with those induced by any best-reply $\tau \in\textsf{BR}_{\textsf{d}}(\sigma^{\textsf{wz}})$ to Wyner-Ziv's encoding $\sigma^{\textsf{wz}}$, for a large fraction of stages. 

%In Sec. \ref{}, we characterize a single-letter solution to \eqref{eq:GameProblem}, by refining the analysis of the Wyner-Ziv's coding scheme  \cite{wyner-it-1976}. 

%Is Wyner-Ziv's decoding $\tau^{\textsf{wz}}\in\textsf{BR}_{\textsf{d}}(\sigma^{\textsf{wz}})$ a best-reply to Wyner-Ziv's encoding $\sigma^{\textsf{wz}}$? 

\begin{remark}[Stackelberg v.s. Nash equilibrium]
The optimization problem in \eqref{eq:GameProblem} corresponds to a Stackelberg equilibrium \cite{stackelberg-book-1934} in which the encoder is the leader and the decoder is the follower, unlike the Nash equilibrium  \cite{Nash51}  in which the two devices choose their strategy simultaneously. 
\end{remark}

\begin{remark}[Equal distortion functions]
When the encoder and decoder have equal distortion functions $d_{\textsf{e}} =d_{\textsf{d}}$, the problem in \eqref{eq:GameProblem} boils down to the problem studied by Merhav-Shamai in \cite{MerhavShamai03}, in which both strategies $(\sigma,\tau)$ are chosen \emph{jointly}, in order to minimize a distortion function
\begin{align}
\inf_{\sigma}\max_{\tau \in \textsf{BR}_{\textsf{d}}(\sigma)} d_{\textsf{e}}^{\,n}(\sigma, \tau) = 
\inf_{\sigma}\min_{\tau} d_{\textsf{e}}^{\,n}(\sigma, \tau) = 
\min_{(\sigma,\tau)} d_{\textsf{e}}^{\,n}(\sigma, \tau), \label{eq:GameProblemEqual}
\end{align}
since by Definition \ref{def:BestReplyStrat}, $\tau \in \textsf{BR}_{\textsf{d}}(\sigma) \Longleftrightarrow d_{\textsf{d}}^{\,n}(\sigma, \tau) = \min_{\tau'} d_{\textsf{d}}^{\,n}(\sigma, \tau') $. For such scenario, Merhav-Shamai's separation result in \cite{MerhavShamai03}, shows that it is optimal to concatenate Wyner-Ziv's source coding with Shannon's channel coding.
\end{remark}

%%%%%%%%%%%%%%%%%%%%%%%%%%%%%%%%%%%%%%%%%%%%%%%%%%%%%%%%%%%%%%%%%%%%%%%%%%%%%%%%%%%%%%%%%%%%%%%%%%%%%%%%%%%%%%%%

\section{Characterizations}\label{sec:Characterization}

\subsection{Linear program with an information constraint}\label{sec:LinearProgram}

We define the encoder's optimal distortion  $D_{\textsf{e}}^{\star}$. 
\begin{definition}[Target distributions]\label{def:Characterization} 
We consider an auxiliary random variable $W\in \mc{W}$ with $|\mc{W}| = \min\big(|\mc{U}|+1, |\mc{V}|^{|\mc{Z}|}\big)$. The set $\Q_0$ of target probability distributions is defined by
\begin{align}
\Q_0 =& \bigg\{  \PP_{UZ}  \QQ_{W|U} ,  \quad \text{s.t.}, 
%\nonumber\\&
\quad\;\;\max_{\PP_X} I( X; Y )  -   I( U ;W |Z )   \geq 0  \bigg\}.\label{eq:SetQ0}
\end{align} 
We define the set $\Q_2\big(\QQ_{UZW}\big)$ of single-letter best-replies of the decoder
\begin{align}
\Q_2\big(\QQ_{UZW}\big) =& \argmin_{\QQ_{V|WZ}}\E_{\QQ_{UZW}\atop  \QQ_{V|WZ}  } \bigg[ d_{\textsf{d}}(U,Z,V) \bigg].\label{eq:SetQ2}
\end{align} 
The encoder's optimal distortion $D_{\textsf{e}}^{\star} $ is given by 
\begin{align}
D_{\textsf{e}}^{\star} =&  \inf_{\QQ_{UZW} \in \Q_0} \max_{\QQ_{V|WZ}  \in  \atop \Q_2(\QQ_{UZW})} \E_{\QQ_{UZW} \atop  \QQ_{V|WZ} } \bigg[d_{\textsf{e}}(U,Z,V)\bigg].\label{eq:PhiOptimalZ}
\end{align}
\end{definition}
We discuss the above definitions.
\begin{itemize}
\item[$\bullet$] The information constraint of the set $\Q_0$ involves the channel capacity $\max_{\PP_X} I( X; Y )$ and the Wyner-Ziv's information rate $ I( U ;W |Z ) = I( U ;W ) - I(W;Z )$, stated in \cite{wyner-it-1976}. It corresponds to the separation result by Shannon \cite{shannon-bell-1948}, extended to the Wyner-Ziv setting by Merhav-Shamai in \cite{MerhavShamai03}. 
\item[$\bullet$] For the clarity of the presentation, the set $\Q_2\big(\QQ_{UZW}\big)$ contains stochastic functions $\QQ_{V|WZ}$, even if for the linear problem \eqref{eq:SetQ2} some optimal $\QQ_{V|WZ}$ are deterministic. If there are several optimal $\QQ_{V|WZ}$, we assume the decoder chooses the one that maximize encoder's distortion: $\max_{\QQ_{V|WZ} \in  \atop \Q_2(\QQ_{UZW})} \E \big[d_{\textsf{e}}(U,Z,V)\big]$.
\item[$\bullet$] The infimum over $\QQ_{UZW} \in \Q_0$ is not a minimum since the  function $\max_{\QQ_{V|WZ} \in  \atop \Q_2(\QQ_{UZW})} \E \big[d_{\textsf{e}}(U,Z,V)\big]$ is not continuous with respect to $\QQ_{UZW}$, see Fig. \ref{fig:EncoderUtility_0.5} and Fig. \ref{fig:EncoderUtility}.  
\item[$\bullet$] In \cite[Theorem IV.2]{LeTreust(ISIT-TwoSided)15}, the sets $\Q_0$ and $\Q_2$  correspond to the target probability distributions $\QQ_{UZW} \QQ_{V|WZ}$ that are achievable for the problem of \emph{empirical coordination}, see also \cite{CuffPermuterCover10}, \cite{LeTreust(EmpiricalCoordination)17}. As noticed in \cite{LeTreustBloch(ISIT)16} and \cite{LeTreustBloch(StateLeakageCoordination)18}, the empirical coordination approach allows us to characterize the ``core of the decoder's knowledge'', which captures what the decoder is able to infer about the random variables involved in the problem.
\item[$\bullet$] The value $D_{\textsf{e}}^{\star} $ corresponds to the Stackelberg equilibrium payoff of an auxiliary one-shot  game in which the decoder chooses $\QQ_{V|WZ}$, knowing in advance that the encoder has chosen $ \QQ_{W|U}\in \Q_0$ and the distortion functions are $\E\big[d_{\textsf{e}}(U,Z,V)\big]$ and $\E\big[d_{\textsf{d}}(U,Z,V)\big]$.
\end{itemize}

\begin{remark}[Equal distortion functions]
When the encoder and the decoder have equal distortion functions $d_{\textsf{d}} = d_{\textsf{e}}$, the set $\Q_2\big(\QQ_{UZW}\big)$ is equal to $ \argmin_{\QQ_{V|WZ}}\E\big[ d_{\textsf{e}}(U,Z,V) \big]$. Thus, we have
\begin{align}
 \max_{\QQ_{V|WZ} \in  \atop \Q_2(\QQ_{UZW})} \E_{\QQ_{UZW} \atop  \QQ_{V|WZ}} \bigg[d_{\textsf{e}}(U,Z,V)\bigg] 
 =&  \min_{\QQ_{V|WZ}} \E_{\QQ_{UZW} \atop  \QQ_{V|WZ}} \bigg[d_{\textsf{e}}(U,Z,V)\bigg].
\end{align} 
Hence, the encoder's optimal distortion $D_{\textsf{e}}^{\star} $ is equal to:
\begin{align}
D_{\textsf{e}}^{\star} =&  \inf_{\QQ_{UZW} \in \Q_0} \max_{\QQ_{V|WZ} \in  \atop \Q_2(\QQ_{UZW} )} \E_{\QQ_{UZW}\atop  \QQ_{V|WZ}} \bigg[d_{\textsf{e}}(U,Z,V)\bigg]\\
=&  \inf_{\QQ_{UZW} \in \Q_0} \min_{\QQ_{V|WZ}} \E_{\QQ_{UZW} \atop  \QQ_{V|WZ}} \bigg[d_{\textsf{e}}(U,Z,V)\bigg]\label{eq:PhiOptimalZr0}\\
=&  \min_{\QQ_{UZW} \in \Q_0,\atop \QQ_{V|WZ}}\E_{\QQ_{UZW} \atop  \QQ_{V|WZ}} \bigg[d_{\textsf{e}}(U,Z,V)\bigg].\label{eq:PhiOptimalZr}
\end{align}
The infimum in \eqref{eq:PhiOptimalZr0} is replaced by a minimum in \eqref{eq:PhiOptimalZr} due to the compactness of $\Q_0$ and the continuity of  $\min_{\QQ_{V|WZ}} \E\Big[d_{\textsf{e}}(U,Z,V)\Big]$ with respect to $\QQ_{UZW}$. We recover the \emph{distortion-rate} function corresponding to \cite[Theorem 1]{MerhavShamai03}.
\end{remark}

%Theorem \ref{theo:MaxMinStackelberg} generalizes \cite[Theorem 1]{MerhavShamai03} to the case of non-aligned

%%%%%%%%%%%%%%%%%%%%%%%%%%%%%%%%%%%%%%%%%%%%%%%%%%%%%%%%%%%%%%%%%%%%%%%%%%%%%%%%%%%%%%%%%%%%%%%%%%%%%%%%%%%%%%%%

\subsection{Main result}\label{sec:MainResult}

We denote $\N^{\star}=\N\setminus\{0\}$ and we characterize the limit of \eqref{eq:GameProblem}.% by using $D_{\textsf{e}}^{\star} $.% of \eqref{eq:PhiOptimalZ}.

 \begin{theorem}[Main result]\label{theo:MaxMinStackelberg}
%The long-run optimal distortion of the encoder satisfies:
The encoder's long-run distortion satisfies:
\begin{align}
&\forall \varepsilon>0,   \; \exists \bar{n}\in \N^{\star}, \;\forall n\geq \bar{n}, \qquad \inf_{\sigma} \max_{\tau \in \textsf{BR}_{\textsf{d}}(\sigma)} d_{\textsf{e}}^{\,n}(\sigma, \tau)  \leq  D_{\textsf{e}}^{\star}  +  \varepsilon,\label{eq:Achievability}\\
&\forall n \in \N^{\star}, \qquad \inf_{\sigma}\max_{\tau \in \textsf{BR}_{\textsf{d}}(\sigma)} d_{\textsf{e}}^{\,n}(\sigma, \tau)  \geq  D_{\textsf{e}}^{\star}.\label{eq:Converse}
\end{align}
\end{theorem}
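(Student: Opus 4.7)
\textbf{Plan for the proof of Theorem \ref{theo:MaxMinStackelberg}.}

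For the \emph{achievability} inequality \eqref{eq:Achievability}, I would fix a target distribution $\PP_{UZ}\QQ_{W|U}\in\Q_0$ that is $\varepsilon$-close to the infimum in \eqref{eq:PhiOptimalZ}. The encoder implements a Wyner--Ziv--Merhav--Shamai joint source--channel scheme: random binning of codewords $w^n(m,b)$ with $m\in[1:2^{nI(U;W)}]$ and $b\in[1:2^{nI(W;Z)}]$, then channel coding of the bin index $b$ at a rate below capacity $\max_{\PP_X}I(X;Y)$, which is feasible precisely because $\QQ_{W|U}\in\Q_0$. Since the encoder commits and announces $\sigma^{\textsf{wz}}$, the decoder knows $\QQ_{W|U}$. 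The first task is to show that for any best reply $\tau\in\textsf{BR}_{\textsf{d}}(\sigma^{\textsf{wz}})$, the induced output sequence must, for a $(1-\varepsilon)$-fraction of stages, play the symbol-wise best reply $V_t\in\argmin_v \E\big[d_{\textsf{d}}(U_t,Z_t,v)\mid W_t,Z_t\big]$, i.e.\ follow some $\QQ_{V|WZ}\in\Q_2(\PP_{UZ}\QQ_{W|U})$. This is exactly the step where one refines the Wyner--Ziv analysis: after channel decoding of the bin and joint typicality decoding of $W^n$ with $Z^n$, the posterior on $U_t$ given everything the decoder knows collapses to $\QQ(u_t\mid W_t,Z_t)$, so any long-run $\tau$ that is not symbol-wise optimal can be strictly improved. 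The encoder's distortion then converges to $\E_{\QQ_{UZW}\QQ_{V|WZ}}[d_{\textsf{e}}(U,Z,V)]\leq D_{\textsf{e}}^{\star}+\varepsilon$, where the worst-case $\QQ_{V|WZ}\in\Q_2$ is charged to the encoder as required.

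For the \emph{converse} inequality \eqref{eq:Converse}, fix any pair $(\sigma,\tau)$ with $\tau\in\textsf{BR}_{\textsf{d}}(\sigma)$. Introduce, in the standard Wyner--Ziv fashion, the auxiliary random variable $W_t=(Y^n,Z^{t-1},Z_{t+1}^n)$ and the time-sharing variable $T\sim\mathrm{Unif}\{1,\dots,n\}$, then set $U=U_T$, $Z=Z_T$, $W=(W_T,T)$, $V=V_T$. A Csisz\'ar sum identity manipulation yields $I(U;W\mid Z)\leq\frac{1}{n}I(X^n;Y^n)\leq\max_{\PP_X}I(X;Y)$, showing that the induced $\QQ_{UZW}$ lies in $\Q_0$; the cardinality bound on $\mc W$ is recovered by the Carath\'eodory--Fenchel--Bunt argument applied to the distortion and mutual-information functionals. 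The delicate point is the best-reply condition: I need to argue that the induced symbol-wise decoder kernel $\QQ_{V|WZ}$ essentially lies in $\Q_2(\QQ_{UZW})$, or else $\tau$ would not be a best reply. This I would obtain by noting that $\tau\in\textsf{BR}_{\textsf{d}}(\sigma)$ gives $\frac{1}{n}\sum_t\E[d_{\textsf{d}}(U_t,Z_t,V_t)]\leq\frac{1}{n}\sum_t\E[d_{\textsf{d}}(U_t,Z_t,\tilde v_t(W_t,Z_t))]$ for every symbol-wise deviation $\tilde v_t$, which single-letterizes exactly to $\QQ_{V|WZ}\in\Q_2(\QQ_{UZW})$; combined with the adversarial tie-breaking convention built into both \eqref{eq:PhiOptimalZ} and $\textsf{BR}_{\textsf{d}}$, the encoder's expected $d_{\textsf{e}}$ is bounded below by $\max_{\QQ_{V|WZ}\in\Q_2(\QQ_{UZW})}\E[d_{\textsf{e}}(U,Z,V)]\geq D_{\textsf{e}}^{\star}$.

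The main obstacle is the coupling between the game-theoretic best-reply condition and the information-theoretic constraint. In classical Wyner--Ziv the decoder is assumed to output the typical reconstruction, whereas here its freedom to choose \emph{any} $\tau$ is only tamed by the symbol-wise best-reply property. The non-trivial content is to show (i) in the achievability direction, that no decoder strategy can simultaneously respect best reply and deviate on a positive fraction of stages from $\Q_2(\QQ_{UZW})$, and (ii) in the converse direction, that the single-letterization of $\tau$ is automatically a best reply at the single-letter level. The tie-breaking in favor of the worst element of $\Q_2(\QQ_{UZW})$ in \eqref{eq:PhiOptimalZ} is essential to make both inequalities compatible and to justify why the infimum over $\Q_0$ is genuinely an infimum and not a minimum.
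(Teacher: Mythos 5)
Your overall route is the same as the paper's: for achievability, a Wyner--Ziv/Merhav--Shamai scheme whose analysis is refined to show that the decoder's posterior on $U_t$ is driven to the target $\QQ_{U_t}(\cdot|w_t,z_t)$ on most stages (the paper's Proposition \ref{prop:WynerZivCoding}), and for the converse, the auxiliary variable $W=(Y^n,Z^{-T},T)$ with the identification $(U,Z)=(U_T,Z_T)$, the exact single-letter rewriting of both distortions, the equivalence $\tau\in\textsf{BR}_{\textsf{d}}(\sigma)\Leftrightarrow\tilde{\tau}_{V|WZ}\in\Q_2(\PP_{UZW})$, and the standard information-constraint chain showing $\PP_{UZW}\in\Q_0$. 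Your converse paragraph is essentially the paper's Appendix \ref{sec:ConverseProof} and is sound.

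There is, however, a genuine gap in the achievability step where you pass from ``the posterior collapses to $\QQ(u_t|W_t,Z_t)$'' to ``any best reply plays some $\QQ_{V|WZ}\in\Q_2(\PP_{UZ}\QQ_{W|U})$ on a $(1-\varepsilon)$-fraction of stages.'' The collapse is only approximate (an average Kullback--Leibler bound such as \eqref{eq:ControlFinal}), and a decoder best-responding to its \emph{actual} posterior may, at pairs $(z,w)$ where the target posterior is close to an indifference point of $d_{\textsf{d}}$, select a symbol that is \emph{not} in $\argmin_v\E_{\QQ_U(\cdot|z,w)}[d_{\textsf{d}}(U,z,v)]$; such a symbol is not charged by the inner maximum over $\Q_2(\QQ_{UZW})$ in \eqref{eq:PhiOptimalZ} and can be arbitrarily bad for $d_{\textsf{e}}$. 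The paper closes this by working with the restricted value $\widehat{D_{\textsf{e}}}$ of \eqref{eq:SplittingFormulationHat}, in which the information constraint is strict and every best-reply set $\mc{V}^{\star}\big(z,\QQ_U(\cdot|z,w)\big)$ of Definition \ref{def:BestReply} is a singleton (hence has a positive optimality margin, making the best reply robust to the belief perturbation), and by proving in Lemma \ref{lemma:RestrictedSplitting} that $\widehat{D_{\textsf{e}}}=D_{\textsf{e}}^{\star}$ whenever $\max_{\PP_X}I(X;Y)>0$; the residual case $\max_{\PP_X}I(X;Y)=0$, where no strictly-below-capacity rates exist and your scheme degenerates, is treated separately in Lemma \ref{lemma:ZeroCapacity}. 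Your sketch needs either this restriction-plus-approximation lemma or an explicit margin argument (e.g.\ taking the belief-accuracy parameter below the smallest nonzero optimality gap of the fixed target distribution), together with a quantitative statement in the spirit of Proposition \ref{prop:UtilityBound} converting the belief control into a distortion bound; as written, the assertion that a non-symbol-wise-optimal $\tau$ ``can be strictly improved'' does not by itself rule out best replies that are symbol-wise optimal for the perturbed beliefs but not for the target ones.
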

When removing the decoder's side information $\mc{Z} = \emptyset$  and changing the infimum to a supremum, we recover the previous result of \cite[Theorem 3.1]{LeTreustTomala19}. The sequence defined by \eqref{eq:GameProblem} is sub-additive. Indeed, when $\sigma$ is the concatenation of several encoding strategies, the concatenation of the corresponding optimal decoding strategies belongs to $ \textsf{BR}_{\textsf{d}}(\sigma)$. Theorem \ref{theo:MaxMinStackelberg} and Fekete's lemma in \cite{Fekete1923}, show that the long-run encoder's distortion converges to its infimum.
\begin{align}
D_{\textsf{e}}^{\star} =& \lim_{n \to +\infty} \inf_{\sigma}\max_{\tau \in \textsf{BR}_{\textsf{d}}(\sigma)} d_{\textsf{e}}^{\,n}(\sigma, \tau)=\inf_{n  \in \N^{\star}} \;\inf_{\sigma}\max_{\tau \in \textsf{BR}_{\textsf{d}}(\sigma)} d_{\textsf{e}}^{\,n}(\sigma, \tau) .\label{eq:limit}
\end{align}

%
% characterizes the limit behaviour of long-run optimal distortion of the encoder
% 
% 
%Fekete's lemma in \cite{Fekete1923} for a sub-additive sequence
%
%A sequence { a n } , n ? 1 {\displaystyle \left\{a_{n}\right\},n\geq 1} \left\{a_{n}\right\},n\geq 1, is called subadditive 
%
%
%XXX Nizar Touzi au séminaire FiME FDD à l'IHP vendredi 18 octobre
%\begin{align}
%\inf_{n\geq1} \;\inf_{\sigma}\max_{\tau \in \textsf{BR}_{\textsf{d}}(\sigma)} d_{\textsf{e}}^{\,n}(\sigma, \tau)= D_{\textsf{e}}^{\star} \\
%\lim_{n\to+\infty} \inf_{\sigma}\max_{\tau \in \textsf{BR}_{\textsf{d}}(\sigma)} d_{\textsf{e}}^{\,n}(\sigma, \tau)= D_{\textsf{e}}^{\star} 
%\end{align}
%Fekete's lemma  pour une suite sous-additive
%https://en.wikipedia.org/wiki/Subadditivity
%
%\cite{Fekete1923}
%Fekete, M. (1923). "Über die Verteilung der Wurzeln bei gewissen algebraischen Gleichungen mit ganzzahligen Koeffizienten". Mathematische Zeitschrift. 17 (1): 228?249. doi:10.1007/BF01504345.
%

%%%%%%%%%%%%%%%%

\subsection{Sketch of proof of Theorem \ref{theo:MaxMinStackelberg}}\label{sec:SketchProof}

We provide some intuitions for the main arguments of the proofs, which are given in App. \ref{sec:AchievabilityProof} and \ref{sec:ConverseProof}.

\textit{Proof of the achievability result \eqref{eq:Achievability}.} We analyse the Bayesian posterior beliefs induced by the concatenation of  Wyner-Ziv's source encoding \cite{wyner-it-1976} and Shannon's channel encoding \cite{shannon-bell-1948}.  We assume that the channel capacity is strictly positive and we consider a probability distribution $\QQ_{UZW}$ such that 1) the information constraints are satisfied with strict inequalities, 2) the set of best-reply symbol $\mc{V}^{\star}\big(z,\QQ_U(\cdot|z,w)\big)$ of Definition \ref{def:BestReply}, is a singleton for all $(z,w)\in \mc{Z}\times\mc{W}$. We introduce Wyner-Ziv's coding rates $\textsf{R}$, $\textsf{R}_{\textsf{L}}$ for the messages $(M,L)$ and we denote by $\eta>0$ the parameter such that 
\begin{eqnarray}
\textsf{R}  + \textsf{R}_{\textsf{L}}& =&       I( U;W )  + \eta  \label{eq:AchievabilityBB1} , \\
\textsf{R}_{\textsf{L}}  &\leq &       I( Z;W )  - \eta  \label{eq:AchievabilityBB2} , \\
\textsf{R} \; & \leq&   \max_{\PP_X} I( X; Y )  -  \eta  \label{eq:AchievabilityBB3}  .
\end{eqnarray}
We introduce the binary random variable $E_{\delta} \in\{0,1\}$ for which $E_{\delta} = 0$  when the messages $(M,L)$ are recovered by the decoder and the sequences $\big(U^n ,  Z^n, W^n, X^n,Y^n \big) $  are jointly typical with tolerance $\delta>0$. 

The major step is to show that the posterior beliefs $ \PP_{\sigma,U_t}(\cdot|y^n,z^n,E_{\delta}=0)$ induced by the encoding strategy $\sigma$ regarding $U_t$ at stage $t\in\{1,\ldots,n\}$, are close on average to the target conditional probability distribution $\QQ_{U_t}(\cdot|w_t,z_t)$:
\begin{align}
&  \E_{\sigma} \Bigg[ \frac{1}{n}  \sum_{t=1}^n D\bigg(  \PP_{\sigma,U_t}(\cdot|Y^n,Z^n,E_{\delta}=0) \bigg| \bigg|   \QQ_{U_t}(\cdot|W_t,Z_t) \bigg)\Bigg] \nonumber \\
\leq&2\delta + \eta + \frac{2}{n}   + 2 \log_2 |\mc{U}| \cdot \PP_{\sigma}\big(E_{\delta}=1 \big) := \epsilon. \label{eq:ControlFinalCF} 
\end{align}
This is the purpose of the proof of Proposition \ref{prop:WynerZivCoding}, stated in Appendix \ref{sec:ProofPropositionCode}.\\

\textit{Proof of the converse result \eqref{eq:Converse}.} For any encoding strategy $\sigma$ of length $n\in\N^{\star}$, we introduce an auxiliary random variable $W=(Y^n,Z^{-T},T)$, where $T$ is the uniform random variable over $\{1,\ldots,n\}$ and  $Z^{-T}$ stands for $(Z_1,\ldots,Z_{T-1},Z_{T+1},\ldots Z_n)$, where $Z_T$ has been removed. We identify $(U,Z)=(U_T,Z_T)$ and we show that the Markov chain $W -\!\!\!\!\minuso\!\!\!\!- U -\!\!\!\!\minuso\!\!\!\!- Z$ is satisfied and that the probability distribution $\PP_{UZW}$ satisfy
\begin{align}
\PP(u,z,w) =&  \frac{1}{n} \cdot  \PP_{\sigma}\big(u_t, z_t,y^n,z^{-t}\big) ,\quad \forall (u,w,z,u^n,z^n,y^n). \label{eq:distributionWW}
\end{align}  
We define $\tilde{\tau}_{V|WZ} = \tau_{V_T|Y^nZ^n}$ and we prove that for both encoder and decoder, the long-run distortion writes
\begin{align}
d^{\,n}_{\textsf{e}}(\sigma,\tau)
=& \sum_{u,z,w} \PP(u,z,w )  \sum_{v}\tilde{\tau}(v| w,z ) \cdot     d_{\textsf{e}}(u,z,v),\label{eq:Reformulation55}
\end{align}
hence 
\begin{align}
 \tau \in \argmin_{\tau'_{V^n|Y^nZ^n}} \E_{\sigma,\tau'}\Bigg[ \frac{1}{n} \sum_{t=1}^n d_{\textsf{d}}(u_t,z_t,v_t)\Bigg]
\Longleftrightarrow&\tilde{\tau}_{V|WZ} \in \Q_2\big(\PP_{UZW}\big).\label{eq:IdentificationN}
\end{align}  
Well known arguments from \cite{wyner-it-1976} and \cite{MerhavShamai03} show that 
\begin{align}
0 \leq&\max_{\PP_X} I(X ; Y)  -  I(U ; W|Z). \label{eq:ConverseW99} 
\end{align}
Therefore, for any encoding strategy $\sigma$ and all $n$, we have
\begin{align}
\max_{\tau \in \textsf{BR}_{\textsf{d}}(\sigma)}  d_{\textsf{e}}^{\,n}(\sigma,\tau) 
\geq&\inf_{ \QQ_{UZW} \in \Q_0} \max_{\QQ_{V|WZ} \in  \atop \Q_2(\QQ_{UZW})} \E_{\QQ_{UZW} \atop  \QQ_{V|WZ}} \bigg[d_{\textsf{e}}(U,Z,V)\bigg]= D_{\textsf{e}}^{\star}.
\end{align}

%%%%%%%%%%%%%%%%%%%%%%%%%%%%%%%%%%%%%%%%%%%%%%%%%%%%%%%%%%%%%%%%%%%%%%%%%%%%%%%%%%%%%%%%%%%%%%%%%%%%%%%%%%%%%%%%

\subsection{Convex closure formulation}\label{sec:Concavification}

The convex closure of a function $f$ is the largest convex function $\vex f : \mc{X} \rightarrow \R \cup \{- \infty\}$ everywhere smaller than $f$ on $X$. In this section, we reformulate the encoder's optimal distortion $D_{\textsf{e}}^{\star}$ in terms of a  convex closure, similarly to \cite[Corollary 1]{KamenicaGentzkow11} and \cite[Definition 2.4]{LeTreustTomala19}. This alternative approach may simplify the optimization problem in \eqref{eq:PhiOptimalZ}, by plugging the decoder's posterior beliefs and best-reply symbols into the encoder's distortion function. The goal of the strategic communication is to \emph{control the posterior beliefs} of the decoder, knowing it will choose a best-reply symbol afterwards.

Before the transmission, the decoder holds a prior belief corresponding to the source's statistics $\PP_U\in\Delta(\mc{U})$. After observing the pair of symbols $(w,z)\in\mc{W} \times \mc{Z}$, the decoder updates its posterior belief $\QQ_{U}(\cdot|z,w)\in \Delta(\mc{U})$ according to Bayes rule $\QQ(u|z,w) = \frac{\PP(u,z)\QQ(w|u)}{\sum_{u'}\PP(u',z)\QQ(w|u')}$, for all $(u,z,w)\in\mc{U} \times\mc{W} \times \mc{Z}$.

\begin{lemma}[Markov property on posterior belief]\label{lemma:MarkovPropertyPosteriorBelief}
The Markov chain property $Z  -\!\!\!\!\minuso\!\!\!\!-U    -\!\!\!\!\minuso\!\!\!\!-  W$ implies that the \emph{posterior beliefs} $\QQ_{U}(\cdot|z,w)\in \Delta(\mc{U})$ can be expressed from the \emph{interim beliefs} $\QQ_{U}(\cdot|w)\in \Delta(\mc{U})$ since
\begin{align}
\QQ(u|w,z)  =&  \frac{\QQ(u,z,w)  }{\QQ(z,w)  } = \frac{\QQ(u,z|w)  }{\sum_{u'}\QQ(u',z|w)  }= \frac{\QQ(u|w) \PP(z|u)  }{\sum_{u'}\QQ(u'|w) \PP(z|u') }, \qquad \forall (u,z,w)\in \mc{U}\times \mc{Z}\times \mc{W} .\label{eq:ReformulationPosteriors}
\end{align}
\end{lemma}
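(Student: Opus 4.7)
The statement is essentially a direct application of Bayes' rule combined with the Markov property $Z - U - W$, so my plan will be a short, mechanical derivation in three steps rather than anything subtle. I will simply chain the identities claimed in the lemma.

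First, I would start from the definition of a conditional probability to justify the leftmost equality: for any $(u,z,w)$ with $\QQ(z,w)>0$,
\begin{equation*}
\QQ(u|w,z) = \frac{\QQ(u,z,w)}{\QQ(z,w)}.
\end{equation*}
Next, I would divide both numerator and denominator by $\QQ(w)$ (assuming $\QQ(w)>0$, which is the relevant case since otherwise the conditional is undefined anyway), and use $\QQ(z,w)=\sum_{u'}\QQ(u',z,w)$ to obtain the middle expression $\QQ(u,z|w)/\sum_{u'}\QQ(u',z|w)$.

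Then comes the only place where the Markov assumption is used. I would write $\QQ(u,z|w) = \QQ(u|w)\,\QQ(z|u,w)$, and then invoke $Z - U - W$, which is equivalent to $\QQ(z|u,w) = \QQ(z|u) = \PP(z|u)$ (the last equality holds because the joint distribution of $(U,Z)$ is the fixed prior $\PP_{UZ}$, and the encoding only adds the stochastic kernel $\QQ_{W|U}$, which does not alter the conditional of $Z$ given $U$). Substituting in both numerator and denominator yields the rightmost expression, completing the chain.

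The only real point of care, rather than a genuine obstacle, is to make explicit why $\QQ(z|u,w)$ equals the prior $\PP(z|u)$: this follows because by construction $\QQ_{UZW}=\PP_{UZ}\QQ_{W|U}$, so $\QQ(u,z,w)=\PP(u,z)\QQ(w|u)$ and hence $\QQ(z|u,w)=\QQ(u,z,w)/\QQ(u,w)=\PP(u,z)\QQ(w|u)/(\PP(u)\QQ(w|u))=\PP(z|u)$. Once this is spelled out, the three equalities of the lemma follow immediately and the proof is complete.
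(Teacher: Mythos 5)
Your derivation is correct and follows the same route the paper takes: the lemma is just Bayes' rule plus the Markov chain $Z -\!\!\!\!\minuso\!\!\!\!- U -\!\!\!\!\minuso\!\!\!\!- W$ giving $\QQ(z|u,w)=\PP(z|u)$, which is exactly the chain of equalities the paper presents. Your extra remark that $\QQ_{UZW}=\PP_{UZ}\QQ_{W|U}$ forces $\QQ(z|u,w)=\PP(z|u)$ merely makes explicit a step the paper leaves implicit, so there is nothing to correct.
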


\begin{definition}[Best-reply]\label{def:BestReply}
For each symbol $z\in \mc{Z}$ and belief $p\in\Delta(\mc{U})$, the decoder chooses the best-reply $v^{\star}(z,p)$ that belongs to the set $\mc{V}^{\star}(z,p)$, defined by
\begin{align}
\mc{V}^{\star}(z,p) = & \argmax_{v \in \argmin \E_{p}\big[d_{\textsf{d}}(U,z,v)\big] }\E_{p}\bigg[d_{\textsf{e}}(U,z,v)\bigg].\label{eq:Vstar}
\end{align}
\end{definition}
If several symbols in $\mc{V}$ are best-reply to $z\in \mc{Z}$ and $p\in\Delta(\mc{U})$, the decoder chooses the worst one for encoder's distortion. This is a reformulation of the maximum in \eqref{eq:PhiOptimalZ}.

\begin{definition}[Robust distortion]\label{def:RobustUtility}
For a symbol $z\in \mc{Z}$ and a belief $p\in\Delta(\mc{U})$, the encoder's \emph{robust distortion function} is defined by
\begin{align}
\psi_{\textsf{e}}(z,p) &=   \E_p\Big[ d_{\textsf{e}}\big(U,z,v^{\star}(z,p)\big)\Big],\label{eq:functionPsi}
\end{align}
where the best-reply $v^{\star}(z,p)$ belongs to the set defined by \eqref{eq:Vstar}.
\end{definition}

\begin{definition}[Average distortion and average entropy]\label{def:AverageUtility}
For each belief $p\in\Delta(\mc{U})$, we define the average distortion function $\Psi_{\textsf{e}}(p)$ and average entropy function $h(p)$ by
%averaged over the side information $Z$:
\begin{align}
\Psi_{\textsf{e}}(p) =& \sum_{u,z} p(u) \cdot \PP(z|u)\cdot  \psi_{\textsf{e}}\big(z,p_U(\cdot|z)\big),\quad \text{ where } \quad p_U(u|z) = \frac{p(u) \cdot  \PP(z|u)}{\sum_{u'}p(u') \cdot  \PP(z|u')}\; \forall (u,z),\label{eq:FunctionPsip}\\
 h(p) =&   H(p) + \sum_{u} p(u) \cdot H\Big(\PP_Z(\cdot|u)\Big) - H\Big( \sum_{u} p(u) \cdot\PP_Z(\cdot|u)\Big).\label{eq:FunctionHp}
\end{align}
The conditional probability distribution $\PP_Z(\cdot|u)$ is given by the information source. Note that $h(\PP_U) = H(U|Z)$.
\end{definition}

%\begin{align}
%\Psi_{\textsf{e}}(p) =& \sum_{u,z} p(u) \cdot \PP(z|u)\cdot  \psi_{\textsf{e}}\big(z,p^z\big),\quad \text{ where } \quad p^z(u) = \frac{p(u) \cdot  \PP(z|u)}{\sum_{u'}p(u') \cdot  \PP(z|u')},\label{eq:FunctionPsip}\\
%\Psi_{\textsf{e}}(p) =& \sum_{u,z} p(u) \cdot \PP(z|u)\cdot  \psi_{\textsf{e}}\big(z,p_U(\cdot|z)\big),\quad \text{ where } \quad p_U(u|z) = \frac{p(u) \cdot  \PP(z|u)}{\sum_{u'}p(u') \cdot  \PP(z|u')},\label{eq:FunctionPsip}\\
%\Psi_{\textsf{e}}(p) =& \sum_{u,z} p(u) \cdot \PP(z|u)\cdot  \psi_{\textsf{e}}\bigg(z,\frac{p(\cdot) \cdot  \PP(z|\cdot)}{\sum_{u'}p(u') \cdot  \PP(z|u')}\bigg),\label{eq:FunctionPsip}\\
%h(p) =&    \sum_{u,z} p(u) \cdot \PP(z|u)\cdot \log_2 \frac{\sum_{u'}p(u') \cdot  \PP(z|u')}{p(u) \cdot  \PP(z|u)}.\label{eq:FunctionHp}\\
%\end{align}

\begin{lemma}[Concavity]\label{lemma:Concavity}
The average entropy $h(p)$ is concave in $p \in \Delta(\mc{U})$.
\end{lemma}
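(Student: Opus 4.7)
The plan is to use the information-theoretic interpretation already noted in the statement: $h(p)$ equals the conditional entropy $H(U|Z)$ computed under the joint distribution $\PP_{UZ}(u,z)=p(u)\PP_Z(z|u)$, where $\PP_Z(\cdot|u)$ is the fixed channel determined by the source. The identity $h(p)=H(U)+H(Z|U)-H(Z)=H(U|Z)$ follows directly from the definition in \eqref{eq:FunctionHp}, since $H(U)=H(p)$, $H(Z|U)=\sum_u p(u)H(\PP_Z(\cdot|u))$, and $H(Z)=H(\sum_u p(u)\PP_Z(\cdot|u))$.

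The main step is a standard mixture/auxiliary-variable argument. Fix $\lambda\in[0,1]$, $p_1,p_2\in\Delta(\mc{U})$, and set $p=\lambda p_1+(1-\lambda)p_2$. I introduce a binary random variable $V\in\{1,2\}$ with $\PP(V=1)=\lambda$, and construct $(U,Z)$ jointly with $V$ as follows: given $V=i$, draw $U\sim p_i$ and, independently of $V$, draw $Z$ according to the fixed channel $\PP_Z(\cdot|U)$. By construction, $U$ has marginal $p$ and the Markov chain $V-\!\!\!-U-\!\!\!-Z$ holds, so the joint distribution of $(U,Z)$ coincides with $p(u)\PP_Z(z|u)$. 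In particular, $H(U|Z)=h(p)$, while conditionally on $V=i$ the pair $(U,Z)$ has distribution $p_i(u)\PP_Z(z|u)$, so $H(U|Z,V=i)=h(p_i)$.

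The conclusion follows from two standard facts: conditioning reduces entropy, $H(U|Z)\geq H(U|Z,V)$, and the chain rule $H(U|Z,V)=\lambda H(U|Z,V=1)+(1-\lambda)H(U|Z,V=2)$. Combining them gives
\begin{equation*}
h(p)=H(U|Z)\;\geq\;\lambda\,h(p_1)+(1-\lambda)\,h(p_2),
\end{equation*}
which is concavity of $h$ on $\Delta(\mc{U})$.

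The only point that requires a little care is the verification that the mixture construction keeps the same channel $\PP_{Z|U}$: because $Z$ is drawn conditionally on $U$ alone, the induced joint law indeed equals $p_i(u)\PP_Z(z|u)$ on $\{V=i\}$ and $p(u)\PP_Z(z|u)$ overall, so the identification $H(U|Z,V=i)=h(p_i)$ is valid. There is no real obstacle beyond setting up this coupling correctly; the inequality itself is just ``conditioning reduces entropy.''
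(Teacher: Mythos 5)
Your proof is correct, but it follows a genuinely different route from the paper. The paper's own argument is purely a decomposition: it observes that $h(p)$ is the conditional entropy $H(U|Z)$ under the joint law $p\cdot\PP_{Z|U}$, writes $h(p)=H(U)-I(U;Z)$, and invokes concavity of $H(U)$ together with a cited convexity of $I(U;Z)$ in $p$. You instead build an explicit coupling with a binary time-sharing variable $V$ satisfying the Markov chain $V -\!\!\!\!\minuso\!\!\!\!- U -\!\!\!\!\minuso\!\!\!\!- Z$, and conclude by the chain rule and ``conditioning reduces entropy,'' i.e.\ $h(p)=H(U|Z)\geq H(U|Z,V)=\lambda h(p_1)+(1-\lambda)h(p_2)$. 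Your version is self-contained and in fact more robust: the property quoted in the paper is stated with the roles reversed (for a fixed channel $\PP_{Z|U}$, $I(U;Z)$ is \emph{concave} in the input distribution $p$, not convex, so ``concave minus convex'' does not literally apply as written), whereas your coupling argument — equivalently, the concavity of $H(U|Z)$ as a function of the joint distribution composed with the affine map $p\mapsto p\cdot\PP_{Z|U}$ — establishes the lemma directly. What the paper's one-line decomposition buys is brevity and a pointer to a textbook fact; what yours buys is a complete elementary verification, at the cost of setting up the mixture construction, which you do correctly (drawing $Z$ from $\PP_Z(\cdot|U)$ given $U$ alone is exactly what guarantees $H(U|Z,V=i)=h(p_i)$).
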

\begin{proof}[Lemma \ref{lemma:Concavity}]
The average entropy $h(p)$ in \eqref{eq:FunctionHp} is equal to the conditional entropy $H(U|Z)$ evaluated with respect to the probability distribution $p\cdot  \PP_{Z|U}\in\Delta(\mc{U}\times \mc{Z})$. The mutual information $I(U;Z)$ is convex in $p\in \Delta(\mc{U})$ (see \cite[pp. 23]{ElGammalKim(book)11}),  and the entropy $H(U)$ is concave in  $p\in \Delta(\mc{U})$. Hence the conditional entropy $h(p) = H(U|Z)= H(U) - I(U;Z)$ is concave in $p\in \Delta(\mc{U})$.
\end{proof}

\begin{theorem}[Convex closure]\label{theo:Concavification}
The solution $D_{\textsf{e}}^{\star} $ of \eqref{eq:PhiOptimalZ} is the convex closure of $ \Psi_{\textsf{e}}(p)$ evaluated at the prior distribution $\PP_U$, under an information constraint,
\begin{align}
D_{\textsf{e}}^{\star}  =&  \inf\bigg\{ \sum_{w} \lambda_w \cdot \Psi_{\textsf{e}}(p_w)   \quad \text{ s.t. } \quad \sum_{w} \lambda_w \cdot p_w  = \PP_U \in \Delta(\mc{U}),\nonumber\\
&\qquad\qquad\qquad\qquad\qquad \text{ and }\quad \sum_{w} \lambda_w \cdot h(p_w)  \geq H(U|Z)  - \max_{\PP_X}I(X;Y) \bigg\},\label{eq:SplittingFormulation}
\end{align}
where the infimum is taken over $\lambda_w\in [0,1]$ summing up to 1 and $p_w\in \Delta(\mc{U})$, for each $w\in \mc{W}$ with $|\mc{W}| = \min\big(|\mc{U}|+1, |\mc{V}|^{|\mc{Z}|}\big)$.
%, we have $p_w\in \Delta(\mc{U})$, $\lambda_w \in [0,1]$ and . The auxiliary random variable $W\in \mc{W}$ plays the role of the splitting index and has a support of cardinality $|\mc{W}| = \min\big(|\mc{U}|+1, |\mc{V}|^{|\mc{Z}|}\big)$.
\end{theorem}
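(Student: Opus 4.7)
The plan is to parameterize elements $\QQ_{UZW} = \PP_{UZ}\QQ_{W|U} \in \Q_0$ by their induced \emph{splitting} $(\lambda_w, p_w)_{w \in \mathcal{W}}$, where $\lambda_w = \QQ(w)$ and $p_w = \QQ_U(\cdot|w) \in \Delta(\mathcal{U})$, and to translate both the inner maximum and the information constraint into statements about this splitting. Bayes' rule gives the \emph{Bayes plausibility} $\sum_w \lambda_w p_w = \PP_U$, and conversely any splitting satisfying this identity defines a valid conditional $\QQ_{W|U}(w|u) = \lambda_w p_w(u)/\PP_U(u)$. Since the encoder's channel does not depend on $Z$, the Markov chain $W - U - Z$ holds, so Lemma~\ref{lemma:MarkovPropertyPosteriorBelief} yields $\QQ_U(\cdot|z,w) = p_w(\cdot|z)$, where the right-hand side is the posterior obtained from $p_w$ through the source channel $\PP_{Z|U}$ as in~\eqref{eq:FunctionPsip}.

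For the inner maximum, the problem decouples over pairs $(w,z)$, and the restriction of $\Q_2(\QQ_{UZW})$ to a given $(w,z)$ coincides with $\mathcal{V}^{\star}(z, p_w(\cdot|z))$ from Definition~\ref{def:BestReply}. Hence the inner optimum equals $\psi_{\textsf{e}}(z, p_w(\cdot|z))$ at every pair, and weighting by $\QQ(w,z) = \lambda_w \sum_u p_w(u)\PP(z|u)$ gives
\begin{align*}
\max_{\QQ_{V|WZ} \in \Q_2(\QQ_{UZW})} \E\bigl[d_{\textsf{e}}(U,Z,V)\bigr]
&= \sum_w \lambda_w \sum_{u,z} p_w(u)\PP(z|u)\,\psi_{\textsf{e}}\bigl(z, p_w(\cdot|z)\bigr) \\
&= \sum_w \lambda_w \Psi_{\textsf{e}}(p_w),
\end{align*}
using Definition~\ref{def:AverageUtility}. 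In parallel, the same Markov property yields $H(U|Z, W=w) = h(p_w)$, since the conditional entropy at $W=w$ is evaluated under $p_w \cdot \PP_{Z|U}$, which is precisely the joint distribution used to define $h$ in~\eqref{eq:FunctionHp}. Therefore $I(U;W|Z) = H(U|Z) - \sum_w \lambda_w h(p_w)$, and the information constraint $\max_{\PP_X}I(X;Y) \geq I(U;W|Z)$ rewrites as $\sum_w \lambda_w h(p_w) \geq H(U|Z) - \max_{\PP_X}I(X;Y)$, which is exactly the constraint appearing in~\eqref{eq:SplittingFormulation}.

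It remains to justify the cardinality bound $|\mathcal{W}| = \min(|\mathcal{U}|+1, |\mathcal{V}|^{|\mathcal{Z}|})$. The first term follows from a Carathéodory-type argument in the spirit of the standard support lemma: Bayes plausibility contributes $|\mathcal{U}|-1$ independent linear constraints on the splitting, the entropy inequality contributes one more, and one additional atom suffices to achieve the infimum of the (linear in $\lambda$) objective, giving at most $|\mathcal{U}|+1$ support points. The second term follows from grouping: once $p_w$ is fixed, the decoder's behavior is captured by a single element of $\mathcal{V}^{\mathcal{Z}}$, so one can re-index $\mathcal{W}$ by best-reply profiles $(v^{\star}(z, p_w(\cdot|z)))_{z \in \mathcal{Z}}$ and merge each equivalence class into its conditional barycenter. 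On each class with a frozen best-reply profile, $\Psi_{\textsf{e}}$ is affine in $p$ (so the objective is preserved under merging), Bayes plausibility is trivially preserved, and concavity of $h$ (Lemma~\ref{lemma:Concavity}) can only relax the entropy constraint.

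The main obstacle is the discontinuity of $\Psi_{\textsf{e}}$ near belief points at which several symbols tie for the decoder's best-reply: when merging within a class, the barycenter may cross into a neighboring cell on which $\mathcal{V}^{\star}$ selects a different symbol, breaking the ``affine objective'' step. This is why~\eqref{eq:SplittingFormulation} is written as an infimum rather than a minimum; the merging step must therefore be carried out either on generic splittings (where the tie-breaking rule of Definition~\ref{def:BestReply} is locally constant) or as a limiting argument, mirroring the non-attainment already flagged in the discussion of Definition~\ref{def:Characterization}.
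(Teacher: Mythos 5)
Your proposal is correct and follows essentially the same route as the paper: identify splittings $(\lambda_w,p_w)$ with conditionals $\QQ_{W|U}$ via Bayes plausibility, match $\sum_w\lambda_w\Psi_{\textsf{e}}(p_w)$ with the inner maximum and $\sum_w\lambda_w h(p_w)$ with $H(U|W,Z)$ through the Markov chain $Z  -\!\!\!\!\minuso\!\!\!\!-U    -\!\!\!\!\minuso\!\!\!\!-  W$, and get the cardinality bound from Carath\'eodory plus merging posteriors with identical best-reply profiles, exactly as in the paper's Sec.~\ref{sec:Concavification} discussion and appendix proof. Your closing caveat is over-cautious: when two posteriors share the same selected best-reply profile, any symbol that is decoder-optimal at their barycenter must be decoder-optimal at both endpoints (strict suboptimality at one endpoint survives averaging), so the worst-case tie-break value at the barycenter equals the weighted average of $\psi_{\textsf{e}}$ and the merging preserves the objective exactly, with no need for a genericity or limiting argument.
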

The proof of Theorem \ref{theo:Concavification}, stated in App. \ref{sec:ProofConcavification}, rely on the Markov chain property $Z  -\!\!\!\!\minuso\!\!\!\!-U    -\!\!\!\!\minuso\!\!\!\!-  W$. When removing, the decoder's side information, e.g. $|\mc{Z}|=1$, and changing the infimum into a supremum, we recover the value of the optimal splitting problem of \cite[Definition 2.4]{LeTreustTomala19}. The ``splitting Lemma'' by Aumann and Maschler \cite{AM95}, also called ``Bayes plausibility'' in \cite{KamenicaGentzkow11}, ensures that the strategy $\QQ(w|u) =  \frac{\lambda_w \cdot p_w(u)}{\PP(u)}$ induces the collection of posterior beliefs $(\lambda_w,p_w)_{w\in \mc{W}}$, also referred to as ``the splitting of the prior belief''. Formulation \eqref{eq:SplittingFormulation} provides an alternative point of view on the encoder's optimal distortion \eqref{eq:PhiOptimalZ}.
\begin{itemize}
\item[$\bullet$] The optimal solution $D_{\textsf{e}}^{\star} $ can be found by adapting the concavification method \cite{AM95}, to the minimization problem. In Sec \ref{sec:ExampleBinary}, we compute explicitly the optimal strategy for the Wyner-Ziv's example of the \textit{doubly symmetric binary source} (DSBS), in  \cite[Sec. II]{wyner-it-1976}.
\item[$\bullet$] When the channel is perfect and has a large input alphabet $|\mc{X}|\geq \min(|\mc{U}|,|\mc{V}|^{|\mc{Z}|})$, the optimal solution is obtained by removing the information constraint \eqref{eq:SplittingFormulation}. This noise-free setting is related to the problem of persuasion with heterogeneous beliefs,  investigated in \cite{AlonsoCamara(JET)2016} and \cite{LaclauRenou17}.
\item[$\bullet$] The information constraint $\sum_{w} \lambda_w \cdot h(p_w)  \geq H(U|Z)  - \max_{\PP_X}I(X;Y)$ in \eqref{eq:SplittingFormulation} is a reformulation of $I(U;W|Z) \leq  \max_{\PP_X}I(X;Y)$ in \eqref{eq:SetQ0}, since 
\begin{align}
\sum_{w} \lambda_w \cdot h(p_w) =& \sum_{w} \lambda_w \cdot H(U|Z,W=w) =  H(U|Z,W).
\end{align} 
%$\bullet$ The information constraint  \eqref{eq:SetQ0} of the set $\Q_0$ involves the channel capacity $\max_{\PP_X} I( X; Y )$ and the lossy source coding rate of Wyner-Ziv in \cite{wyner-it-1976}: $ I( U ;W |Z ) = I( U ;W ) - I(W;Z )$. It corresponds to the separation result by Shannon \cite{shannon-bell-1948}, extended to Wyner-Ziv setting by Merhav-Shamai in \cite{MerhavShamai03}. \\
\item[$\bullet$]The dimension of the problem \eqref{eq:SplittingFormulation}  is $|\mc{U}|$. Caratheodory's Lemma (see \cite[Corollary 17.1.5, pp. 157]{rockafellar1970convex} and \cite[Corollary A.2, pp. 26]{LeTreustTomala19}) provides the cardinality bound $|\mc{W}| = |\mc{U}|+1$. 
\item[$\bullet$] The cardinality of $\mc{W}$ is also restricted by the vector of recommended symbols $|\mc{W}|  = |\mc{V}|^{|\mc{Z}|}$, telling to the decoder which symbol $v\in\mc{V}$ to return when the side information is $z\in\mc{Z}$. Otherwise assume that two posteriors $p_{w_1}$ and $p_{w_2}$  induce the same vectors of symbols $v^1=(v_1^1,\ldots,v_{|\mc{Z}|}^1) = v^2=(v_1^2,\ldots,v_{|\mc{Z}|}^2)$. Then, both posteriors $p_{w_1}$ and $p_{w_2}$ can be replaced by their average: 
\begin{align}
\widetilde{p}  = \frac{\lambda_{w_1} \cdot p _{w_1} + \lambda_{w_2}\cdot p _{w_2}}{\lambda_{w_1} + \lambda_{w_2}},
\end{align}
without changing the distortion and still satisfying the information constraint:
\begin{align}
& h(\widetilde{p})  \geq \frac{\lambda_{w_1} \cdot h(p _{w_1}) + \lambda_{w_2}\cdot h(p _{w_2})}{\lambda_{w_1} + \lambda_{w_2}}\label{eq:EntropyTilde}\\
\Longrightarrow& \sum_{w \neq w_1, \atop w \neq w_2} \lambda_{w} \cdot h(p_w) + (\lambda_{w_1} + \lambda_{w_2})\cdot h(\widetilde{p}) \geq H(U|Z) - \max_{\PP_X}I(X;Y).\label{eq:telling}
\end{align}
Inequality \eqref{eq:EntropyTilde} comes from the concavity of $h(p)$, stated in Lemma \ref{lemma:Concavity}.
\end{itemize}
%Then for each $z$, the two posteriors  $\QQ(u|z,w_1)$, $\QQ(u|z,w_2)$ could be replaced by their average posterior:
%\begin{align}
%\widetilde{\QQ}(u|z)  = \QQ(w_1|z)\QQ(u|z,w_1) + \QQ(w_2|z)\QQ(u|z,w_2), 
%\end{align}

The splitting under information constraint of Theorem \ref{theo:Concavification} can be reformulated in terms of Lagrangian and in terms of the convex closure of $\tilde{\Psi}_{\textsf{e}}(p,\nu)$ defined by 
\begin{align}
\tilde{\Psi}_{\textsf{e}}(p,\nu) &= 
\begin{cases}
\Psi_{\textsf{e}}(p), \text{ if } \nu \leq h(p),\\
+ \infty, \text{ otherwise. }  
\end{cases}
\end{align}

\begin{theorem}\label{theo:Concavification2}
The optimal solution $D_{\textsf{e}}^{\star}$ reformulates as:
\begin{align}
D_{\textsf{e}}^{\star}  =& \sup_{t\geq 0 } \bigg\{ \vex \Big[\Psi_{\textsf{e}} + t \cdot h\Big]\big(\PP_U\big) - t \cdot \Big(H(U|Z) - \max_{\PP_X}I(X;Y)\Big)\bigg\}\label{eq:Lagrangian}\\
=&\vex \tilde{\Psi}_{\textsf{e}}\Big(\PP_U,H(U|Z) - \max_{\PP_X}I(X;Y)\Big).\label{eq:GeneralCav}
\end{align}
\end{theorem}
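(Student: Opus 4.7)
The plan is to prove (38) directly from Theorem 3.2 by unpacking the definition of the bivariate convex closure, and then derive (37) by Lagrangian duality applied to the splitting problem of Theorem 3.2.

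For (38), write $H^\star = H(U|Z) - \max_{\PP_X} I(X;Y)$ and unfold the convex closure at $(\PP_U, H^\star)$: by definition, $\vex \tilde{\Psi}_{\textsf{e}}(\PP_U, H^\star)$ is the infimum of $\sum_w \lambda_w \tilde{\Psi}_{\textsf{e}}(p_w,\nu_w)$ over all convex combinations $\sum_w \lambda_w(p_w,\nu_w) = (\PP_U, H^\star)$. Since $\tilde{\Psi}_{\textsf{e}}(p_w,\nu_w)=+\infty$ unless $\nu_w \leq h(p_w)$, any feasible splitting must satisfy $\nu_w \leq h(p_w)$ pointwise, which averages to $\sum_w \lambda_w h(p_w) \geq H^\star$, precisely the information constraint of Theorem 3.2. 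Conversely, any splitting $(\lambda_w,p_w)$ of $\PP_U$ with $\sum_w \lambda_w h(p_w) \geq H^\star$ extends to a $(\lambda_w,p_w,\nu_w)$-splitting averaging to $H^\star$ by choosing $\nu_w \leq h(p_w)$ suitably. The two infima are identical, so (38) follows from Theorem 3.2.

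For (37), I would view the splitting problem as an optimization with a linear-in-$h$ inequality constraint and apply Lagrange/Fenchel duality. Attaching a multiplier $t \geq 0$ to the constraint $\sum_w \lambda_w h(p_w) - H^\star \geq 0$ yields a Lagrangian whose infimum over all splittings of $\PP_U$, with no entropy restriction, coincides, up to an additive affine term in $t$, with a one-dimensional convex closure of $\Psi_{\textsf{e}}$ perturbed by a linear multiple of $h$ and evaluated at $\PP_U$. Weak duality then bounds $D_{\textsf{e}}^\star$ from below by the supremum over $t \geq 0$, giving one inequality of (37). The reverse inequality is strong duality. I would obtain it by computing the Fenchel biconjugate of $\tilde{\Psi}_{\textsf{e}}$ on $\R^{|\mc{U}|}\times\R$ directly: its convex conjugate collapses to $+\infty$ on the half-line where the $\nu$-dual variable is negative (because $\nu$ is unrestricted below in the effective domain), and for the nonnegative $\nu$-dual variable $s$ it reduces to the conjugate of $\Psi_{\textsf{e}}$ shifted by $s\cdot h$. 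Taking the biconjugate at $(\PP_U, H^\star)$ reproduces the supremum expression of (37), and coincides with the convex closure of (38) by the Fenchel--Moreau theorem.

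The main obstacle is justifying strong duality at the specific point $(\PP_U, H^\star)$, that is, showing that the biconjugate of $\tilde{\Psi}_{\textsf{e}}$ actually agrees with its convex closure there. I would appeal to the compactness of $\Delta(\mc{U})$, the continuity of $\Psi_{\textsf{e}}$ on it, and the concavity of $h$ (Lemma 3.2), together with the observation that whenever the channel has positive capacity, $H^\star < h(\PP_U) = H(U|Z)$, placing $(\PP_U, H^\star)$ in the relative interior of the effective domain of $\tilde{\Psi}_{\textsf{e}}$. Standard Rockafellar-type convex-analysis arguments then rule out a duality gap, while the boundary case of zero channel capacity can be checked directly by the first part of the argument.
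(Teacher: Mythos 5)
Your argument for \eqref{eq:GeneralCav} is correct and is essentially the intended one: unfolding the bivariate convex closure and matching its feasible splittings with those of Theorem \ref{theo:Concavification} (taking $\nu_w=h(p_w)-c$ with $c=\sum_w\lambda_w h(p_w)-\big(H(U|Z)-\max_{\PP_X}I(X;Y)\big)\geq 0$ for the converse inclusion) is exactly the argument behind the paper's one-line citation of \cite{LeTreustTomala19}, so a self-contained write-up of this step is welcome.

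The Lagrangian part \eqref{eq:Lagrangian} has two genuine problems. First, the signs. For a \emph{minimization} under the constraint $\sum_w\lambda_w h(p_w)\geq H(U|Z)-\max_{\PP_X}I(X;Y)$, the multiplier enters as $\Psi_{\textsf{e}}-t\,h$, and your own conjugate computation says so: you correctly find $\tilde{\Psi}_{\textsf{e}}^{*}(y,s)=(\Psi_{\textsf{e}}-s\,h)^{*}(y)$ for $s\geq 0$ (and $+\infty$ for $s<0$), so the biconjugate at $\big(\PP_U,H(U|Z)-\max_{\PP_X}I(X;Y)\big)$ is $\sup_{t\geq 0}\big\{\vex[\Psi_{\textsf{e}}-t\,h](\PP_U)+t\,\big(H(U|Z)-\max_{\PP_X}I(X;Y)\big)\big\}$, which is \emph{not} the displayed \eqref{eq:Lagrangian}. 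The display cannot be read literally: with matched Hamming distortions and independent side information ($\kappa=0$, $\delta_0=\delta_1=\tfrac12$, $p_0=\tfrac12$), the function $\Psi_{\textsf{e}}+t\,h$ is nonnegative and vanishes at both vertices of $\Delta(\mc{U})$, hence $\vex[\Psi_{\textsf{e}}+t\,h](\PP_U)=0$ for every $t\geq0$ and the right-hand side of \eqref{eq:Lagrangian} equals $0$, while $D_{\textsf{e}}^{\star}=H_b^{-1}\big(H(U)-C\big)>0$. So your plan, carried out, proves the sign-corrected identity (the correct cav-to-vex transcription of \cite[Theorem 3.3]{LeTreustTomala19}), but your proposal neither notices nor resolves the mismatch with the printed formula; in particular the weak-duality bound you invoke does not hold for \eqref{eq:Lagrangian} as displayed. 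Second, your no-gap argument leans on ``the continuity of $\Psi_{\textsf{e}}$'' on $\Delta(\mc{U})$, which is false: because of the decoder's best replies and the adversarial tie-breaking, $\Psi_{\textsf{e}}$ jumps at indifference beliefs (this is precisely why $D_{\textsf{e}}^{\star}$ is an infimum and not a minimum, as the paper stresses). The repair does not need continuity: a proper, bounded-below function's convex hull coincides with its biconjugate on the relative interior of its domain, and $\big(\PP_U,H(U|Z)-\max_{\PP_X}I(X;Y)\big)$ lies in that relative interior when $\PP_U$ has full support and the capacity is strictly positive; the zero-capacity and boundary-support cases must then be checked directly, as you indicate.
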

Equation \eqref{eq:Lagrangian} is the convex closure of a Lagrangian that integrates the information constraint and equation \eqref{eq:GeneralCav} corresponds to the convex closure of a a bi-variate function where the information constraint requires an additional dimension. The proof follows directly from \cite[Theorem 3.3, pp. 37]{LeTreustTomala19}, by replacing concave closure by convex closure.

%
%%%%%%%%%%%%%%%%%%%%%%%%%%%%%%%%%%%%%%%%%%%%%%%%%%%%%%%%%%%%%%%%%%%%%%%%%%%%%%%%%%%%%%%%%%%%%%%%%%%%%%%%%%%%%%%%%%%%%%%%%%%%%%%%%%%%%%%%%%%%%%%%%%%%%%%%%%%%%%%%%%%%%%%%%%%%%%%%%%%%%%%%%%%%%%%%%%%

%%%%%%%%%%%%%%%%%%%%%%%%%%%%%%%%%%%%%%%%%%%%%%%%%%%%%%%%%%%%%%%%%%%%%%%%%%%%%%%%%%%%%%%%%%%%%%%%%%%%%%%%%%%%%%%%%%%%%%%%%%%%%%%%%%%%%%%%%%%%%%%%%%%%%%%%%%%%%%%%%%%%%%%%%%%%%%%%%%%%%%%%%%%%%%%%%%%
%\newpage

\section{Example with binary source and side information}\label{sec:ExampleBinary}

We consider a binary source $U \in \{u_0,u_1\}$ with probability distribution $\PP(u_1)=p_0 \in[0,1]$. The binary side information $Z\in \{z_0,z_1\}$ is drawn according to the conditional probability distribution $\PP(z|u)$ with parameter $\delta_0\in [0,1]$ and $\delta_1\in [0,1]$. In this section, we consider distortion functions $d_{\textsf{e}}$ and $d_{\textsf{d}}$ that do not depend on the side information $Z$. The cardinality bound in Definition \ref{def:Characterization} is $ |\mc{W}| =\min\big(|\mc{U}|+1, |\mc{V}|^{|\mc{Z}|}\big) =3$, hence the random variable $W$ is drawn according to the conditional probability distribution $\QQ(w|u)$ with parameters $(\alpha_k,\beta_k)_{k\in \{1,2,3\}}\in [0,1]^6$ such that $\sum_k \alpha_k = \sum_k \beta_k = 1$. The joint probability distribution $\PP(u,z) \QQ(w|u)$ is depicted in Fig. \ref{fig:SignalingZ}.
\begin{figure}[!ht]
\begin{center}
\psset{xunit=0.9cm,yunit=0.9cm}
\begin{pspicture}(-2.5,0)(3,3)
\rput(0,0.5){$u_1$}
\rput(0,2.5){$u_0$}
%\rput(-2.5,0.5){$(1-p)$}
%\rput(-2.5,2.5){$p$}
\psdots(0.5, 0.5)(0.5, 2.5)(-0.5, 0.5)(-0.5, 2.5)(3.5,0.5)(3.5,2.5)(-3.5,0.5)(-3.5,2.5)(3.5,1.5)
\psline[linewidth=1pt]{<-}(-3.5,2.5)(-0.5,2.5)
\psline[linewidth=1pt]{<-}(-3.5,0.5)(-0.5,0.5)
\psline[linewidth=1pt]{<-}(-3.5,0.5)(-0.5,2.5)
\psline[linewidth=1pt]{<-}(-3.5,2.5)(-0.5,0.5)
\psline[linewidth=1pt]{->}(0.5,0.5)(3.5,0.5)
\psline[linewidth=1pt]{->}(0.5,2.5)(3.5,2.5)
\psline[linewidth=1pt]{->}(0.5,0.5)(3.5,1.5)
\psline[linewidth=1pt]{->}(0.5,2.5)(3.5,1.5)
\psline[linewidth=1pt]{->}(0.5,0.5)(3.5,2.5)
\psline[linewidth=1pt]{->}(0.5,2.5)(3.5,0.5)
\rput(4,0.5){$w_3$}
\rput(4,2.5){$w_1$}
\rput(4,1.5){$w_2$}
\rput(-4,0.5){$z_1$}
\rput(-4,2.5){$z_0$}
\rput(2.2,2.8){$\alpha_1$}
\rput(2.2,0.2){$\beta_3$}
\rput(0,3){$1-p_0$}
\rput(0,0){$p_0$}
\rput(0.9,1.9){$ \alpha_3$}
\rput(2.2,2.2){$ \alpha_2$}
\rput(0.8,1.1){$ \beta_1$}
\rput(2.2,0.8){$ \beta_2$}
\rput(-2,3){$1 - \delta_0$}
%\rput(-2,0){$1$}
\rput(-0.8,1.9){$ \delta_0$}
\rput(-2,0){$1 -  \delta_1$}
\rput(-0.8,1.1){$ \delta_1$}
%\rput(0.8,1.1){$ \delta$}
%%%%%%%%%%%%%%%
\end{pspicture}
\caption{Joint probability distribution $\PP(u,z) \QQ(w|u)$ depending on parameters $p_0$, $\delta_0$, $\delta_1$,  $(\alpha_k,\beta_k)_{k\in \{1,2,3\}}$.}
\label{fig:SignalingZ}
\end{center}
\end{figure}

The encoder minimizes the Hamming distortion $d_{\textsf{e}}(u,v)$ given by Fig. \ref{fig:Utility1}. The decoder's distortion in Fig. \ref{fig:Utility2} includes an extra cost $\kappa\in[0,1]$ when it returns the symbol $v_1$ instead of the symbol $v_0$. The extra cost $\kappa$ may capture a computing cost, an energy cost, or the fact that an estimation error for the symbol $v_1$ is more harmful than an estimation error for the symbol $v_0$.
\begin{figure}[ht!]
\begin{minipage}{0.495\linewidth}
\begin{center}
\psset{xunit=1cm,yunit=1cm}
\begin{pspicture}(0,0.2)(2,2.2)
\psframe(0,0)(2,2)
\psline(1,0)(1,2)
\psline(0,1)(2,1)
\rput(-0.3,0.5){$u_1$}
\rput(-0.3,1.5){$u_0$}
\rput(0.5,2.3){$v_0$}
\rput(1.5,2.3){$v_1$}
\rput(0.5,1.5){$0$}
\rput(0.5,0.5){$1$}
\rput(1.5,1.5){$1$}
\rput(1.5,0.5){$0$}
%\rput(0.5,2.3){$\textcolor[rgb]{0,0.6,0}{v_1}$}
%\rput(1.5,2.3){$\textcolor[rgb]{0,0,1}{v_2}$}
%\rput(0.5,1.5){$\textcolor[rgb]{0,0.6,0}{0}$}
%\rput(0.5,0.5){$\textcolor[rgb]{0,0.6,0}{0}$}
%\rput(1.5,1.5){$\textcolor[rgb]{0,0,1}{1}$}
%\rput(1.5,0.5){$\textcolor[rgb]{0,0,1}{1}$}
\end{pspicture}
\caption{Encoder's distortion function $d_{\textsf{e}}(u,v)$.}\label{fig:Utility1}
\end{center}
\end{minipage}\hfill
\begin{minipage}{0.495\linewidth}
\begin{center}
\psset{xunit=1cm,yunit=1cm}
\begin{pspicture}(0,0.2)(2,2.2)
\psframe(0,0)(2,2)
\psline(1,0)(1,2)
\psline(0,1)(2,1)
\rput(-0.3,0.5){$u_1$}
\rput(-0.3,1.5){$u_0$}
\rput(0.5,2.3){$v_0$}
\rput(1.5,2.3){$v_1$}
\rput(0.5,1.5){$0$}
\rput(0.5,0.5){$1$}
\rput(1.5,1.5){$1+\kappa$}
\rput(1.5,0.5){$\kappa$}
\end{pspicture}
\caption{Decoder's distortion $d_{\textsf{d}}(u,v)$ with extra cost $\kappa\in [0,1]$.}\label{fig:Utility2}
\end{center}
\end{minipage}\hfill
\end{figure}

%%%%%%%
\subsection{Decoder's best-reply}
After receiving the pair of symbols $(w,z)$, the decoder updates its \emph{posterior belief} $\QQ_U(\cdot|w,z)\in \Delta(\mc{U})$, according to Bayes rule. We denote by $p =\QQ(u_1|w,z)\in[0,1]$ the parameter of the posterior belief. Given the extra cost is $\kappa = \frac34$ and we denote by $\gamma = \frac{1 + \kappa}{2}=\frac78$ the \emph{belief threshold} at which the decoder changes its symbol, as in Fig. \ref{fig:BestReply}. When the decoder's belief is exactly equal to the threshold $p = \gamma=\frac78$, the decoder is indifferent between the two symbols $\{v_0,v_1\}$, by convention we assume that it chooses $v_0$, i.e. the worst symbol for the encoder.  Hence the decoder chooses a best-reply $v_0^{\star}$ or $v_1^{\star}$ depending on the interval $[0,\gamma]$ or $(\gamma,1]$ in which lies the belief parameter $p\in[0,1]$, see Fig. \ref{fig:BestReply}.

\begin{figure}[!ht]
\begin{center}
%\begin{footnotesize}
\psset{xunit=0.5cm,yunit=0.2cm}
\begin{pspicture}(-1,-6)(10,22)
\psline[linewidth=1pt]{->}(0,-1)(0,21)
\psline[linewidth=1pt](10,-1)(10,21)
\psline[linewidth=1pt]{->}(-0.5,0)(12,0)
\rput(-0.5,-0.8){0}
\rput(10.5,-0.8){1}
\rput(12,-1.5){$p$}
%\rput{90}(-1.3,6){\normalsize\textcolor[rgb]{0.00,0.00,0.00}{Decoder's Expected Utility}}
\psline[linecolor=vert, linewidth=3.5pt](0,17.5)(10,7.5)
\rput[r](-0.1,22){$\E_p\big[d_{\textsf{d}}(U,v)\big]$}
\rput[r](-0.4,17.5){$1 + \kappa = \frac74$}
\rput[l](10.2,7.5){$\kappa = \frac34$}
\rput[l](10.2,11){1}
\psline[linecolor=blue, linewidth=3.5pt](0,0)(10,10)
\rput[u](1,18.5){$\textcolor[rgb]{0,0.6,0}{v_1}$}
\rput(1,2.7){$\textcolor[rgb]{0,0,1}{v_0}$}
\psline[linecolor=red, linewidth=1pt](0,0)(8.75,8.75)(10,7.5)
\psline[linestyle=dotted,linecolor=red](8.75,0)(8.75,8.75)
\rput[l]{-45}(8.75,-1){$\textcolor[rgb]{1,0,0}{\gamma = \frac{1 + \kappa}{2} = \frac78}$}
\psdots[linecolor = red,dotscale=1.5](8.75 ,0)
\end{pspicture}
%\end{footnotesize}
\caption{Decoder's expected distortion $\E_p\big[d_{\textsf{d}}(U,v)\big] = (1-p)\cdot d_{\textsf{d}}(u_0,v) +  p \cdot d_{\textsf{d}}(u_1,v)$ for $v\in\{v_0,v_1\}$, depending on the belief parameter $p = \QQ(u_1|w,z)\in [0,1]$. For an extra cost $\kappa = \frac34$, the decoder's best-reply is the symbol  $\textcolor[rgb]{0,0,1}{v_0^{\star}}$ if the posterior belief $p\in[0,\gamma]$ and $\textcolor[rgb]{0,0.6,0}{v_1^{\star}}$ if $p\in(\gamma,1]$ with $\gamma = \frac78$.}\label{fig:BestReply}
\end{center}
\end{figure}

%%%%%%%%%%
\subsection{Interim and posterior belief}
The correlation between random variables $(U,Z)$ is fixed whereas the correlation between random variables  $(U,W)$ is  selected by the encoder. This imposes a strong relationship between the \emph{iterim belief} $\QQ_{U|W}$ induced by the encoder, and the \emph{posterior belief} $\QQ_{U|WZ}$ that determine the decoder's best-reply symbol $v$. For a symbol $w\in\mc{W}$, we denote the \textit{interim belief} by $q=\QQ(u_1|w) \in [0,1]$. Lemma \ref{lemma:MarkovPropertyPosteriorBelief} ensures that the posterior belief depending on the side information $z_0$ or $z_1$, are given by
\begin{align}
\QQ(u_1|w,z_0) =& \frac{\QQ(u_1|w) \cdot \PP(z_0|u_1)  }{\QQ(u_0|w) \cdot \PP(z_0|u_0)  + \QQ(u_1|w)\cdot  \PP(z_0|u_1) } =  \frac{q \cdot \delta_1  }{(1-q) \cdot (1-\delta_0) + q \cdot \delta_1 }=:p_0(q)\label{eq:VeritableBelief0},\\
\QQ(u_1|w,z_1) =& \frac{\QQ(u_1|w) \cdot \PP(z_1|u_1)  }{\QQ(u_0|w)\cdot \PP(z_1|u_0)  + \QQ(u_1|w) \cdot\PP(z_1|u_1) } =  \frac{q \cdot (1-\delta_1 ) }{(1-q) \cdot \delta_0 + q \cdot  (1-\delta_1 ) }=:p_1(q)\label{eq:VeritableBelief1}.
\end{align}
The posterior beliefs after receiving the side information $z_0$ or $z_1$ are related to the \textit{interim belief} $q\in[0,1]$ through the two functions $p_0(q)$, $p_1(q)$, depicted on Fig. \ref{fig:DynamicNew0.5_0.05_0.5_0.875}. 
\FigDynamicNew{0.5}{0.05}{0.5}{0.875}

%%%%%%%%%%%%%%%%%%%%%%%%%%%%%%%%%%%%%%%%%%%%
\subsection{Encoder's average distortion function}

Given the \emph{belief threshold} $\gamma=\frac78$ at which the decoder changes its symbol, we define the parameters $\nu_0$ and $\nu_1$ such that $p_0(\nu_0) = \gamma$ and $p_1(\nu_1) = \gamma$.
\begin{align}
\gamma = p_0(\nu_0) \;\; \Longleftrightarrow&\;\;  \nu_0 = \frac{\gamma \cdot (1 - \delta_0)}{\delta_1 \cdot (1 - \gamma) + \gamma \cdot (1 - \delta_0)},\\
\gamma = p_1(\nu_1) \;\; \Longleftrightarrow&\;\;  \nu_1 = \frac{\gamma \cdot   \delta_0}{\gamma \cdot   \delta_0  + (1 - \delta_1) \cdot (1 - \gamma) }.
\end{align}
\begin{remark}
We have the equivalence $\nu_1<\nu_0 \;\; \Longleftrightarrow  \;\; \delta_0 + \delta_1 < 1$.
\end{remark}
Since the distortion functions given in Fig. \ref{fig:Utility1} and Fig. \ref{fig:Utility2} do not depend on the side information $z$, we denote by $\psi_{\textsf{e}}(p)$ the \textit{robust distortion function} of Definition \ref{def:RobustUtility}, given by
\begin{align}
\psi_{\textsf{e}}(p) &=  \min_{v \in \argmin \atop (1-p)\cdot d_{\textsf{d}}(u_0,v) +  p \cdot d_{\textsf{d}}(u_1,v)} (1-p) \cdot d_{\textsf{e}}(u_0,v) + p\cdot d_{\textsf{e}}(u_1,v),\\
&=\UN\big(p\leq \gamma \big) \cdot \Big( (1-p) \cdot d_{\textsf{e}}(u_0,v_0) +p\cdot d_{\textsf{e}}(u_1,v_0) \Big) \nonumber \\
&+ \UN\big(p > \gamma \big)\cdot \Big( (1-p) \cdot d_{\textsf{e}}(u_0,v_1) +p\cdot d_{\textsf{e}}(u_1,v_1) \Big) \\
&=p \cdot  \UN\big(p\leq \gamma\big) + (1-p) \cdot  \UN\big(p >\gamma\big).\label{eq:FunctionPsi1}
\end{align}
Without loss of generality, we assume that $\delta_0 + \delta_1 < 1$, hence $\nu_1<\nu_0$. The average distortion function $\Psi_{\textsf{e}}(q)$ of Definition \ref{def:AverageUtility} depends on the interim belief parameter $q\in[0,1]$ as follows
\begin{align}
\Psi_{\textsf{e}}(q) =& \prob_q(z_0) \cdot \psi_{\textsf{e}}\big(p_0(q)\big) +  \prob_q(z_1)\cdot\psi_{\textsf{e}}\big(p_1(q)\big)\\
%=&\Big( (1-q)\cdot(1-\delta_0)  + q \cdot \delta_1\Big) \cdot \psi_{\textsf{e}}\big(p_0(q)\big) + \Big( (1-q)\cdot \delta_0  +q \cdot (1-\delta_1) \Big)\cdot\psi_{\textsf{e}}\big(p_1(q)\big)\\
=& \Big( (1-q)\cdot(1-\delta_0)  + q \cdot \delta_1\Big) \cdot \Big( p_0(q) \cdot  \UN\big(p_0(q) \leq \gamma\big) + (1-p_0(q)) \cdot  \UN\big(p_0(q) >\gamma \big)\Big)\nonumber \\
&+ \Big( (1-q)\cdot \delta_0  +q \cdot (1-\delta_1) \Big)\cdot  \Big( p_1(q) \cdot  \UN\big(p_1(q) \leq \gamma\big) + (1-p_1(q)) \cdot  \UN\big(p_1(q) > \gamma \big)\Big)\\
%=& q \cdot \delta_1  \cdot  \UN\big(q \leq \nu_0 \big) +  (1-q)\cdot(1-\delta_0) \cdot  \UN\big(q > \nu_0\big)\nonumber \\
%&+ q \cdot (1-\delta_1) \cdot   \UN\big(q \leq \nu_1\big) +  (1-q)\cdot \delta_0  \cdot \UN\big(q > \nu_1\big)\\
=&q \cdot  \UN\big(q \leq \nu_1\big) + \big(q \cdot \delta_1 +  (1-q) \cdot \delta_0\big)\cdot  \UN\big(\nu_1<q \leq \nu_0\big) +(1-q) \cdot  \UN\big(q > \nu_0\big) . \label{eq:AverageFunctionPsi1}
\end{align}
In Fig. \ref{fig:DSBS_02}, Fig. \ref{fig:DSBS_04} and Fig. \ref{fig:EncoderUtility}, the average distortion function $\Psi_{\textsf{e}}(q)$ are represented by the orange lines, whereas the black curve is the average entropy $h(q)$ defined by
\begin{align}
h(q)  =& H_b(q)  + (1 - q) \cdot H_b(\delta_0) + q \cdot H_b(\delta_1)- H_b\Big((1 - q) \cdot \delta_0+ q \cdot (1 - \delta_1)\Big).
 \end{align}

%%%%%%%%%%%%%%%%%%%%%%%%%%%%%%%%%%%%%%%%%%%%
\subsection{Optimal splitting with three posteriors}

Since the cardinality bound  is $ |\mc{W}| =\min\big(|\mc{U}|+1, |\mc{V}|^{|\mc{Z}|}\big) =3$, we consider a splitting of the prior $p_0$ in three posteriors $(q_1,q_2,q_3)\in [0,1]^3$ with respective weights $(\lambda_1,\lambda_2,\lambda_3)\in [0,1]^3$, defined by \eqref{eq:SystemEquation1},  \eqref{eq:SystemEquation2}. 
\begin{align}
1 =&  \lambda_1  +  \lambda_2 +  \lambda_3,\label{eq:SystemEquation1}\\
p_0 =& \lambda_1 \cdot q_1 +  \lambda_2 \cdot q_2 +  \lambda_3 \cdot q_3,\label{eq:SystemEquation2}\\
H(U|Z) - C =& \lambda_1 \cdot h(q_1) +  \lambda_2 \cdot h(q_2) +  \lambda_3 \cdot h(q_3),\label{eq:SystemEquation3}
\end{align}
Equation \eqref{eq:SystemEquation3} is satisfied when the information constraint is binding. By inverting the system  \eqref{eq:SystemEquation1}-\eqref{eq:SystemEquation3}, we obtain
\begin{align}
\lambda_1 =& \frac{ \big(H(U|Z) - C\big) \cdot (q_2  -  q_3) + h(q_2)\cdot (q_3 - p_0) +  h(q_3)\cdot (p_0 - q_2 )  }{ h(q_1) \cdot (q_2  -  q_3)   + h(q_2) \cdot  (q_3 - q_1)   + h(q_3) \cdot (q_1 - q_2 ) } , \label{eq:Lambda1}\\
\lambda_2 =& \frac{\big(H(U|Z) - C\big) \cdot (q_3  - q_1)  + h(q_3) \cdot (q_1  - p_0) + h(q_1) \cdot(p_0  - q_3)}{ h(q_1) \cdot (q_2  -  q_3)   + h(q_2) \cdot  (q_3 - q_1)   + h(q_3) \cdot (q_1 - q_2 ) } , \label{eq:Lambda2}\\
\lambda_3 =&  \frac{\big(H(U|Z) - C\big) \cdot (q_1 - q_2) + h(q_1) \cdot (q_2  -  p_0)   + h(q_2) \cdot  (p_0 - q_1) }{ h(q_1) \cdot (q_2  -  q_3)   + h(q_2) \cdot  (q_3 - q_1)   + h(q_3) \cdot (q_1 - q_2 ) } .\label{eq:Lambda3}
\end{align}
The triple of posteriors $(q_1,q_2,q_3)$ is feasible if and only if  the weights $(\lambda_1,\lambda_2,\lambda_3)$ belong to the interval $[0,1]^3$. The average distortion $\Psi_{\textsf{e}}(q)$ is piece-wise linear, hence the optimal triple of posteriors may belong to distinct intervals $q_1\in [0, \nu_1)$, $q_2\in [\nu_1,\nu_2)$, $q_3\in [\nu_2,1]$. Otherwise, take the average of two posteriors of the same interval which provides the same distortion value and has a larger entropy, due to the strict concavity of entropy function. 

Aumann and Maschler's splitting lemma \cite{AM95} or Kamenica and Gentzkow's Bayes plausibility \cite{KamenicaGentzkow11} claim that the splitting $p_0 = \lambda_1 \cdot q_1 +  \lambda_2 \cdot q_2 + \lambda_3 \cdot q_3$  is implemented by the following strategy \begin{align}
\QQ(w_k|u_0) =&  \QQ(w_k) \cdot \frac{1-\QQ(u_1|w_k)}{1-\PP(u_1)} =  \lambda_k \cdot \frac{1-q_k}{1-p_0} =:\alpha_k , \qquad k \in \{,1,2,3\}\\
\QQ(w_k|u_1) =& \QQ(w_k) \cdot  \frac{\QQ(u_1|w_k)}{\PP(u_1)} = \lambda_k \cdot \frac{q_k}{p_0} =:\beta_k , \qquad k \in \{,1,2,3\}.
\end{align}

\begin{figure}[!ht]
\begin{center}
\psset{xunit=5cm,yunit=5cm}
\begin{pspicture}(0,-0.45)(1,1)
\psline{->}(0,-0.1)(0,1.1)
\psline{->}(-0.1,0)(1.1,0)
\psline{-}(1,-0.1)(1,1)
\rput[u](-0.03,-0.03){$0$}
\rput[l](-0.05,1){$1$}
\rput[u](1.05,-0.05){$1$}
\rput[u](1.25,-0.05){$q$}
\psdots(0.5,0)
\psline[linestyle=dashed](0.5,0)(0.5,0.8813)
\rput[l]{-45}(0.5,-0.05){$q_2 =  p_0 = 0.5$}
%%%
\psdots(0.3,0)(0.7,0)
\psline[linestyle=dashed](0.3,0)(0.3,0.3)
\psline[linestyle=dashed](0.7,0)(0.7,0.3)
\rput[l]{-45}(0.69,-0.05){$ \nu_1 = 0.7$}
\rput[l]{-45}(0.31,-0.05){$ \nu_2 = 0.3$}
%%%
\psline[linestyle=dotted,linecolor=black](0, 0.3) (1,0.3) 
\rput[r](-0.02,0.3){$\delta_0=\delta_1=0.3$}
\psline[linecolor=orange, linewidth=3.5pt](0,0)(0.3, 0.3)
\psline[linecolor=orange, linewidth=3.5pt](1,0)(0.7, 0.3)
\psline[linecolor=orange, linewidth=3.5pt](0.3, 0.3)(0.7, 0.3)
%%%
\PlotConditionalEntropy{0.5}{0.3}{0.3}
%%%		see the MATLAB file compute_posterior.m
\psline[linecolor=black](0,0.6813)(1, 0.6813)
\psdots[linecolor=black](0.5, 0.6813)(0,0.6813)(0, 0.3)(1,0.5375)
\rput[r](-0.02,0.6813){$H(U|Z) - C = 0.6813$}
%%%  (0.3205,0.1942) (0.93, 0.07)
\psline[linestyle=dashed,linecolor=black](0.5, 0.2098)(1, 0.2098)
\psline[linestyle=dashed,linecolor=black](0.8813, 0.5375)(1,0.5375)
\rput[l](1.04,0.5375){$h(q^{\star}) = 0.5375$}

\psdots[linecolor=red,dotscale=1.5](0.145, 0.145)(0.855,0.145) (0.5,0.3)
\psdots[linecolor=red](0.145, 0.5375)(0.5, 0.8813)(0.855, 0.5375)
\psline[linestyle=dotted,linecolor=red](0.145, 0.5375)(0.145, 0)
\psline[linestyle=dotted,linecolor=red](0.855, 0.5375)(0.855, 0)
\psline[linestyle=dotted,linecolor=red](0.145, 0.5375)(0.5, 0.8813)(0.855, 0.5375)(0.145, 0.5375)
\rput[l]{-45}(0.855,-0.05){$q_3=1 -q^{\star} = 0.855$}
\rput[l]{-45}(0.145,-0.05){$q_1 =q^{\star}= 0.145$}
\psdots[linecolor=black](0.145, 0)(0.855, 0)

\psline[linestyle=dashed,linecolor=red](0.145, 0.145)(0.855,0.145) (0.5,0.3)(0.145, 0.145)

\psline[linecolor=red](0.145, 0.5375)(0.5, 0.8813)(0.855, 0.5375)(0.145, 0.5375)
\psdots[linecolor=red](0.5, 0.2098)
\psdots[linecolor=black](1, 0.2098)
\rput[l](1.02,0.2098){$D_{\textsf{e}}^{\star} = 0.2098$}
\end{pspicture}
\caption{Wyner-Ziv's example for DSBS. When $C\in[0,H(U|Z) - h(q^{\star}) ]$ the optimal splitting is $(q_1,q_2,q_3)=(q^{\star},\frac12,1-q^{\star})$. For the parameters $p_0 = 0.5$, $\delta_0=\delta_1=0.3$, $C = 0.2$, $\kappa=0$, the optimal encoder's distortion is $D_{\textsf{e}}^{\star} = 0.2098$.}
\label{fig:DSBS_02}
\end{center}
\end{figure}

\begin{figure}[!ht]
\begin{center}
\psset{xunit=5cm,yunit=5cm}
\begin{pspicture}(0,-0.3)(1,1)
\psline{->}(0,-0.1)(0,1.1)
\psline{->}(-0.1,0)(1.1,0)
\psline{-}(1,-0.1)(1,1)
\rput[u](-0.03,-0.03){$0$}
\rput[l](-0.05,1){$1$}
\rput[u](1.05,-0.05){$1$}
\rput[u](1.25,-0.05){$q$}
\psdots(0.5,0)
\psline[linestyle=dashed](0.5,0)(0.5,0.8813)
\rput[l]{-45}(0.5,-0.05){$q_2 =  p_0 = 0.5$}
%%%
\psdots(0.3,0)(0.7,0)
\psline[linestyle=dashed](0.3,0)(0.3,0.3)
\psline[linestyle=dashed](0.7,0)(0.7,0.3)
\rput[l]{-45}(0.69,-0.05){$ \nu_1 = 0.7$}
\rput[l]{-45}(0.31,-0.05){$ \nu_2 = 0.3$}
%%%
\psline[linestyle=dotted,linecolor=black](0, 0.3) (1,0.3) 
\rput[r](-0.03,0.3){$\delta_0=\delta_1=0.3$}
\psline[linecolor=orange, linewidth=3.5pt](0,0)(0.3, 0.3)
\psline[linecolor=orange, linewidth=3.5pt](1,0)(0.7, 0.3)
\psline[linecolor=orange, linewidth=3.5pt](0.3, 0.3)(0.7, 0.3)
%%%
\PlotConditionalEntropy{0.5}{0.3}{0.3}
%%%		see the MATLAB file compute_posterior.m
\psline[linecolor=black](0,0.4813)(1, 0.4813)
\psdots[linecolor=black](0,0.4813)(0, 0.3)(1,0.5375)(0.5, 0.8813)(0.145, 0.5375)(0.855, 0.5375)
\rput[r](-0.03,0.4813){$H(U|Z) - C = 0.4813$}
%%%  (0.3205,0.1942) (0.93, 0.07)
\psline[linestyle=dashed,linecolor=black](0.1212, 0.1212)(1, 0.1212)
\psline[linestyle=dashed,linecolor=black](0.855, 0.5375)(1,0.5375)
\psline[linecolor=black](0.145, 0.5375)(0.855, 0.5375)(0.5,0.8813)(0.145, 0.5375)
\rput[l](1.04,0.5375){$h(q^{\star}) = 0.5375$}
%%%
\psdots[linecolor=red,dotscale=1.5](0.1212, 0.1212)(0.8788,0.1212) 
\psdots[linecolor=red](0.1211,  0.4813)(0.8788,  0.4813)
\psline[linestyle=dotted,linecolor=red](0.1212, 0.4813)(0.1212, 0)
\psline[linestyle=dotted,linecolor=red](0.8788, 0.4813)(0.8788, 0)
\psline[linestyle=dotted,linecolor=black](0.145, 0.5375)(0.145, 0)
\psline[linestyle=dotted,linecolor=black](0.855, 0.5375)(0.855, 0)
\rput[l]{-45}(0.8,-0.05){$1-q^{\star} = 0.855$}
\rput[l]{-45}(0.2,-0.05){$q^{\star} = 0.145$}
\rput[l]{-45}(0.9,-0.05){$q_3 = 0.8788$}
\rput[l]{-45}(0.1,-0.05){$q_1 = 0.1212$}
\psdots[linecolor=black](0.145, 0)(0.855, 0)(0.1212, 0)(0.8788, 0)
%%%
\psline[linecolor=red](0.1212,  0.4813)(0.8788,  0.4813)
\psdots[linecolor=black](1, 0.1212)
\rput[l](1.02,0.1212){$D_{\textsf{e}}^{\star} = 0.1212$}
\end{pspicture}
\caption{Wyner-Ziv's example for DSBS. When $C\in[H(U|Z) - h(q^{\star}), H(U|Z)]$ the optimal splitting has two posteriors $(q_1,q_3)=\Big(h^{-1}\big(H(U|Z)-C\big),1-h^{-1}\big(H(U|Z)-C\big)\Big)$. For parameters $p_0 = 0.5$, $\delta_0=\delta_1=0.3$, $C = 0.4$, $\kappa=0$, the optimal encoder's distortion is $D_{\textsf{e}}^{\star} =  0.1212$.}
\label{fig:DSBS_04}
\end{center}
\end{figure}

%%%%%%%%%%%%%%%%%%%%%%%%%%%%%%%%%%
%%%%%%%%%%%%%%%%%%%%%%%%%%%%%%%%%%
\subsection{Wyner-Ziv's example for DSBS with $p_0 = 0.5$, $\delta_0=\delta_1=0.3$, $\kappa=0$}\label{sec:TwoPosteriorsNoCost}

We investigate the example of \textit{doubly symmetric binary source} (DSBS) with $\delta_0=\delta_1=0.3$ whose solution is characterized in \cite[Sec. II, pp. 3]{wyner-it-1976}. In this example, both encoder and decoder minimize the Hamming distortion, hence $\kappa=0 \Longleftrightarrow \gamma=\frac12$. We introduce the notation $q\star \delta := (1 - q) \cdot \delta+ q \cdot (1 - \delta)$, the average distortion and average entropy write
\begin{align}
\Psi_{\textsf{e}}(q) =& q \cdot  \UN\big(q \leq \delta \big) + \delta\cdot  \UN\big(\delta<q \leq1- \delta\big) +(1-q) \cdot  \UN\big(q > 1-\delta\big),\\
h(q) =&  H(U|Z) + H_b(q)  - H_b\big(q\star \delta\big),
\end{align}
We remark that $H(U|Z) - h(q) = H_b\big(q \star \delta\big) -H_b(q)$. 

\begin{figure}[!ht]
\begin{center}
\psset{xunit=5cm,yunit=5cm}
\begin{pspicture}(0,-0.2)(1,1.1)
\psline{->}(0,-0.1)(0,1.1)
\psline{->}(-0.1,0)(1.1,0)
\psline{-}(1,-0.1)(1,1)
\rput[u](-0.03,-0.03){$0$}
\rput[r](-0.04,1){$C$}
\rput[u](1.05,-0.05){$1$}
\rput[u](1.25,-0.05){$D_{\textsf{e}}^{\star}$}
\psdots(0.5,0)
\rput[l]{-45}(0.5,-0.05){$p_0 = 0.5$}
%%%
\PlotGq{0.5}{0.3}{0.3}
%%%		
\psline[linestyle=dashed,linecolor=black](0, 0.3438)(0.145, 0.3438)(0.145, 0)
\rput[l]{-45}(0.145,-0.05){$q^{\star}= 0.145$}
\rput[l]{-45}(0.3,-0.05){$\delta= 0.3$}
\psdots[linecolor=black](0.145, 0)(0.145, 0.3438)(0.3, 0)(0,0.8813)(0, 0.3438)(0.3,0.1002)(1,0.1002)
\psline[linecolor=blue](0.145, 0.3438)(0.3, 0)
\rput[r](-0.04,0.8813){$H(U|Z) = 0.8813$}
\rput[r](-0.04,0.3438){$H(U|Z) - h(q^{\star}) = 0.3438$}
%\rput[l](0.2,0.6){$\textcolor[rgb]{0,0,1}{\min_{\QQ_{W|U},\atop \E[d_{\textsf{e}}]=D_{\textsf{e}}}I(U;W|Z)}$}
%\psline[linecolor=blue]{->}(0.2,0.6)(0.2,0.3)
%%%
\psline[linestyle=dotted,linecolor=black](0.3,0)(0.3,0.1002)(1,0.1002)
\rput[l](1.04,0.1002){$H(U|Z) - h(\delta)  = 0.1002$}
\end{pspicture}
\caption{Optimal trade-off between the capacity $C$ and the optimal distortion $D_{\textsf{e}}^{\star}$ for the DSBS with parameters $p_0 = 0.5$, $\delta_0=\delta_1=0.3$, $\kappa=0$.}
\label{fig:DSBS_TradeOff}
\end{center}
\end{figure}

\begin{proposition}\label{prop:WynerZivDSBS}
We denote by $q^{\star}$ the unique solution of 
\begin{align}
h'(q) = \frac{H(U|Z) - h(q)}{\delta-q} .
\end{align}

1) If $C\in[0,H(U|Z) - h(q^{\star}) ]$, then the optimal splitting (Fig. \ref{fig:DSBS_02}) has three posterior beliefs\\\\
\begin{tabular}{|l|l|l|}
\hline
$q_1 = q^{\star}$ & $q_2 = \frac12 $ & $q_3 = 1-q^{\star}$\\
\hline
$\lambda_1=  \frac12 \cdot \frac{C}{H(U|Z) - h(q^{\star})}$& $\lambda_2=  1- \frac{C}{H(U|Z) - h(q^{\star})}$& $\lambda_3 =  \frac12 \cdot \frac{C}{H(U|Z) - h(q^{\star})}$\\
\hline
\end{tabular}\\\\
corresponding to the strategies\\\\
\begin{tabular}{|l|l|l|}
\hline
$\alpha_1 =  (1 - q^{\star}) \cdot \frac{C}{H(U|Z) - h(q^{\star})}$& $\alpha_2 =  1- \frac{C}{H(U|Z) - h(q^{\star})}$& $\alpha_3 =  q^{\star} \cdot \frac{C}{H(U|Z) - h(q^{\star})}$\\
\hline
$\beta_1 =  q^{\star}  \cdot \frac{C}{H(U|Z) - h(q^{\star})}$& $\beta_2 =  1- \frac{C}{H(U|Z) - h(q^{\star})}$& $\beta_3 =  (1 - q^{\star}) \cdot \frac{C}{H(U|Z) - h(q^{\star})}$\\
\hline
\end{tabular}\\\\
and the optimal distortion is
\begin{align}
D_{\textsf{e}}^{\star} =& \delta - C\cdot \frac{ \delta - q^{\star}  }{H(U|Z) - h(q^{\star})} 
\end{align}
2) If $C\in[H(U|Z) - h(q^{\star}), H(U|Z)]$, then the optimal splitting (Fig. \ref{fig:DSBS_04}) has two posterior beliefs\\\\
\begin{tabular}{|l|l|l|}
\hline
$q_1 = h^{-1}\big(H(U|Z)-C\big)$ & $q_2 = \frac12 $ & $q_3 = 1-h^{-1}\big(H(U|Z)-C\big)$\\
\hline
$\lambda_1=  \frac12 $& $\lambda_2=  0$& $\lambda_3 =  \frac12$\\
\hline
\end{tabular}\\\\
corresponding to the strategies\\\\
\begin{tabular}{|l|l|l|}
\hline
$\alpha_1 =  1 - h^{-1}\big(H(U|Z)-C\big)$& $\alpha_2 =  0$& $\alpha_3 =  h^{-1}\big(H(U|Z)-C\big)$\\
\hline
$\beta_1 =  h^{-1}\big(H(U|Z)-C\big)$& $\beta_2 =  0$& $\beta_3 =  1 - h^{-1}\big(H(U|Z)-C\big)$\\
\hline\hline
\end{tabular}\\\\
and the optimal distortion is
\begin{align}
D_{\textsf{e}}^{\star} =& h^{-1}\big(H(U|Z)-C\big),
\end{align}
where the notation $h^{-1}\big(H(U|Z)-C\big)$ stands for the unique solution of equation $h(q)=H(U|Z)-C$.\\\\
3) If $C>H(U|Z)$, then the optimal splitting rely on the two extreme posterior beliefs $(0,1)$ and $D_{\textsf{e}}^{\star} =0$.
\end{proposition}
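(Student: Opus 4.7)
The plan is to apply Theorem \ref{theo:Concavification} directly, exploiting the full symmetry of the problem: with $p_0=1/2$, $\delta_0=\delta_1=\delta$, and $\kappa=0$, both $\Psi_{\textsf{e}}$ and $h$ are symmetric around $1/2$, and $\Psi_{\textsf{e}}$ is the piecewise-linear tent function given by \eqref{eq:AverageFunctionPsi1}. First, I would show by symmetrization that one can restrict attention to splittings of the form $(q_1,\tfrac12,1-q_1)$ with weights $(\mu/2,\,1-\mu,\,\mu/2)$ and $q_1\in[0,\delta]$. The argument is that the objective and constraints in \eqref{eq:SplittingFormulation} are linear functionals of the measure $\sum_w \lambda_w \delta_{p_w}$ on $\Delta(\mc{U})$, and that the invariance $q\mapsto 1-q$ preserves the mean constraint, the entropy constraint, and the value of $\Psi_{\textsf{e}}$; averaging an optimal splitting with its reflection therefore yields a symmetric optimal splitting, and Caratheodory's cardinality bound $|\mc{W}|\le 3$ leaves only the announced form. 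Since $\Psi_{\textsf{e}}$ is constant on $[\delta,1-\delta]$, posteriors in $(\delta,1/2)$ can be moved to $\delta$ or to $1/2$ without changing the objective while weakly loosening the entropy constraint, so $q_1\in[0,\delta]$ and $\Psi_{\textsf{e}}(q_1)=q_1$.

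Next, with this parametrization, the mean constraint is automatic, the entropy constraint reads $\mu\bigl(H(U|Z)-h(q_1)\bigr)=C$ (binding at the optimum), and the objective becomes
\begin{align}
F(q_1,\mu) \;=\; \mu\, q_1 + (1-\mu)\,\delta \;=\; \delta - \mu\,(\delta - q_1).
\end{align}
Substituting $\mu=C/\bigl(H(U|Z)-h(q_1)\bigr)$, the problem reduces to the one-dimensional minimization
\begin{align}
\min_{q_1\in[0,\delta]}\;\delta - C\cdot\frac{\delta-q_1}{H(U|Z)-h(q_1)}\quad\text{subject to}\quad \mu\in[0,1].
\end{align}
The first-order condition differentiated through the quotient rule is exactly $h'(q_1) = \bigl(H(U|Z)-h(q_1)\bigr)/(\delta-q_1)$, which characterizes $q^{\star}$. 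Geometrically, $q^{\star}$ is the point at which the tangent to the concave graph of $h$ passes through $(\delta,H(U|Z))$; concavity of $h$ (Lemma \ref{lemma:Concavity}) together with $h(\delta)<h(1/2)=H(U|Z)$ guarantees existence and uniqueness of such a tangent point in $[0,\delta]$.

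The three cases then arise from whether the unconstrained stationary point $q^{\star}$ yields a feasible weight $\mu\le 1$. In Case 1, $C\le H(U|Z)-h(q^{\star})$ gives $\mu=C/(H(U|Z)-h(q^{\star}))\in[0,1]$, so $q_1=q^{\star}$ is optimal and the announced weights $(\lambda_1,\lambda_2,\lambda_3)$ and distortion $D_{\textsf{e}}^{\star}=\delta - C(\delta-q^{\star})/(H(U|Z)-h(q^{\star}))$ follow by direct substitution. In Case 2, $C>H(U|Z)-h(q^{\star})$ forces $\mu=1$ (i.e.\ $\lambda_2=0$), and the binding entropy constraint reduces to $h(q_1)=H(U|Z)-C$, which has a unique root $q_1=h^{-1}(H(U|Z)-C)\in[0,q^{\star}]\subset[0,\delta]$ by monotonicity of $h$ on $[0,1/2]$; hence $D_{\textsf{e}}^{\star}=q_1$. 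In Case 3, $C>H(U|Z)$ makes the entropy constraint vacuous, the extreme symmetric splitting $q_1=0,\,q_3=1$ with $\lambda_1=\lambda_3=1/2$ satisfies all constraints and yields $\Psi_{\textsf{e}}(0)=\Psi_{\textsf{e}}(1)=0$. The encoding strategies $(\alpha_k,\beta_k)$ are obtained from the splitting lemma \cite{AM95} via $\alpha_k=2\lambda_k(1-q_k)$ and $\beta_k=2\lambda_k q_k$.

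The main obstacle is the rigorous symmetrization-plus-Caratheodory reduction to three symmetric atoms (it must be executed on the level of measures on $\Delta(\mc{U})$ rather than tuples), together with the verification that the stationary point $q^{\star}$ is indeed a global minimizer of the quotient on $[0,\delta]$. The latter is handled by noting that, on $[0,\delta]$, the function $q_1\mapsto (\delta-q_1)/(H(U|Z)-h(q_1))$ is the slope of the chord joining $(q_1,h(q_1))$ to $(\delta,H(U|Z))$; by concavity of $h$ this chord slope is strictly quasi-concave in $q_1$, so its unique critical point is a maximum, making $q^{\star}$ the unique minimizer of $F$ on the interior of the feasible region.
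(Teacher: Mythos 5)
Your proposal is correct and follows essentially the same route as the paper's proof: reduction by symmetry to splittings $(q_1,\tfrac12,1-q_1)$, elimination of the weights through the binding information constraint, one-dimensional minimization of $\Phi_{\textsf{e}}(q_1)=\delta - C\,(\delta-q_1)/\big(H(U|Z)-h(q_1)\big)$ whose first-order condition defines $q^{\star}$, and a feasibility analysis ($\mu\leq 1$, i.e. $q_1\leq h^{-1}\big(H(U|Z)-C\big)$) yielding the three regimes. The only differences are presentational: your reflection-averaging symmetrization and chord-slope quasi-concavity argument replace the paper's explicit check that $k(q)=H(U|Z)-h(q)-h'(q)(\delta-q)$ is strictly increasing with a single sign change (and the quantity $(\delta-q)/\big(H(U|Z)-h(q)\big)$ you call the chord slope is in fact its reciprocal, which affects nothing).
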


The proof of Proposition \ref{prop:WynerZivDSBS} is given in the Appendix \ref{sec:ProofPropDSBS}. When $C\leq H(U|Z) - h(q^{\star})$, the optimal strategy consists of a time-sharing between the operating point $(D_{\textsf{e}}^{\star} ,C)=\big(q^{\star},H(U|Z)- h(q^{\star})\big)$ and the zero rate point $(\delta,0)$, as depicted in Fig. \ref{fig:DSBS_TradeOff}.

%%%%%%%%%%%%%%%%%%%%%%%%%%%%%%%%%%%%%%%%%%%%
\subsection{Distinct distortions without side information, $p_0 = 0.5$, $\delta_0 = 0.5$, $\delta_1 = 0.5$, $C = 0.2$, $\kappa = \frac34$}

\begin{figure}[!ht]
\begin{center}
\psset{xunit=5cm,yunit=5cm}
\begin{pspicture}(0,-0.3)(1,1.1)
\psline{->}(0,-0.1)(0,1.1)
\psline{->}(-0.1,0)(1.1,0)
\psline{-}(1,-0.1)(1,1.1)
\rput[u](-0.03,-0.03){$0$}
\rput[l](1.05,1){$1$}
\rput[l](-0.05,1){$1$}
\rput[u](1.05,-0.05){$1$}
\rput[u](1.25,-0.05){$q$}
\psline[linestyle=dotted,linecolor=black](0, 0)(1,1)
\psline[linestyle=dotted,linecolor=black](0, 1)(1,0)
\psline(0.5,0)(0.5,1)
\rput[l]{-45}(0.5,-0.05){$ p_0 = 0.5$}
\psdots(0.5,0)
%%%
\psline[linecolor=orange, linewidth=3.5pt](0,0)(0.875,0.875)
\psline[linecolor=orange, linewidth=3.5pt](0.875,0.125)(1,0)
\rput[l]{-45}(0.875,-0.05){$\textcolor[rgb]{0,0,0}{q_2 = \gamma = 0.875}$}
\psdots(0.875 ,0)
%%%
\psplot[plotpoints=100]{0.001}{0.999}{x ln 2 ln div x mul neg 1 x neg add ln 2 ln div 1 x neg add mul neg add}
\psline[linecolor=black](0,0.8)(1, 0.8)
\psdots[linecolor=black](0.5, 0.8)(0,0.8)
\rput[r](-0.02,0.8){$H(U) - C$}
%%%
\psdots[linecolor=red,dotscale=1.5](0.875, 0.125)
\psdots[linecolor=red](0.875, 0.5435)
\psdots[linecolor=red](0.3308,  0.9157)
\psline[linecolor=red](0.3308,  0.9157)(0.875, 0.5435)
\psline[linestyle=dotted,linecolor=red](0.875,0)(0.875,0.875)
\psline[linestyle=dotted,linecolor=red](0.3308,0)(0.3308,  0.9157)
\psdots[linecolor=red,dotscale=1.5](0.3308,  0.3308)
\rput[l]{-45}(0.3308,-0.05){$q_1 = 0.3308$}
\psline[linestyle=dashed,linecolor=red](0.3308,  0.3308)(0.875, 0.125)
\psline[linestyle=dashed,linecolor=black](0.5, 0.2668)(1, 0.2668)
\psdots[linecolor=red](0.5, 0.2668)
\psdots[linecolor=black](1, 0.2668)(0.3308 ,0)%(0,1)(1,1)
\rput[l](1.02,0.2668){$D_{\textsf{e}}^{\star} = 0.2668$}
\end{pspicture}
\caption{For the parameters $p_0 = 0.5$, $\delta_1 = \delta_2 = 0.5$, $C = 0.2$, $\kappa = \frac34$, the optimal encoder's distortion $D_{\textsf{e}}^{\star} = 0.2668$.}
\label{fig:EncoderUtility_0.5}
\end{center}
\end{figure}

We consider that the parameters $\delta_1 = \delta_2 = 0.5$ so that the side information $Z$ is independent of the source $U$. This corresponds to the problem studied in  \cite{LeTreustTomala19}, when replacing the minimization by the maximization. We have $H_b(\delta_0) = H_b(\delta_1)=H_b\Big((1 - q) \cdot \delta_0+ q \cdot (1 - \delta_1)\Big)=1$ and $\nu_1 = \nu_2 = \gamma = \frac78$, the average entropy and average distortion write
\begin{align}
h(q) =& H_b(q),\\
\Psi_{\textsf{e}}(q) = &\psi_{\textsf{e}}(q) = p \cdot  \UN\big(p\leq \gamma\big) + (1-p) \cdot  \UN\big(p >\gamma\big).
\end{align}
Applying \cite[Corollary 3.5, pp. 15]{LeTreustTomala19}, the optimal splitting has two posteriors $|\mc{W}|=2$ and must satisfy the information constraint
\begin{align}
&\frac{p_0 - q_2}{q_1-q_2}\cdot H_b(q_1) +  \frac{q_1 - p_0}{q_1-q_2} \cdot H_b(q_2) \geq H(U) - C\label{eq:IC-noSide}.
\end{align}
The distortion function has two piece-wise linear components, hence the optimal splitting involves $q_1\in[0,p_0]$ and $q_2\in[\gamma,1]$. For each $q_2\in[\gamma,1]$, we denote by $q_1(q_2)$ the function that returns the posterior which satisfies \eqref{eq:IC-noSide} with equality. From \cite[Fig. 5, pp. 19]{LeTreustTomala19}, the function $q_1(q_2)$ is strictly increasing, hence its derivative $q_1'(q_2)$ is strictly positive. The encoder's distortion function reformulates in terms of $q_2$ as
\begin{align}
\Phi_{\textsf{e}}(q_2)=&\frac{p_0 - q_2}{q_1(q_2)-q_2}\cdot q_1(q_2) +  \frac{q_1 - p_0}{q_1(q_2)-q_2} \cdot(1-q_2).
\end{align}
Its derivative writes
\begin{align}
\Phi_{\textsf{e}}'(q_2)=&\frac{1}{(q_1(q_2)-q_2)^2}\cdot \bigg( q_1'(q_2)\cdot  \Big(q_2 \cdot (2 \cdot q_2 - 1) +p_0\Big) - (p_0 - q_1(q_2)) \cdot (1 - 2 \cdot q_1(q_2)) \bigg).
\end{align}
Since $q_2\geq\gamma>\frac12$, the sign of the derivative is negative if and only if
\begin{align}
0< q_1'(q_2)\leq \frac{(p_0 - q_1(q_2)) \cdot (1 - 2 \cdot q_1(q_2))}{q_2 \cdot (2 \cdot q_2 - 1) +p_0}.
\end{align}
By numerical optimization, the above inequality is satisfied for $p_0 = 0.5$, $\delta_1 = \delta_2 = 0.5$, $C = 0.2$, $\kappa = \frac34$, hence the optimal distortion is achieved by using $q_2 = \gamma$, as depicted on Fig. \ref{fig:EncoderUtility_0.5}.

%%%%%%%%%%
%\newpage
%\begin{proof}[Derivative]
%\begin{align}
%\Phi_{\textsf{e}}(q_2)=&\frac{p_0 - q_2}{q_1(q_2)-q_2}\cdot q_1(q_2) +  \frac{q_1 - p_0}{q_1(q_2)-q_2} \cdot(1-q_2)\\
%=&\frac{1}{q_1(q_2)-q_2}\cdot \Big( - 2 \cdot q_1(q_2) \cdot q_2 + q_1(q_2)\cdot (p_0  +1) + q_2 \cdot p_0 - p_0\Big).
%\end{align}
%the derivative of the numerator is
%\begin{align}
%&\Big[ - 2 \cdot q_1(q_2) \cdot q_2 + q_1(q_2)\cdot (p_0  +1) + q_2 \cdot p_0 - p_0\Big]' \\
%=&  - 2 \cdot \big(q_1'(q_2)\cdot q_2+ q_1(q_2)\big) + q_1'(q_2)\cdot (p_0  +1) + p_0 \\
%=& q_1'(q_2)\cdot ( p_0  +1 - 2 \cdot  q_2)  - 2 \cdot q_1(q_2) +  p_0 .
%\end{align}
%Its derivative write
%\begin{align}
%\Phi_{\textsf{e}}'(q_2)=&\frac{1}{(q_1(q_2)-q_2)^2}\cdot \bigg( \\
%&\Big( q_1'(q_2)\cdot ( p_0  +1 - 2 \cdot  q_2)  - 2 \cdot q_1(q_2) +  p_0 \Big)\cdot (q_1(q_2)-q_2)\\
%& -  (q_1'(q_2)-1) \cdot\Big( - 2 \cdot q_1(q_2) \cdot q_2 + q_1(q_2)\cdot (p_0  +1) + q_2 \cdot p_0 - p_0\Big) \bigg)\\
%=&\frac{1}{(q_1(q_2)-q_2)^2}\cdot \bigg( \\
%&q_1'(q_2)\cdot \bigg( \Big(p_0  +1 - 2 \cdot  q_2\Big)\cdot (q_1(q_2)-q_2)  + 2 \cdot q_1(q_2) \cdot q_2 - q_1(q_2)\cdot (p_0  +1) - q_2 \cdot p_0 + p_0 \bigg)\\
%& - 2 \cdot q_1(q_2)^2 + 2 \cdot q_1(q_2) \cdot q_2 + p_0 \cdot (q_1(q_2)-q_2) - 2 \cdot q_1(q_2) \cdot q_2 + q_1(q_2)\cdot (p_0  +1) + q_2 \cdot p_0 - p_0\\
%=&\frac{1}{(q_1(q_2)-q_2)^2}\cdot \bigg( q_1'(q_2)\cdot  \Big(q_2 \cdot (2 \cdot q_2 - 1) +p_0\Big) - (p_0 - q_1(q_2)) \cdot (1 - 2 \cdot q_1(q_2)) \bigg)
%\end{align}
%\end{proof}

\begin{figure}[!ht]
\begin{center}
\psset{xunit=5cm,yunit=5cm}
\begin{pspicture}(0,-0.4)(1,1.1)
\psline{->}(0,-0.1)(0,1.1)
\psline{->}(-0.1,0)(1.1,0)
\psline{-}(1,-0.1)(1,1.1)
\rput[u](-0.03,-0.03){$0$}
%\rput[l](1.05,1){$1$}
\rput[l](-0.05,1){$1$}
\rput[u](1.05,-0.05){$1$}
\rput[u](1.17,-0.05){$q$}
\psdots(0.5,0)
\psline(0.5,0)(0.5,1)
\rput[l]{-45}(0.54,-0.05){$ p_0 = 0.5$}
%%%
\psdots(0.4178,0)(0.9301,0)
\psline(0.4178,0)(0.4178,1)
\psline(0.93,0)(0.9301,1)
\rput[l]{-45}(0.9,-0.05){$  q_3^{\star} = \nu_1 = 0.9301$}
\rput[l]{-45}(0.4178,-0.05){$  q_2^{\star} = \nu_2 = 0.4178$}
%%%
\psline[linecolor=orange, linewidth=3.5pt](0,0)(0.4178, 0.4178)
\psline[linecolor=orange, linewidth=3.5pt](1,0)(0.9301, 0.0699)
\psline[linecolor=orange, linewidth=3.5pt](0.4118,0.2353)(0.9301, 0.4685)
%%%
\PlotConditionalEntropy{0.5}{0.05}{0.5}
%%%		see the MATLAB file compute_posterior.m
\psline[linecolor=black](0,0.5947)(1, 0.5947)
\psdots[linecolor=black](0.5, 0.5947)(0,0.5947)(0.0715, 0)
\rput[l]{-45}(0.0715,-0.05){$ q_1^{\star} = 0.0715$}
\rput[r](-0.02,0.5947){$H(U|Z) - C = 0.5947$}
%%%
\psdots[linecolor=red,dotscale=1.5] (0.4118,0.2353)(0.0715, 0.0715)(0.9301, 0.0699)
\psdots[linecolor=red](0.4118,  0.7705)(0.0715, 0.2991)(0.9301, 0.3185)
\psline[linecolor=red](0.4118,  0.7705)(0.0715, 0.2991)(0.9301, 0.3185)(0.4118,  0.7705)
\psline[linecolor=red,linestyle=dotted](0.0715, 0.2991)(0.0715, 0)
\psline[linestyle=dashed,linecolor=red](0.9301, 0.0699)(0.4118,  0.2353)(0.0715, 0.0715)(0.9301, 0.0699)

\psline[linestyle=dashed,linecolor=black](0.5, 0.1721)(0, 0.1721)
\psdots[linecolor=red](0.5, 0.1721)
\psdots[linecolor=black](0, 0.1721)
\rput[r](-0.02,0.1721){$D_{\textsf{e}}^{\star} = 0.1721$}

\end{pspicture}
\caption{For the parameters $p_0 = 0.5$, $\delta_0 = 0.05$, $\delta_1 = 0.5$, $C = 0.2$, $\kappa = \frac34$, the optimal encoder's distortion is $D_{\textsf{e}}^{\star}=0.1721$.}
\label{fig:EncoderUtility}
\end{center}
\end{figure}

%%%%%%%%%%%%%%%%%%%%%%%%%%%%%%%%%%%%%%%%%%%%
\subsection{Distinct distortions with side information, $p_0 = 0.5$, $\delta_0 = 0.05$, $\delta_1 = 0.5$, $C = 0.2$, $\kappa = \frac34$}

We consider an example with distinct distortion functions, i.e. with $\kappa = \frac34$, with decoder's side information. By numerical simulation, we determine the optimal triple of posteriors $(q_1 ,q_2 ,q_3 )$ represented by the red dots in Fig. \ref{fig:EncoderUtility}, that corresponds to the minimal distortion $D_{\textsf{e}}^{\star}=0.1721$.\\\\
\begin{tabular}{|l|l|l|}
\hline
$q_1 = 0.0715$ & $q_2 = 0.4118$ & $q_3 = 0.9301$\\
\hline
$\lambda_1 =  0.1288$& $\lambda_2 = 0.6165$&$ \lambda_3 =  0.2548$\\
\hline
\end{tabular}\\

The parameters of the optimal strategy in Fig. \ref{fig:SignalingZ}, are given by\\\\
\begin{tabular}{|l|l|l|}
\hline
$\alpha_1 = 0.2392$ & $\alpha_2 = 0.7252$ & $\alpha_2 = 0.0356$\\
\hline
$\beta_1 =  0.0184$& $\beta_2 = 0.5077$&$ \beta_3 =  0.4739$\\
\hline
\end{tabular}

%For the parameters $p_0 = 0.5$, $\delta_0 = 0.05$, $\delta_1 = 0.5$, $C = 0.2$, $\kappa = \frac34$, the optimal encoder's distortion is $D_{\textsf{e}}^{\star}=0.1721$.  

%see the MATLAB files \texttt{ComputeUtilityThreePosteriors.m} for the discretization step is $5 \cdot 10^{-4}$ and $H(U|Z) = 0.7947$,  $\nu_2= 0.4118$, $\nu_1= 0.9301$, $q_1\in [0, \nu_1[$, $q_2\in [\nu_1,\nu_2[$, $q_3\in [\nu_2,1]$ and .

%
%%%%%%%%%%%%%%%%%%%%%%%%%%%%%%%%%%%%%%%%%%%%%%%%%%%%%%%%%%%%%%%%%%%%%%%%%%%%%%%%%%%%%%%%%%%%%%%%%%%%%%%%%%%%%%%%%
%
%
%\section{Conclusion}\label{sec:conclusion}
%
%XXX

%%%%%%%%%%%%%%%%%%%%%%%%%%%%%%%%%%%%%%%%%%%%%%%%%%%%%%%%%%%%%%%%%%%%%%%%%%%%%%%%%%%%%%%%%%%%%%%%%%%%%%%%%%%%%%%%

\appendices

%\newpage

%%%%%%%%%%%%%%%%%%%%%%%%%%%%%%%%%%%%%%%%%%%%%%%%%%%%%%%%%%%%%%%%%%%%%%%%%%%%%%%%%%%%%%%%%%%%%%%%%%%%%%%%%%%%%%%%%%%%%%%%%%%%%%%%%%%%%%%%%%%%%%%%%%%%%%%%%%%%%%%%%%%%%%%%%%%%%%%%%%%%%%%%%%%%%%%%%%%%%%%%%%%%%%%%%%%%%%%%%%%%%%%%%%%%%%%%%%%%%%%%%%%%%%%%%%%%%%%%%%%%%%%%%%

\section{Proof of Theorem \ref{theo:Concavification}}\label{sec:ProofConcavification}

We consider a joint probability distribution $\QQ_{UW}\in\Delta(\mc{U}\times \mc{W})$, we identify the parameters $\lambda_w =\QQ(w)$ and $p_w = \QQ_U(\cdot|w)\in \Delta(\mc{U})$. The average distortion writes
\begin{align}
\sum_{w} \lambda_w \cdot \Psi_{\textsf{e}}(p_w)  
=& \sum_{w} \QQ(w) \cdot \Psi_{\textsf{e}}\big(\QQ_U(\cdot|w)\big) \label{eq:ConvexificationDistortion1}\\
=& \sum_{w} \QQ(w) \cdot  \sum_{u,z}  \QQ(u|w)  \cdot \PP(z|u)\cdot  \psi_{\textsf{e}}\Big(z,\QQ_U(\cdot|w,z) \Big)\label{eq:ConvexificationDistortion2} \\
=& \sum_{w,z} \QQ(w) \cdot  \QQ(z|w) \cdot  \E_{\QQ_U(\cdot|w,z) } \bigg[d_{\textsf{e}}\Big(U,z,v^{\star}\big(z,\QQ_U(\cdot|w,z)\big)\Big)\bigg] \label{eq:ConvexificationDistortion3} \\
=& \E_{\QQ_{UZW} } \bigg[d_{\textsf{e}}\Big(U,Z,V^{\star}\big(Z,\QQ_U(\cdot|W,Z)\big)\Big)\bigg] \label{eq:ConvexificationDistortion4} \\
=& \max_{\QQ_{V|ZW}  \in  \atop \Q_2(\QQ_{UZW})} \E_{\QQ_{UZW} \atop \QQ_{V|ZW}} \bigg[d_{\textsf{e}}(U,Z,V)\bigg] .\label{eq:ConvexificationDistortion5}
\end{align}
Equations \eqref{eq:ConvexificationDistortion2}, \eqref{eq:ConvexificationDistortion3} and \eqref{eq:ConvexificationDistortion5} come from Definitions \ref{def:AverageUtility}, \ref{def:RobustUtility} and \ref{def:BestReply}.\\
Equations  \eqref{eq:ConvexificationDistortion1} and  \eqref{eq:ConvexificationDistortion4} are reformulations. 

The average entropy writes
\begin{align}
\sum_{w} \lambda_w \cdot h(p_w)
=& \sum_{w} \QQ(w) \cdot h\big(\QQ_U(\cdot|w)\big)\label{eq:ConvexificationEntropy1}\\
=& \sum_{w} \QQ(w) \cdot \bigg( H\big(\QQ_U(\cdot|w)\big) + \sum_{u} \QQ(u|w) \cdot H\Big(\PP_Z(\cdot|u)\Big) - H\Big( \sum_{u}  \QQ(u|w) \cdot\PP_Z(\cdot|u)\Big) \bigg)\label{eq:ConvexificationEntropy2}\\
=& \sum_{w} \QQ(w) \cdot \bigg( H\big(\QQ_U(\cdot|w)\big) + \sum_{u} \QQ(u|w) \cdot H\Big(\QQ_Z(\cdot|u,w)\Big) - H\Big( \QQ_Z(\cdot|w)\Big) \bigg)\label{eq:ConvexificationEntropy3}\\
=& H(U|W) + H(Z|U,W) - H(Z|W) = H(U|W,Z).\label{eq:ConvexificationEntropy4}
\end{align}
Equation \eqref{eq:ConvexificationEntropy2} come from Definition \ref{def:AverageUtility}.\\
Equation \eqref{eq:ConvexificationEntropy3} come from Markov chain property $Z  -\!\!\!\!\minuso\!\!\!\!-U    -\!\!\!\!\minuso\!\!\!\!-  W$ that implies $\PP_Z(\cdot|u)=\QQ_Z(\cdot|u,w)$ and $\QQ_Z(\cdot|w) =  \sum_{u}  \QQ(u|w) \cdot\PP_Z(\cdot|u)$.\\
Equations  \eqref{eq:ConvexificationEntropy1} and  \eqref{eq:ConvexificationEntropy4} are reformulations. 

Hence, equation \eqref{eq:SplittingFormulation} reformulates
\begin{align}
&\inf_{\lambda_w\in [0,1],\atop p_w\in \Delta(\mc{U})}\bigg\{ \sum_{w} \lambda_w \cdot \Psi_{\textsf{e}}(p_w)   \quad \text{ s.t. } \quad \sum_{w} \lambda_w \cdot p_w  = \PP_U \in \Delta(\mc{U}), \nonumber \\
&\qquad\qquad\qquad\qquad\qquad \text{ and }\quad \sum_{w} \lambda_w \cdot h(p_w)  \geq H(U|Z)  - \max_{\PP_X}I(X;Y) \bigg\}\label{eq:SplittingFormulation1}\\
=&\inf_{ \QQ_W , \QQ_{U|W}}\bigg\{ \max_{\QQ_{V|ZW}  \in  \atop \Q_2(\QQ_{UZW})} \E_{\QQ_{UZW} \atop \QQ_{V|ZW}} \Big[d_{\textsf{e}}(U,Z,V)\Big]   \quad \text{ s.t. } \quad \sum_{w} \QQ(w) \cdot \QQ_U(\cdot|w)  = \PP_U \in \Delta(\mc{U}),\nonumber\\
&\qquad\qquad\qquad\qquad\qquad \text{ and }\quad H(U|W,Z)  \geq H(U|Z)  - \max_{\PP_X}I(X;Y) \bigg\}\label{eq:SplittingFormulation2}\\
=&\inf_{ \QQ_{UZW} \in \Q_0} \max_{\QQ_{V|ZW}  \in  \atop \Q_2(\QQ_{UZW})} \E_{\QQ_{UZW} \atop \times \QQ_{V|ZW}} \bigg[d_{\textsf{e}}(U,Z,V)\bigg] = D_{\textsf{e}}^{\star}.\label{eq:SplittingFormulation7}
\end{align}
This concludes the proof of Theorem \ref{theo:Concavification}.

%%%%%%%%%%%%%%%%%%%%%%%%%%%%%%%%%%%%%%%%%%%%%%%%%%%%%%%%%%%%%%%%%%%%%%%%%%%%%%%%%%%%%%%%%%%%%%%%%%%%%%%%%%%%%%%%%%%%%%%%%%%%%%%%%%%%%%%%%%%%%%%%%%%%%%%%%%%%%%%%%%%%%%%%%%%%%%%%%%%%%%%%%%%%%%%%%%%%%%%%%%%%%%%%%%%%%%%%%%%%%%%%%%%%%%%%%%%%%%%%%%%%%%%%%%%%%%%%%%%%%%%%%%%%%%%%%%%%%%%%%%%%%%%%%%%%%%

\section{Achievability proof of Theorem \ref{theo:MaxMinStackelberg}}\label{sec:AchievabilityProof}

We refine the analysis of the Wyner-Ziv's source encoding scheme \cite{wyner-it-1976}, in order to control the posterior beliefs of a large fraction of stages, as stated in Proposition \ref{prop:WynerZivCoding} and in \eqref{eq:SetTalpha}-\eqref{eq:SetBalpha}. Corollary \ref{coro:AchievabilityProof} shows that the best-reply strategy of the decoder performs similarly as the Wyner-Ziv's decoding scheme.

%
%We show that The best-reply strategy of the decoder best-replies to the encoding strategy,  that returns  performs similarly as the Wyner-Ziv's decoding scheme.
%
%
%In this proof, we characterize the posterior beliefs induced by the Wyner-Ziv's source coding scheme \cite{wyner-it-1976}. We show that the best-reply decoder performs similarly as the Wyner-Ziv's decoding scheme.
%
%
% 
%
%
%, in order to The main novelty of this proof is Proposition \ref{prop:WynerZivCoding} which characterize the sequences of symbols for which the posterior beliefs are close to the target ones.
%
%
%that are typical 
%
%We characterize the posterior beliefs induced by the Wyner-Ziv's source coding scheme \cite{wyner-it-1976}, by using the tools of \cite[Lemmas A.5 and A.8]{LeTreustTomala19}.

%tools developped in \cite[App. A.4, pp. 30]{LeTreustTomala19} with one additional feature: 
%
% \cite[Lemma A.5, pp. 32]{LeTreustTomala19}
% \cite[Lemma A.8, pp. 33]{LeTreustTomala19}. 
%
%We show that the average posterior beliefs induced by Wyner-Ziv's source coding converge to the target posterior beliefs.

%%%%%%%%%%%%%%%%%%%%%%%%%%%%%%%%%%%%%%%%%%%%%%%%%%%%%%%%%%%%%%%%%%%%%%%%%%%%%%%%%%%%%%%%%%%%%%%%%%%%%%%%%%%%%%%%%%%%%%%%%%%%%%%%%%%%%%%%%%%%%%%%%%%

\subsection{Zero capacity}\label{sec:ZeroCapacity}

We first investigate the special case of zero capacity.
\begin{lemma}\label{lemma:ZeroCapacity}
If the channel has zero capacity $\max_{\PP_X} I( X; Y ) =0$, then  we have:
\begin{align}
\forall n \in \N^{\star},\; \forall \sigma,\qquad  \max_{\tau \in \textsf{BR}_{\textsf{d}}(\sigma)} d_{\textsf{e}}^{\,n}(\sigma, \tau)  =  D_{\textsf{e}}^{\star}.
\end{align}
\end{lemma}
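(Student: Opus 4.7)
The plan is to reduce the $n$-letter game to a stage-wise computation by exploiting the fact that a zero-capacity channel is informationally useless. First, I would note that $\max_{\PP_X} I(X;Y) = 0$ forces $I(X;Y) = 0$ for every input distribution; taking $\PP_X$ uniform yields $\PP_{Y|X}(y|x) = \PP_Y(y)$ for every $x$. Consequently, under any encoding strategy $\sigma$, the block $Y^n$ is i.i.d.\ with law $\PP_Y$ and independent of $(U^n, Z^n, X^n)$. Together with the i.i.d.\ structure of the source, this gives $\PP_\sigma(u_t \mid y^n, z^n) = \PP(u_t \mid z_t)$ at every stage $t$ and every $(y^n, z^n)$ of positive probability.

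Next, I would decompose the decoder's best-reply problem stage by stage. Since $d_{\textsf{d}}^{\,n}(\sigma,\tau) = \frac{1}{n}\sum_t \E_{\sigma,\tau}[d_{\textsf{d}}(U_t, Z_t, V_t)]$ is additive and the posteriors factorize across $t$, a strategy $\tau$ is a best reply if and only if, for every $(y^n, z^n)$ and every $t$, the marginal law of $V_t$ is supported in $\argmin_v \sum_u \PP(u \mid z_t) d_{\textsf{d}}(u, z_t, v)$. Applying the worst-case tie-breaking convention selects $V_t = v^{\star}(Z_t, \PP_U(\cdot \mid Z_t))$ at every stage (cf.\ Definition~\ref{def:BestReply}), yielding
\begin{align}
\max_{\tau \in \textsf{BR}_{\textsf{d}}(\sigma)} d_{\textsf{e}}^{\,n}(\sigma, \tau) \;=\; \E\bigl[d_{\textsf{e}}(U, Z, v^{\star}(Z, \PP_U(\cdot \mid Z)))\bigr] \;=\; \Psi_{\textsf{e}}(\PP_U),
\end{align}
uniformly in $\sigma$ and $n$.

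Finally, I would match this value with $D_{\textsf{e}}^{\star}$. With capacity zero, the information constraint in \eqref{eq:SetQ0} reduces to $I(U;W|Z) = 0$, i.e.\ $U \perp W \mid Z$. Combined with the Markov chain $Z - U - W$ encoded in $\PP_{UZ}\QQ_{W|U}$, this is equivalent to $\QQ(u \mid w, z) = \PP(u \mid z)$ on the support, so every admissible $\QQ_{UZW} \in \Q_0$ produces the decoder's posterior $\PP_U(\cdot \mid z)$ at each $(w, z)$. Substituting into \eqref{eq:PhiOptimalZ} and invoking Definitions \ref{def:BestReply}--\ref{def:AverageUtility} yields $D_{\textsf{e}}^{\star} = \Psi_{\textsf{e}}(\PP_U)$, matching the value computed above.

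The only potentially delicate step I anticipate is the stage-wise decomposition of the decoder's best-reply problem under the worst-case tie-breaking rule: I must verify that no $y^n$-dependent or across-stage correlated randomization in $\tau$ can enlarge the encoder's distortion beyond the stage-wise deterministic worst best reply. This will follow from the linearity of the objective in $\tau$ together with the per-stage separability of both the best-reply constraint and the encoder's objective, but it deserves an explicit check rather than being left to the reader.
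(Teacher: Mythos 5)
Your proposal is correct and follows essentially the same two-step route as the paper's proof: (i) zero capacity forces every $\QQ_{UZW}\in\Q_0$ to satisfy $I(U;W|Z)=0$, so the posterior at each $(w,z)$ collapses to $\PP_U(\cdot|z)$ and $D_{\textsf{e}}^{\star}=\E_{\PP_{UZ}}\big[d_{\textsf{e}}\big(U,Z,v^{\star}(Z,\PP_U(\cdot|Z))\big)\big]$; (ii) the useless channel makes $Y^n$ independent of $(U^n,Z^n,X^n)$, so any best-reply $\tau$ reduces stage-wise to the per-symbol argmin with worst-case tie-breaking, giving the same value for every $\sigma$ and $n$. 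The stage-wise decomposition you flag as delicate is exactly the step the paper carries out in its equations \eqref{eq:LemmaZero5}--\eqref{eq:LemmaZero9}, and your sketch of it (linearity in $\tau$ plus per-stage separability) is sound.
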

%The proof of Lemma \ref{lemma:ZeroCapacity} is stated in App. \ref{sec:ZeroCapacity}.
\begin{proof}[Lemma \ref{lemma:ZeroCapacity}]
When capacity is zero $\max_{\PP_X} I( X; Y ) =0$, then the probability distribution $ \PP_{UZ}  \QQ_{W|U} \in \Q_0$ must satisfy $I( U ;W |Z ) =0$, hence the Markov chain property $U -\!\!\!\!\minuso\!\!\!\!- Z -\!\!\!\!\minuso\!\!\!\!- W$, i.e. $\QQ_{U|ZW}=\PP_{U|Z}$. 
\begin{align}
D_{\textsf{e}}^{\star}=&\inf_{ \QQ_{UZW} \in \Q_0} \max_{\QQ_{V|WZ}  \in  \atop \Q_2(\QQ_{UZW})} \E_{\QQ_{UZW} \atop  \QQ_{V|WZ}} \bigg[d_{\textsf{e}}(U,Z,V)\bigg] \label{eq:LemmaZero1}\\
=& \inf_{ \QQ_{UZW} \in \Q_0}\E_{\QQ_{UZW} } \bigg[d_{\textsf{e}}\Big(U,Z,V^{\star}\big(Z,\QQ_U(\cdot|W,Z)\big)\Big)\bigg]\label{eq:LemmaZero2}\\
=& \inf_{ \QQ_{UZW} \in \Q_0}\E_{\QQ_{UZW} } \bigg[d_{\textsf{e}}\Big(U,Z,V^{\star}\big(Z,\PP_U(\cdot|Z)\big)\Big)\bigg]\label{eq:LemmaZero3}\\
=& \E_{\PP_{UZ}} \bigg[d_{\textsf{e}}\Big(U,Z,V^{\star}\big(Z,\PP_U(\cdot|Z)\big)\Big)\bigg]\label{eq:LemmaZero4}.
\end{align}
Equation \eqref{eq:LemmaZero2} is a reformulation by using the best-reply $v^{\star}\big(z,p\big)$ of Definition \ref{def:BestReply} for symbol $z\in \mc{Z}$ and the belief $\QQ_U(\cdot|w,z)$.\\
Equation \eqref{eq:LemmaZero3} comes from Markov chain property $U -\!\!\!\!\minuso\!\!\!\!- Z -\!\!\!\!\minuso\!\!\!\!- W$ that allows to replace the belief $\QQ_U(\cdot|w,z)$ by $\PP_U(\cdot|z)$.\\
Equation \eqref{eq:LemmaZero4} comes from removing the random variable $W$ since it has no impact on the distortion function $d_{\textsf{e}}\Big(u,z,v^{\star}\big(z,\PP_U(\cdot|z)\big)\Big)$.\\ 

For any $n$ and for any encoding strategy $\sigma$, the encoder's long-run distortion is given by
\begin{align}
\max_{\tau \in \textsf{BR}_{\textsf{d}}(\sigma)}d_{\textsf{e}}^{\,n}(\sigma, \tau) 
&=\max_{\tau \in \textsf{BR}_{\textsf{d}}(\sigma)}\sum_{u^n,z^n,x^n,\atop y^n,v^n}\prod_{t=1}^n\PP\big(u_t,z_t \big)  \sigma\big(x^n\big| u^n \big) \prod_{t=1}^n \mc{T}\big(y_t\big)  \tau\big(v^n \big| y^n ,z^n\big) \cdot  \Bigg[  \frac{1}{n} \sum_{t=1}^n d_{\textsf{e}}(u_t,z_t,v_t) \Bigg]\label{eq:LemmaZero5}\\
&=\max_{\tau \in \textsf{BR}_{\textsf{d}}(\sigma)}\sum_{u^n,z^n,v^n}\prod_{t=1}^n\PP\big(u_t,z_t \big)   \tau\big(v^n \big|z^n\big) \cdot  \Bigg[  \frac{1}{n} \sum_{t=1}^n d_{\textsf{e}}(u_t,z_t,v_t) \Bigg]\label{eq:LemmaZero6}\\
&= \frac{1}{n} \sum_{t=1}^n \Bigg[ \sum_{u_t,z_t,v_t} \PP\big(u_t,z_t \big) \cdot  \UN\Big(v_t = v^{\star}\big(z_t,\QQ_U(\cdot|z_t)\big)\Big) \cdot   d_{\textsf{e}}(u_t,z_t,v_t)\Bigg] \label{eq:LemmaZero7}\\
&=  \E_{\PP(u,z) } \bigg[d_{\textsf{e}}\Big(U,Z,V^{\star}\big(Z,\PP_U(\cdot|Z)\big)\Big)\bigg].\label{eq:LemmaZero9}
\end{align}
Equation \eqref{eq:LemmaZero5} comes from the zero capacity which imposes that the channel outputs $Y^n$ are independent of the channel inputs $X^n$.\\
Equation \eqref{eq:LemmaZero6} comes from removing the random variables $(X^n,Y^n)$ and noting that the decoder's best-reply $ \tau\big(v^n \big|z^n\big)$ does not depend on $y^n$ anymore, since $y^n$ is independent of $(u^n,z^n)$.\\
Equation \eqref{eq:LemmaZero7} is a reformulation based on the best-reply $v^{\star}\big(z,\PP_U(\cdot|z)\big)$ of Definition \ref{def:BestReply}, for the symbol $z\in \mc{Z}$ and the belief $\PP_U(\cdot|z)$.\\
Equation \eqref{eq:LemmaZero9} comes from the i.i.d. property of $(U,Z)$ and  concludes the proof of Lemma \ref{lemma:ZeroCapacity}.
\end{proof}

%%%%%%%%%%%%%%%%%%%%%%%%%%%%%%%%%%%%%%%%%%%%%%%%%%%%%%%%%%%%%%%%%%%%%%%%%%%%%%%%%%%%%%%%%%%%%%%%%%%%%%%%%%%%%%%%%%%%%%%%%%%%%%%%%%%%%%%%%%%%%%%%%%%%%%%%%%%%%%%%%%%%%

\subsection{Strictly positive capacity}\label{sec:PositiveCapacity}

We now assume that the channel capacity is strictly positive $\max_{\PP_X} I( X; Y ) >0$. We define a specific convex closure in which the information constraint is satisfied with \emph{strict} inequality and the sets of decoder's best-reply symbols are always \emph{singletons}.
\begin{align}
\widehat{D_{\textsf{e}}}  =&  \inf\bigg\{ \sum_{w} \lambda_w \cdot \Psi_{\textsf{e}}(p_w)   \quad \text{ s.t. } \quad \sum_{w} \lambda_w \cdot p_w  = \PP_U \in \Delta(\mc{U}),\nonumber\\
&\qquad\qquad\qquad\qquad\qquad \text{ and }\quad \sum_{w} \lambda_w \cdot h(p_w)  > H(U|Z)  - \max_{\PP_X}I(X;Y),\nonumber\\
&\qquad\qquad\qquad\qquad\qquad \text{ and }\quad \forall (z,w)\in \mc{Z}\times\mc{W}, \;\; \mc{V}^{\star}\big(z,\QQ_U(\cdot|z,w)\big) \;\; \text{is a singleton} \;\; \bigg\}.\label{eq:SplittingFormulationHat}
\end{align}

\begin{lemma}\label{lemma:RestrictedSplitting}
If $\max_{\PP_X} I( X; Y ) >0$, then $\widehat{D_{\textsf{e}}} = D_{\textsf{e}}^{\star}$.
\end{lemma}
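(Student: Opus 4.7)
The inequality $\widehat{D_{\textsf{e}}} \geq D_{\textsf{e}}^{\star}$ is immediate from Theorem~\ref{theo:Concavification}, since the feasible set of \eqref{eq:SplittingFormulationHat} is a subset of that of \eqref{eq:SplittingFormulation}: it is obtained by strengthening the weak entropy inequality into a strict one and by adding the singleton requirement on the best-reply sets.

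The plan for the reverse inequality is to fix $\epsilon > 0$, start from an $\epsilon/3$-optimal splitting $(\lambda_w, p_w)_w$ for $D_{\textsf{e}}^{\star}$, and perturb it in two successive steps so that it lies in the feasible set of \eqref{eq:SplittingFormulationHat} at an extra cost of at most $2\epsilon/3$ in distortion. \emph{Step 1 --- making the information constraint strict.} Using the hypothesis $C := \max_{\PP_X} I(X;Y) > 0$, I would mix the splitting with the trivial one-atom splitting concentrated at $\PP_U$: put weight $(1-\eta)\lambda_w$ on each $p_w$ and weight $\eta$ on $\PP_U$, for $\eta > 0$ small. Bayes-plausibility is preserved and, since $h(\PP_U) = H(U|Z)$, the new average entropy is at least $H(U|Z) - (1-\eta)C$, exceeding $H(U|Z) - C$ by the strictly positive margin $\eta C$. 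The distortion rises by at most $\eta \cdot \big( \max_p \Psi_{\textsf{e}}(p) - D_{\textsf{e}}^{\star}\big)$, which I make $\leq \epsilon/3$ by choosing $\eta$ small.

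\emph{Step 2 --- forcing singleton best-replies.} The set $G \subset \Delta(\mc{U})$ of beliefs $p$ for which every $\mc{V}^{\star}(z, p_U(\cdot|z))$, $z \in \mc{Z}$, is a singleton is open and dense: its complement is a finite union of lower-dimensional indifference sets $\{p : \E_p[d_{\textsf{d}}(U,z,v) - d_{\textsf{d}}(U,z,v')] = 0\}$ pulled back through the continuous Bayes map $p \mapsto p_U(\cdot|z)$. I would replace each posterior $p_w$ by a nearby $p_w' \in G$ while maintaining $\sum_w \lambda_w p_w' = \PP_U$, for instance by perturbing all but one ``anchor'' posterior independently into $G$ and letting the anchor absorb the compensating correction; density of $G$ in a neighborhood of the anchor lets the anchor itself be placed in $G$ by an arbitrarily small additional tweak. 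Continuity of $h$ preserves the strict information gap gained in Step 1 for sufficiently small perturbations, while upper semi-continuity of $\Psi_{\textsf{e}}$ --- inherited from upper hemi-continuity of $\mc{V}^{\star}(z,\cdot)$ combined with the worst-for-encoder selection rule --- yields $\sum_w \lambda_w \Psi_{\textsf{e}}(p_w') \leq \sum_w \lambda_w \Psi_{\textsf{e}}(p_w) + \epsilon/3$. Reducing the number of atoms back to the prescribed cardinality of $\mc{W}$ via Caratheodory if necessary then gives $\widehat{D_{\textsf{e}}} \leq D_{\textsf{e}}^{\star} + \epsilon$, and letting $\epsilon \downarrow 0$ concludes.

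The main obstacle is the coupling in Step 2: the perturbations must simultaneously place every posterior in the open dense set $G$, preserve Bayes-plausibility, and keep the strict entropy inequality alive. Step 1 is performed first precisely in order to create the slack in the information constraint that Step 2 is then allowed to erode.
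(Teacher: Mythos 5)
For this lemma the paper gives no self-contained argument: it simply points to the ``similar proof'' of \cite[Lemma A.5, pp.~32]{LeTreustTomala19}, which is a perturbation argument of exactly the kind you construct. Your route is therefore essentially the intended one: the inclusion of feasible sets gives $\widehat{D_{\textsf{e}}} \geq D_{\textsf{e}}^{\star}$, positive capacity is used to create slack in the information constraint by mixing a near-optimal splitting with the one-atom splitting at $\PP_U$ (using $h(\PP_U)=H(U|Z)$), and genericity of unique worst best-replies together with upper semicontinuity of $\Psi_{\textsf{e}}$ (the decoder's argmin correspondence can only shrink under small perturbations) absorbs the remaining $\varepsilon$. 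The continuity of $h$ preserving the strict slack, and the fact that the Carath\'eodory/merging reductions keep a sub-collection (or convex combinations with a common strict best-reply vector) of the perturbed posteriors, so that feasibility for \eqref{eq:SplittingFormulationHat} survives the cardinality reduction, are all correctly anticipated.

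Two points in Step 2 deserve tightening, though neither is a missing idea. First, as literally written, ``an arbitrarily small additional tweak'' of the anchor breaks Bayes-plausibility again; the correct formulation is that the induced anchor $p_a'=\big(\PP_U-\sum_{w\neq a}\lambda_w p_w'\big)/\lambda_a$ is an affine, open function of the free posteriors, so the set of joint perturbations placing both the free posteriors and the induced anchor in the open dense set $G$ is itself dense, and one chooses the perturbation there. It is convenient to take as anchor the interior atom at $\PP_U$ created in Step 1 (assuming, without loss of generality, full-support $\PP_U$), which removes any simplex-boundary obstruction; note also that since its weight is $\eta$, the free perturbations must be taken of size $o(\eta)$ so the compensating move stays small. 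Second, the density of $G$ requires a mild non-degeneracy: if two symbols $v\neq v'$ satisfy $d_{\textsf{d}}(\cdot,z,v)=d_{\textsf{d}}(\cdot,z,v')$ and $d_{\textsf{e}}(\cdot,z,v)=d_{\textsf{e}}(\cdot,z,v')$ on the support of $\PP(z|\cdot)$, then $\mc{V}^{\star}(z,\cdot)$ can fail to be a singleton on a set with nonempty interior (indeed, with duplicate actions the lemma as stated degenerates). This is implicitly excluded by identifying payoff-equivalent decoder symbols, an assumption the cited proof needs as well, but it should be said.
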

For the proof of Lemma \ref{lemma:RestrictedSplitting}, we refers to the similar proof of \cite[Lemma A.5, pp.  32]{LeTreustTomala19}. We denote by  $\PP^{\star}_X$ the probability distribution that maximizes the mutual information $I(X;Y)$, we denote by $Q^n_{UZW}$ the empirical distribution of the sequence $(u^n,z^n,w^n)$ and we denote by $A_{\delta}$ the set of typical sequences with tolerance $\delta>0$, defined by
\begin{align}
A_{\delta} = \bigg\{(u^n,z^n,w^n,x^n,y^n), &\quad\text{ s.t. } \quad|| Q^n_{UZW} - \PP_{UZ} \QQ_{W|U}||_1\leq \delta,\nonumber\\
 &\quad\text{ and } \quad|| Q^n_{XY} - \PP^{\star}_X  \mc{T}_{Y|X}||_1\leq \delta  \bigg\}.
\end{align}
We denote by $\PP_{\sigma,U_t}(\cdot|y^n,z^n)\in \Delta(\mc{U})$ the posterior belief induced by the strategy $\sigma$ on $U_t$ at stage $t$, given $(y^n,z^n)$. We define $T_{\alpha}(w^n,y^n,z^n)$ and $B_{\alpha,\gamma,\delta}$ depending on parameters $\alpha>0$ and $\gamma>0$:
\begin{align}
T_{\alpha}(w^n,y^n,z^n) =& \bigg\{t \in \{1,\ldots,n\} , \;\;\text{ s.t. }\;\; D\Big(\PP_{\sigma,U_t}(\cdot|y^n,z^n)\Big|\Big|\QQ_{U_t}(\cdot|w_t,z_t)\Big)\leq  \frac{\alpha^2}{2\ln 2} \bigg\},\label{eq:SetTalpha}\\
B_{\alpha,\gamma,\delta}=& \bigg\{ (w^n,y^n,z^n) , \;\;\text{ s.t. } \;\; \frac{|T_{\alpha}(w^n,y^n,z^n)|}{n} \geq 1 - \gamma\;\; \text{ and }\;\;(w^n,y^n,z^n) \in A_{\delta}\bigg\}.\label{eq:SetBalpha}
\end{align}
The notation $B_{\alpha,\gamma,\delta}^c$ stands for the complementary set of $B_{\alpha,\gamma,\delta}\subset \mc{W}^n \times \mc{Y}^n \times \mc{Z}^n$. The sequences $(w^n,y^n,z^n)$ belong to the set $B_{\alpha,\gamma,\delta}$ if, 1) they are typical, 2) the corresponding posterior belief $\PP_{\sigma,U_t}(\cdot|y^n,z^n)$ is close in K-L divergence to the target belief $\QQ_{U_t}(\cdot|w_t,z_t)$, for a large fraction of stages $t \in \{1,\ldots,n\}$.

The cornerstone of this achievability proof is Proposition \ref{prop:WynerZivCoding}, which refines the analysis of Wyner-Ziv's source coding by controlling the posterior beliefs of a large fraction of stages.

\begin{proposition}[Wyner-Ziv's Posterior Beliefs]\label{prop:WynerZivCoding}
If the probability distribution $\PP_{UZ} \QQ_{W|U}$ satisfies:
\begin{align}
\begin{cases}
&\max_{\PP_X}I(X;Y) - I(U;W|Z) >0,\\
&\mc{V}^{\star}\big(z,\QQ_U(\cdot|z,w)\big) \text{ is a singleton }\forall (z,w)\in \mc{Z}\times\mc{W},
\end{cases}
\end{align}
then
\begin{align}
&\forall \varepsilon>0, \;\forall \alpha>0, \;\forall \gamma>0,\;\exists \bar{\delta}>0,\;\forall \delta< \bar{\delta}, \;\exists \bar{n}\in \N^{\star},\;\forall n\geq  \bar{n},  \exists \sigma, \text{ s.t. } \PP_{\sigma}(B_{\alpha,\gamma,\delta}^c) \leq \varepsilon.\label{eq:propWynerZivCoding}
\end{align}
\end{proposition}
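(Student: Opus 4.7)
I would build the encoding strategy $\sigma$ by concatenating a Wyner--Ziv binning code with a Shannon channel code, in the spirit of Merhav--Shamai~\cite{MerhavShamai03}. Since $\max_{\PP_X}I(X;Y)-I(U;W|Z)>0$ by hypothesis, pick $\eta>0$ and rates $\textsf{R},\textsf{R}_{\textsf{L}}$ satisfying \eqref{eq:AchievabilityBB1}--\eqref{eq:AchievabilityBB3}. Generate $2^{n(\textsf{R}+\textsf{R}_{\textsf{L}})}$ codewords $W^n(m,\ell)$ i.i.d.\ from $\QQ_W^{\otimes n}$ and partition them into $2^{n\textsf{R}}$ bins indexed by $m$. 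On input $u^n$, the encoder looks for a pair $(m,\ell)$ such that $(u^n,W^n(m,\ell))$ is jointly $\delta$-typical; covering succeeds with high probability because $\textsf{R}+\textsf{R}_{\textsf{L}}>I(U;W)$. It then sends the bin index $m$ through the channel via a capacity-approaching code on $X^n$. The decoder first recovers $\hat m$ from $Y^n$ and then selects the unique $\hat\ell$ for which $(W^n(\hat m,\hat\ell),Z^n)$ is jointly $\delta$-typical; since $\textsf{R}_{\textsf{L}}<I(W;Z)$, Wyner--Ziv's argument gives $\hat\ell=\ell$ with high probability. Set $E_\delta=0$ on the intersection of (i) existence of a typical pair $(m,\ell)$, (ii) $\hat m=m$, (iii) $\hat\ell=\ell$, and (iv) $(U^n,Z^n,W^n,X^n,Y^n)\in A_\delta$; standard arguments yield $\PP_\sigma(E_\delta=1)\leq \varepsilon/2$ for $n$ large.

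\textbf{Posterior belief control.} The crux is \eqref{eq:ControlFinalCF}. Writing $D(P\|Q)=-H(P)+\E_P[-\log Q]$ and using the chain-rule inequality $\sum_t H_\sigma(U_t|Y^n,Z^n,E_\delta=0)\geq H_\sigma(U^n|Y^n,Z^n,E_\delta=0)$, I would obtain
\begin{align*}
\sum_{t=1}^n \E_\sigma\!\left[D\bigl(\PP_{\sigma,U_t}(\cdot|Y^n,Z^n,E_\delta=0)\,\|\,\QQ_{U_t}(\cdot|W_t,Z_t)\bigr)\right]
\leq -H_\sigma(U^n|Y^n,Z^n,E_\delta=0)+\sum_{t=1}^n\E_\sigma\bigl[-\log \QQ(U_t|W_t,Z_t)\bigr].
\end{align*}
By $\delta$-typicality of $(U^n,W^n,Z^n)$ on $\{E_\delta=0\}$, the second sum sits within $n\cdot 2\delta\log_2|\mc{U}|$ of $nH(U|W,Z)_{\QQ}$. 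For the first term, on $\{E_\delta=0\}$ the codeword $W^n$ is a deterministic function of $(Y^n,Z^n)$ through the codebook, and the Markov property $U^n-\!\!\!\!\minuso\!\!\!\!-(W^n,Z^n)-\!\!\!\!\minuso\!\!\!\!- Y^n$ implies $H_\sigma(U^n|Y^n,Z^n,E_\delta=0)=H_\sigma(U^n|W^n,Y^n,Z^n,E_\delta=0)=H_\sigma(U^n|W^n,Z^n,E_\delta=0)$. A Fano-type inequality that accounts for the conditioning on $\{E_\delta=0\}$ (using $H(E_\delta)\leq 1$ and the crude bound $H(U^n)\leq n\log_2|\mc{U}|$ on the atypical part) then yields $H_\sigma(U^n|Y^n,Z^n,E_\delta=0)\geq H_\sigma(U^n|W^n,Z^n)-2-2n\log_2|\mc{U}|\cdot\PP_\sigma(E_\delta=1)$, and the rate constraint \eqref{eq:AchievabilityBB1} together with the codebook construction gives $H_\sigma(U^n|W^n,Z^n)\geq nH(U|W,Z)_{\QQ}-n\eta$. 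Dividing by $n$ delivers exactly the bound $\epsilon:=2\delta+\eta+2/n+2\log_2|\mc{U}|\cdot\PP_\sigma(E_\delta=1)$ in \eqref{eq:ControlFinalCF}.

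\textbf{Conclusion and main obstacle.} Since each KL summand is non-negative, Markov's inequality applied to the empirical distribution of $t\in\{1,\ldots,n\}$ implies that, conditionally on $\{E_\delta=0\}$, the fraction of indices $t$ for which $D\bigl(\PP_{\sigma,U_t}(\cdot|Y^n,Z^n,E_\delta=0)\,\|\,\QQ_{U_t}(\cdot|W_t,Z_t)\bigr)>\alpha^2/(2\ln 2)$ is at most $\epsilon\cdot 2\ln 2/\alpha^2$. Choosing $\eta$, $\delta$ small and $n$ large makes this average at most $\gamma/2$, so outside an event of probability $\varepsilon/2$ the set $T_\alpha(W^n,Y^n,Z^n)$ has relative size at least $1-\gamma$; combining with $\PP_\sigma(\{E_\delta=1\}\cup A_\delta^c)\leq \varepsilon/2$ yields $\PP_\sigma(B_{\alpha,\gamma,\delta}^c)\leq \varepsilon$. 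The main technical obstacle lies in the Fano step: producing precisely the correction $2/n+2\log_2|\mc{U}|\cdot\PP_\sigma(E_\delta=1)$ from the conditioning on $\{E_\delta=0\}$ requires careful decomposition of $H_\sigma(U^n|Y^n,Z^n)$ into the two sub-events and bounding the atypical contribution uniformly by the trivial entropy $n\log_2|\mc{U}|$, which is the only step where the constants in \eqref{eq:ControlFinalCF} are genuinely at stake.
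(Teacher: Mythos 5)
Your overall architecture is the same as the paper's: the same Wyner--Ziv/Shannon random code with rates \eqref{eq:AchievabilityBB1}--\eqref{eq:AchievabilityBB3}, the same error event $E_\delta$, the same decomposition of the average K-L divergence into a cross-entropy term (bounded by typicality) minus $\tfrac1n H_\sigma(U^n|Y^n,Z^n,E_\delta=0)$, and the same double use of Markov's inequality at the end. The Fano-type unconditioning and the reduction $H_\sigma(U^n|Y^n,Z^n,E_\delta=0)=H_\sigma(U^n|W^n,Z^n,E_\delta=0)$ are fine. However, there is a genuine gap at the central step, which is exactly where the side information enters: you assert that ``the rate constraint \eqref{eq:AchievabilityBB1} together with the codebook construction gives $H_\sigma(U^n|W^n,Z^n)\ge nH(U|W,Z)_{\QQ}-n\eta$.'' The rate constraint alone only yields $I(U^n;W^n|Z^n)\le H(W^n|Z^n)\le H(W^n)\le n(\textsf{R}+\textsf{R}_{\textsf{L}})=n\big(I(U;W)+\eta\big)$, hence
\begin{align}
H_\sigma(U^n|W^n,Z^n)\;\ge\; nH(U|Z)-n\,I(U;W)-n\eta\;=\;n\Big(H(U|W,Z)-I(W;Z)\Big)-n\eta,
\end{align}
which falls short of your claim by $n\,I(W;Z)$ --- precisely the gain that distinguishes the Wyner--Ziv analysis from a plain rate-$I(U;W)$ argument, and the term the whole proposition is about.

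To recover the missing $n\,I(W;Z)$ you need one of two additional arguments, neither of which appears in your sketch. Either (a) follow the paper: expand $-\tfrac1nH(U^n|W^n,Z^n,E_\delta=0)$ as in \eqref{eq:Beliefs5} and use the joint typicality of $(W^n,Z^n)$ on $\{E_\delta=0\}$ to bound $\tfrac1nH(Z^n|W^n,E_\delta=0)\le H(Z|W)+\delta$ as in \eqref{eq:Control3}, together with $\tfrac1nH(Z^n|U^n,W^n,E_\delta=0)\ge H(Z|U,W)-\tfrac1n-\log_2|\mc{U}|\cdot\PP_\sigma(E_\delta=1)$ as in \eqref{eq:Control4}--\eqref{eq:Control5} (this uses the Markov chain between $Z^n$, $U^n$, $W^n$ and the i.i.d.\ source), and the codebook-size bound \eqref{eq:Control2}; or (b) argue explicitly that on $\{E_\delta=0\}$ the index $L$ is a function of $(M,Z^n)$, so that $H(W^n|Z^n,E_\delta=0)\lesssim n\textsf{R}$, and in addition fix $\textsf{R}_{\textsf{L}}$ at its maximal value $I(Z;W)-\eta$ so that $\textsf{R}=I(U;W|Z)+2\eta$; note that \eqref{eq:AchievabilityBB2}--\eqref{eq:AchievabilityBB3} are only inequalities, so without this choice the claimed single-letter bound need not hold. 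As written, the one sentence in which you invoke ``the codebook construction'' conceals exactly the binning/side-information step, so the constant $\eta$ in \eqref{eq:ControlFinalCF} is not actually established by your argument; once either (a) or (b) is supplied, the rest of your proof goes through and coincides with the paper's.
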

The proof of proposition \ref{prop:WynerZivCoding} is stated in App. \ref{sec:ProofPropositionCode}.
\begin{proposition}\label{prop:UtilityBound}
For any encoding strategy $\sigma$, we have:
\begin{align}
\bigg| \max_{\tau \in \textsf{BR}_{\textsf{d}}(\sigma)}d_{\textsf{e}}^{\,n}(\sigma, \tau) - \widehat{D_{\textsf{e}}}\bigg| \leq (\alpha+ 2 \gamma + \delta)\cdot \bar{d_{\textsf{e}}} + (1 - \PP_{\sigma}(B_{\alpha,\gamma,\delta})) \cdot \bar{d_{\textsf{e}}},
\end{align}
where $ \bar{d_{\textsf{e}}} = \max_{u,z,v} \big|d_{\textsf{e}}(u,z,v)\big| $ is the largest absolute value of encoder's distortion.
\end{proposition}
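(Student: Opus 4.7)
The plan is to split the long-run distortion according to the good event $B_{\alpha,\gamma,\delta}$ and, within that event, according to whether a given stage $t$ lies in the set $T_{\alpha}(W^n,Y^n,Z^n)$ of ``well-revealed'' stages. Writing
\begin{align*}
d_{\textsf{e}}^{\,n}(\sigma,\tau) = \E_{\sigma,\tau}\!\left[\UN_{B_{\alpha,\gamma,\delta}}\cdot\tfrac{1}{n}\sum_{t=1}^n d_{\textsf{e}}(U_t,Z_t,V_t)\right] + \E_{\sigma,\tau}\!\left[\UN_{B_{\alpha,\gamma,\delta}^c}\cdot\tfrac{1}{n}\sum_{t=1}^n d_{\textsf{e}}(U_t,Z_t,V_t)\right],
\end{align*}
the second term is absorbed into the $(1-\PP_{\sigma}(B_{\alpha,\gamma,\delta}))\bar{d_{\textsf{e}}}$ slack in the proposition. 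Within the good event, the at most $\gamma n$ stages that lie outside $T_{\alpha}$ contribute at most $\gamma\bar{d_{\textsf{e}}}$ to the average in absolute value, giving one of the two $\gamma\bar{d_{\textsf{e}}}$ terms.

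On the remaining stages $t\in T_{\alpha}$, Pinsker's inequality turns the K--L bound $\alpha^2/(2\ln 2)$ into the $L^1$ bound $\|\PP_{\sigma,U_t}(\cdot|Y^n,Z^n)-\QQ_{U_t}(\cdot|W_t,Z_t)\|_1\leq\alpha$. Since the best-reply set $\mc{V}^{\star}(z,\QQ_U(\cdot|z,w))$ is a singleton at every $(z,w)\in\mc{Z}\times\mc{W}$, the unique optimum of $v\mapsto\E_{\QQ_U(\cdot|w,z)}[d_{\textsf{d}}(U,z,v)]$ has a strict margin; for $\alpha$ small enough this margin survives the $L^1$ perturbation, so every $\tau\in\textsf{BR}_{\textsf{d}}(\sigma)$ is forced to play $V_t=v^{\star}(Z_t,\QQ_U(\cdot|W_t,Z_t))$ on those stages, which neutralises the $\max$ over best replies in the statement. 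Replacing the true posterior by the target posterior inside $\E[d_{\textsf{e}}(U_t,Z_t,V_t)\mid Y^n,Z^n]$ then costs at most $\alpha\bar{d_{\textsf{e}}}$ per stage, producing the $\alpha\bar{d_{\textsf{e}}}$ contribution.

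It remains to compare the empirical target average $(1/n)\sum_{t\in T_{\alpha}} f(W_t,Z_t)$, with $f(w,z):=\sum_u \QQ(u|w,z)\,d_{\textsf{e}}(u,z,v^{\star}(z,\QQ_U(\cdot|w,z)))$, to $\widehat{D_{\textsf{e}}}=\E_{\QQ_{WZ}}[f(W,Z)]$. The typicality bound $\|Q^n_{UZW}-\PP_{UZ}\QQ_{W|U}\|_1\leq\delta$ built into $A_{\delta}$ controls the distance between the empirical law $Q^n_{WZ}$ and $\QQ_{WZ}$, yielding a $\delta\bar{d_{\textsf{e}}}$ gap on the full empirical sum, and dropping the at most $\gamma n$ stages outside $T_{\alpha}$ adds the second $\gamma\bar{d_{\textsf{e}}}$ term. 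Summing the four contributions reproduces $(\alpha+2\gamma+\delta)\bar{d_{\textsf{e}}}+(1-\PP_\sigma(B_{\alpha,\gamma,\delta}))\bar{d_{\textsf{e}}}$. The main obstacle I anticipate is the singleton-stability step: one must upgrade the pointwise singleton assumption into a uniform $\alpha$-margin valid for every $(z,w)\in\mc{Z}\times\mc{W}$, which I expect to handle by continuity of $p\mapsto\E_p[d_{\textsf{d}}(U,z,v)]$ combined with the finiteness of $\mc{Z}\times\mc{W}\times\mc{V}$.
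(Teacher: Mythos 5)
Your reconstruction is essentially the paper's intended argument: the paper does not prove Proposition~\ref{prop:UtilityBound} in-house but defers to the analogous \cite[Lemma A.8, pp. 33]{LeTreustTomala19}, and your route---splitting along $B_{\alpha,\gamma,\delta}$ and $T_{\alpha}$, Pinsker's inequality, forcing the decoder's stage-wise action via a uniform strict margin of the singleton best replies (finiteness of $\mc{Z}\times\mc{W}\times\mc{V}$), and typicality of $(w^n,z^n)$ to pass from the empirical average of $\psi_{\textsf{e}}$ to $\sum_w \lambda_w \Psi_{\textsf{e}}(p_w)$---is exactly that argument. The only caveat, which the proposition's loose statement itself shares, is that the per-stage $\alpha$-bound presupposes $\alpha$ below the margin threshold of the fixed target $\QQ_{UZW}$ and that this target (nearly) attains the infimum $\widehat{D_{\textsf{e}}}$; both conditions hold in the way the proposition is invoked in Corollary~\ref{coro:AchievabilityProof}, where $\alpha,\gamma,\delta$ are taken small.
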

For the proof of Proposition \ref{prop:UtilityBound}, we refers directly to the similar proof of \cite[Lemma A.8, pp. 33]{LeTreustTomala19}. 

\begin{corollary}\label{coro:AchievabilityProof}
For any $\varepsilon>0$, there exists $\bar{n}\in \N^{\star}$ such that for all $n\geq \bar{n}$ there exists an encoding strategy $\sigma$ such that:
\begin{align}
\bigg| \max_{\tau \in \textsf{BR}_{\textsf{d}}(\sigma)}d_{\textsf{e}}^{\,n}(\sigma, \tau) - \widehat{D_{\textsf{e}}}\bigg| \leq \varepsilon.
\end{align}
\end{corollary}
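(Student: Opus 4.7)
The plan is to obtain Corollary~\ref{coro:AchievabilityProof} as an essentially immediate consequence of Propositions~\ref{prop:WynerZivCoding} and \ref{prop:UtilityBound}, by carefully tuning the tolerance parameters $(\alpha,\gamma,\delta)$ so that the two error contributions in Proposition~\ref{prop:UtilityBound} each account for half of the allowed $\varepsilon$. Given $\varepsilon>0$, I will let $\bar{d_{\textsf{e}}} = \max_{u,z,v}|d_{\textsf{e}}(u,z,v)|$ (which is finite because the alphabets are finite) and first select a target probability distribution $\PP_{UZ}\QQ_{W|U}$ feasible for the restricted optimisation defining $\widehat{D_{\textsf{e}}}$ in \eqref{eq:SplittingFormulationHat}, i.e.\ satisfying the strict information inequality and the singleton best-reply condition, whose distortion value is within $\varepsilon$ of $\widehat{D_{\textsf{e}}}$. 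This uses only the definition of infimum, and is the reason Lemma~\ref{lemma:RestrictedSplitting} is needed in the first place: it guarantees that such near-optimal feasible targets exist.

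Next I will pick $\alpha,\gamma,\delta>0$ small enough so that $(\alpha+2\gamma+\delta)\cdot\bar{d_{\textsf{e}}}\leq \varepsilon/2$, for example $\alpha=\gamma=\delta = \varepsilon/(8\bar{d_{\textsf{e}}})$, and then invoke Proposition~\ref{prop:WynerZivCoding} applied to the chosen target distribution with the tolerance $\varepsilon/(2\bar{d_{\textsf{e}}})$ in place of $\varepsilon$. Since the conditions of Proposition~\ref{prop:WynerZivCoding} are exactly those that hold by construction of the target, this yields $\bar{n}\in\N^{\star}$ such that for all $n\geq\bar{n}$ there exists an encoding strategy $\sigma$ satisfying $\PP_{\sigma}(B_{\alpha,\gamma,\delta}^{c})\leq \varepsilon/(2\bar{d_{\textsf{e}}})$.

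Finally I will apply Proposition~\ref{prop:UtilityBound} to this encoding strategy $\sigma$ and combine the two bounds:
\begin{align*}
\bigg|\max_{\tau\in\textsf{BR}_{\textsf{d}}(\sigma)} d_{\textsf{e}}^{\,n}(\sigma,\tau)-\widehat{D_{\textsf{e}}}\bigg|
&\leq (\alpha+2\gamma+\delta)\cdot\bar{d_{\textsf{e}}} + \big(1-\PP_{\sigma}(B_{\alpha,\gamma,\delta})\big)\cdot\bar{d_{\textsf{e}}} \leq \frac{\varepsilon}{2} + \frac{\varepsilon}{2} = \varepsilon,
\end{align*}
which is exactly the statement of Corollary~\ref{coro:AchievabilityProof}. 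The step requiring genuine technical work has already been done inside Proposition~\ref{prop:WynerZivCoding} (the refined Wyner--Ziv analysis controlling the posterior beliefs $\PP_{\sigma,U_{t}}(\cdot\mid y^{n},z^{n})$ stage-by-stage through the divergence bound in \eqref{eq:SetTalpha}); at the level of this corollary the only subtlety is to order the quantifiers so that the parameters $(\alpha,\gamma,\delta)$ are chosen \emph{before} the probabilistic coding guarantee is invoked, ensuring that $\bar{n}$ depends on them but not vice versa.
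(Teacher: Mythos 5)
Your proposal is correct and follows essentially the same route as the paper, which proves the corollary precisely by combining Proposition~\ref{prop:WynerZivCoding} with Proposition~\ref{prop:UtilityBound} and taking $\alpha$, $\gamma$, $\delta$ small and $n$ large; your explicit $\varepsilon/2$-budgeting and quantifier ordering simply spell out what the paper leaves implicit. The only minor remark is that the near-optimal feasible target you select enters through the (implicit) dependence of Propositions~\ref{prop:WynerZivCoding} and~\ref{prop:UtilityBound} on the chosen $\QQ_{W|U}$, not as an additional term in the final bound, which is consistent with how the paper states these results.
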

The proof of Corollary \ref{coro:AchievabilityProof} comes from combining Proposition \ref{prop:WynerZivCoding} with Proposition \ref{prop:UtilityBound} and choosing  parameters $\alpha$, $\gamma$, $\delta$ small and  $n\in \N^{\star}$ large. The decoder's best-reply performs similarly as the Wyner-Ziv's decoding scheme. It concludes the achievability proof of Theorem \ref{theo:MaxMinStackelberg}.

%%%%%%%%%%%%%%%%%%%%%%%%%%%%%%%%%%%%%%%%%%%%%%%%%%%%%%%%%%%%%%%%%%%%%%%%%%%%%%%%%%%%%%%%%%%%%%%%%%%%%%%%%%%%%%%%%%%%%%%%%%%%%%%%%%%%%%%%%%%%%%%%%%%%%%%%%%%%%%%%%%%%%

\subsection{Proof of Proposition \ref{prop:WynerZivCoding}}\label{sec:ProofPropositionCode}

%\subsection{Wyner-Ziv source coding}\label{sec:WynerZivCoding}
We assume that the probability distribution $\PP_{UZ} \QQ_{W|U}$ satisfies the two following conditions:
\begin{align}
\begin{cases}
&\max_{\PP_X}I(X;Y) - I(U;W|Z) >0,\\
&\mc{V}^{\star}\big(z,\QQ_U(\cdot|z,w)\big) \text{ is a singleton }\forall (z,w)\in \mc{Z}\times\mc{W}.
\end{cases}
\end{align}
The strict information constraint ensures that  there exists a small parameter $\eta>0$ and rates  $\textsf{R}\geq 0 $, $\textsf{R}_{\textsf{L}}\geq 0 $, such that
\begin{eqnarray}
\textsf{R}  + \textsf{R}_{\textsf{L}}& =&       I( U;W )  + \eta  \label{eq:AchievabilityB1} , \\
\textsf{R}_{\textsf{L}}  &\leq &       I( Z;W )  - \eta  \label{eq:AchievabilityB2} , \\
\textsf{R} \; & \leq&   \max_{\PP_X} I( X; Y )  -  \eta  \label{eq:AchievabilityB3}  .
\end{eqnarray}
We now recall the random coding constructions of Wyner-Ziv and Shannon for the source and the channel, in \cite{wyner-it-1976} and  \cite{shannon-bell-1948}. Then, we investigate the decoder's posterior beliefs regarding the sequence of symbols of source. 
%posterior beliefs induced by the concatenation of the source coding and channel coding, 
We denote by $\Sigma$ the random coding scheme, described as follows.
\begin{itemize}
\item[$\bullet$] \textit{Random codebook.} We introduces the indices $m\in\mc{M}$ with $| \mc{M}|= 2^{n   \sf{R}} $ and $l\in\mc{M}_{\textsf{L}}$ with $| \mc{M}_{\textsf{L}}  |= 2^{n   \sf{R}_{\textsf{L}} } $. We draw $| \mc{M} \times \mc{M}_{\textsf{L}}   |= 2^{n  ( \sf{R}  + \textsf{R}_{\textsf{L}})  } $ sequences $W^n(m,l)$ with the i.i.d. probability distribution $\QQ^{\otimes n}_W $, and $| \mc{M}  |= 2^{n   \sf{R}   } $ sequences $X^n(m)$,  with the i.i.d. probability distribution $\PP^{\star \otimes n}_X $ that maximizes the channel capacity in \eqref{eq:AchievabilityB3}. 
\item[$\bullet$] \textit{Encoding function.} The encoder observes the sequence of symbols of source $U^n \in  \mc{U}^n$ and finds a pair of indices $(m,l)\in \mc{M} \times  \mc{M}_{\textsf{L}}$ such that the sequences  $\big(U^n,W^n(m,l)\big) \in A_{\delta}$ are jointly typical. It sends the sequence $X^n(m)$ corresponding to the index $m\in \mc{M}$.
\item[$\bullet$] \textit{Decoding function.} The decoder observes the sequence of channel output $Y^n\in\mc{Y}^n$. It returns an index $\hat{m}\in \mc{M}$ such that the sequences  $\big(Y^n,X^n(\hat{m})\big) \in A_{\delta}$ are jointly typical.  Then it observes the sequence of side information $Z^n\in\mc{Z}^n$ and returns an index $\hat{l}\in \mc{M}_{\textsf{L}}$ such that the sequences  $\big(Z^n,W^n(\hat{m},\hat{l})\big) \in A_{\delta}$ are jointly typical. \item[$\bullet$] \textit{Error Event.} We introduce the event of error $E_{\delta} \in \{0,1\}$ defined as follows:
%\begin{align}
%E_{\delta} = \Bigg\{
%\begin{array}{lll}
%0&\text{ if }&  (M,L)=( \hat{M},\hat{L})  \;\; \text{ and }\;\\\
%&&  \big(U^n ,  Z^n, W^n, X^n,Y^n \big)    \in A_{\delta} ,\\
%1 &\text{ if }&  (M,L)\neq( \hat{M},\hat{L})  \;\; \text{ or } \;\;\\
%&& \big(U^n , Z^,  W^n, X^n,Y^n \big)    \notin A_{\delta} .
%\end{array}
%\Bigg.
%\end{align}
%An error $E_{\delta}=1$ occurs in the coding process if: 1) the indexes $(M,L)\in \mc{M}$ and $\hat{M}\in \mc{M}$ are not equal or 2) the sequences of symbols $\big( U^n ,W^n , X^n,Y^n\big) \notin A_{\delta}$ are not jointly typical. 
\end{itemize}
\begin{align}
E_{\delta} = \Bigg\{
\begin{array}{lll}
0&\text{ if }&  (M,L)=( \hat{M},\hat{L})  \;\; \text{ and }\;  \big(U^n ,  Z^n, W^n, X^n,Y^n \big)    \in A_{\delta} ,\\
1 &\text{ otherwise.}& 
\end{array}
\Bigg.
\end{align}

\textit{Expected error probability of the random coding scheme $\Sigma$.}
For all $\varepsilon_2>0$, for all $\eta >0$, there exists a $\bar{\delta}>0$, for all $\delta \leq \bar{\delta}$  there exists $\bar{n}$ such that for all $n\geq\bar{n}$, the expected probability of the following error events are bounded by $\varepsilon_2$:
\begin{align}
&\E_{\Sigma}\bigg[ \PP\bigg( \forall  (m,l) ,\quad 
\big(U^n, W^n(m,l) \big) \notin A_{\delta} \bigg)\bigg]  \leq \varepsilon_2, \label{eq:AchievProbaB1} \\
&\E_{\Sigma}\bigg[ \PP\bigg(  \exists l'\neq  l  ,\text{ s.t. } 
\big(Z^n , W^n(m,l') \big) \in A_{\delta}\bigg)\bigg]   \leq \varepsilon_2, \label{eq:AchievProbaB2}\\
&\E_{\Sigma}\bigg[ \PP\bigg(  \exists m'\neq  m  ,\text{ s.t. } 
\big(Y^n , X^n(m') \big) \in A_{\delta}\bigg)\bigg]   \leq \varepsilon_2, \label{eq:AchievProbaB3}
\end{align}
Equation \eqref{eq:AchievProbaB1} comes from \eqref{eq:AchievabilityB1} and the covering lemma \cite[pp. 208]{ElGammalKim(book)11}.\\
Equation \eqref{eq:AchievProbaB2} comes from \eqref{eq:AchievabilityB2} and the packing lemma \cite[pp. 46]{ElGammalKim(book)11}.\\
Equation \eqref{eq:AchievProbaB3} comes from \eqref{eq:AchievabilityB3} and the packing lemma \cite[pp. 46]{ElGammalKim(book)11}.

We conclude that
\begin{align}
&\forall \varepsilon_2>0,\;  \forall \eta>0, \;  \exists \bar{\delta}>0,\;\forall \delta\leq \bar{\delta},  \; \exists \bar{n}>0,\;\forall n\geq \bar{n},\quad \exists \sigma,\qquad  \PP_{\sigma}\big(E_{\delta}=1 \big) \leq \varepsilon_2. \label{eq:BoundError0}
\end{align}

\textit{Control of the posterior beliefs.} We assume that the event $E_{\delta}=0$ is realized. We denote by $\PP_{\sigma,U_t}(\cdot|y^n,z^n,E_{\delta}=0)$ the conditional probability distribution of $U_t$ given $(y^n,z^n,E=0)$, induced by the encoding strategy $\sigma$ obtained by the concatenation of Wyner-Ziv's encoding scheme and Shannon's channel encoding scheme.
\begin{align}
&  \E_{\sigma} \Bigg[ \frac{1}{n}  \sum_{t=1}^n D\bigg(  \PP_{\sigma,U_t}(\cdot|Y^n,Z^n,E_{\delta}=0) \bigg| \bigg|   \QQ_{U_t}(\cdot|W_t,Z_t) \bigg)\Bigg] \nonumber \\
=& \sum_{(w^n,z^n,y^n)\in A_{\delta} }\PP_{\sigma}(w^n,z^n,y^n|E_{\delta}=0) \times \frac{1}{n}  \sum_{t=1}^n D\bigg(  \PP_{\sigma,U_t}(\cdot|y^n,z^n,E_{\delta}=0) \bigg| \bigg|   \QQ_{U_t}(\cdot|w_t,z_t) \bigg) \label{eq:Beliefs1} \\
=&\frac{1}{n}   \sum_{(u^n,z^n,w^n,y^n)\in A_{\delta} }\PP_{\sigma}(u^n,z^n,w^n,y^n|E_{\delta}=0)    \times \log_2 \frac{1}{\prod_{t=1}^n \QQ(u_t|w_t,z_t)}  -   \frac{1}{n}  \sum_{t=1}^n H(U_t|Y^n,Z^n,E_{\delta}=0) \nonumber \\&& \label{eq:Beliefs2} \\
\leq&H(U|W,Z)   - \frac{1}{n}  H(U^n|W^n,Y^n,Z^n,E_{\delta}=0) + \delta  \label{eq:Beliefs3} \\
\leq&H(U|W,Z)  - \frac{1}{n}  H(U^n|W^n,Z^n,E_{\delta}=0) + \delta \label{eq:Beliefs4} \\
=&H(U|W,Z) -  \frac{1}{n}  H(U^n|E_{\delta}=0) +  \frac{1}{n}  I(U^n;W^n|E_{\delta}=0) \nonumber\\
+& \frac{1}{n}  H(Z^n|W^n,E_{\delta}=0) - \frac{1}{n}  H(Z^n|U^n,W^n,E_{\delta}=0) + \delta. \label{eq:Beliefs5} 
\end{align}
Equation \eqref{eq:Beliefs1}-\eqref{eq:Beliefs2} come from the hypothesis $E_{\delta}=0$ of typical sequences $(u^n,z^n,w^n,y^n)\in A_{\delta} $ and the definition of the conditional K-L divergence \cite[pp. 24]{cover-book-2006}.\\
Equation \eqref{eq:Beliefs3} comes from property of typical sequences \cite[pp. 26]{ElGammalKim(book)11} and the conditioning that reduces entropy.\\
Equation \eqref{eq:Beliefs4} comes from the Markov chain $Z^n -\!\!\!\!\minuso\!\!\!\!- U^n -\!\!\!\!\minuso\!\!\!\!- W^n -\!\!\!\!\minuso\!\!\!\!- Y^n$ induced by the channel and the strategy $\sigma$, that implies $H(U^n|W^n,Z^n,E_{\delta}=0) =H(U^n|W^n,Y^n,Z^n,E_{\delta}=0)$.\\
Equation \eqref{eq:Beliefs5} is a reformulation of \eqref{eq:Beliefs4}.

We denote by $A_{\delta}(z^n|w^n)$ the set of sequences $z^n\in\mc{Z}^n$ that are jointly typical with $w^n$, i.e. $||Q_{WZ}^n - \QQ_{WZ}||_1< \delta$. 
\begin{align}
\frac{1}{n}  H(U^n|E_{\delta}=0)\geq& H(U)  - \frac{1}{n}   - \log_2 |\mc{U}| \cdot \PP_{\sigma}\big(E_{\delta}=1 \big),  \label{eq:Control1}\\
\frac{1}{n}  I(U^n;W^n|E_{\delta}=0) \leq& \textsf{R}  + \textsf{R}_{\textsf{L}} =       I( U;W )  + \eta, \label{eq:Control2}\\
\frac{1}{n}  H(Z^n|W^n,E_{\delta}=0) \leq& \frac{1}{n} \log_2 |A_{\delta}(z^n|w^n)| \leq  H(Z|W) + \delta, \label{eq:Control3}\\
\frac{1}{n}   H(Z^n|U^n,W^n,E_{\delta}=0) \geq& \frac{1}{n}   H(Z^n|U^n,W^n)- \frac{1}{n}   - \log_2 |\mc{U}| \cdot \PP_{\sigma}\big(E_{\delta}=1 \big) \label{eq:Control4}\\
 =& H(Z|U,W) - \frac{1}{n}   - \log_2 |\mc{U}| \cdot \PP_{\sigma}\big(E_{\delta}=1 \big).  \label{eq:Control5}
\end{align}
Equation \eqref{eq:Control1} comes from the i.i.d. source and Fano's inequality.\\
Equation \eqref{eq:Control2} comes from the cardinality of codebook given by \eqref{eq:AchievabilityB1}. This argument is also used in \cite[Eq. (23)]{MerhavShamai(StateMasking)07}.\\
Equation \eqref{eq:Control3} comes from the cardinality of $A_{\delta}(z^n|w^n)$, see also \cite[pp. 27]{ElGammalKim(book)11}.\\
Equation \eqref{eq:Control4} comes from Fano's inequality.\\
Equation \eqref{eq:Control4} comes from $ H(Z^n|U^n,W^n)= H(Z^n|U^n)=H(Z|U)=H(Z|U,W)$ due to the Markov chain $Z^n -\!\!\!\!\minuso\!\!\!\!- U^n -\!\!\!\!\minuso\!\!\!\!- W^n$ of the encoding $\sigma$, the i.i.d. property of the source $(U,Z)$, the Markov chain $Z -\!\!\!\!\minuso\!\!\!\!- U -\!\!\!\!\minuso\!\!\!\!- W$ of the single-letter characterization $\QQ_{UZW}\in\Q_0$.

Equations \eqref{eq:Beliefs5}-\eqref{eq:Control4} shows that on average, the posterior beliefs $ \PP_{\sigma,U_t}(\cdot|y^n,z^n,E_{\delta}=0)$ induced by strategy $\sigma$ is close to the target probability distribution $\QQ_U(\cdot|w,z)$.
\begin{align}
&  \E_{\sigma} \Bigg[ \frac{1}{n}  \sum_{t=1}^n D\bigg(  \PP_{\sigma,U_t}(\cdot|Y^n,Z^n,E_{\delta}=0) \bigg| \bigg|   \QQ_{U_t}(\cdot|W_t,Z_t) \bigg)\Bigg] \nonumber \\
\leq&2\delta + \eta + \frac{2}{n}   + 2 \log_2 |\mc{U}| \cdot \PP_{\sigma}\big(E_{\delta}=1 \big) := \epsilon. \label{eq:ControlFinal} 
\end{align}

Then we have: 
\begin{align}
\PP_\sigma(B^c_{\alpha,\gamma,\delta})=&1 - \PP_\sigma(B_{\alpha,\gamma,\delta})   \nonumber\\
=&\PP_\sigma(E_{\delta}=1) \PP_\sigma(B^c_{\alpha,\gamma,\delta}| E_{\delta}=1)  + \PP_\sigma(E_{\delta}=0) \PP_\sigma(B^c_{\alpha,\gamma,\delta}| E_{\delta}=0) \nonumber\\
\leq&\PP_\sigma(E_{\delta}=1)   +  \PP_\sigma(B^c_{\alpha,\gamma,\delta}| E_{\delta}=0) \nonumber\\
\leq&\varepsilon_2  +  \PP_\sigma(B^c_{\alpha,\gamma,\delta}| E_{\delta}=0) .\label{eq:ErrorTerm2}
\end{align}
Moreover:
\begin{align}
& \PP_\sigma(B^c_{\alpha,\gamma,\delta}| E_{\delta}=0)\nonumber\\
=&\sum_{w^n,y^n,z^n}\PP_{\sigma}\Big( (w^n,y^n,z^n)\in  B^c_{\alpha,\gamma,\delta} \Big| E_{\delta}=0\Big)  \label{eq:MarkovIneqB0} \\
=&\sum_{w^n,y^n,z^n}\PP_{\sigma}\Bigg( (w^n,y^n,z^n)\,\quad \text{ s.t. } \quad  \frac{|T_\alpha(w^n,y^n,z^n)|}{n}< 1-\gamma  \Bigg| E_{\delta}=0\Bigg)  \label{eq:MarkovIneqB1} \\
%=&\PP_{\sigma}\Bigg( \frac{\big|\;I^c_{\gamma}\;\big|}{n} >  \delta \Bigg| E_{\delta}=0\Bigg) \label{eq:MarkovIneq1} \\
=& \PP_{\sigma}\Bigg( \frac{1}{n} \cdot \bigg|\bigg\{t , \text{ s.t. } D\Big(\PP_{\sigma,U_t}(\cdot|y^n,z^n)\Big|\Big|\QQ_{U_t}(\cdot|w_t,z_t)\Big)\leq  \frac{\alpha^2}{2\ln 2}   \bigg\}\bigg| < 1 -\gamma \Bigg| E_{\delta}=0 \Bigg) \label{eq:MarkovIneqB2}\\
=& \PP_{\sigma}\Bigg( \frac{1}{n} \cdot \bigg| \bigg\{ t , \text{ s.t. } D\Big(\PP_{\sigma,U_t}(\cdot|y^n,z^n)\Big|\Big|\QQ_{U_t}(\cdot|w_t,z_t)\Big)>  \frac{\alpha^2}{2\ln 2}   \bigg\}\bigg| \geq \gamma \Bigg| E_{\delta}=0 \Bigg)  \label{eq:MarkovIneqB2b}\\
\leq& \frac{2\ln 2}{\alpha^2\gamma}   \cdot \E_{\sigma}\bigg[  \frac{1}{n}  \sum_{t=1}^n   D\Big(\PP_{\sigma,U_t}(\cdot|y^n,z^n)\Big|\Big|\QQ_{U_t}(\cdot|w_t,z_t)\Big)\bigg] \label{eq:MarkovIneqB3}  \\
\leq&\frac{2\ln 2}{\alpha^2\gamma}   \cdot \bigg( \eta +  \delta + \frac{2}{n}   +2 \log_2 |\mc{U}| \cdot \PP_{\sigma}\big(E_{\delta}=1 \big)  \bigg) \label{eq:MarkovIneqB4}.
\end{align}
Equation \eqref{eq:MarkovIneqB0} to \eqref{eq:MarkovIneqB2b} are simple reformulations.\\
Equation \eqref{eq:MarkovIneqB3} comes from the double use of Markov's inequality as in \cite[Lemma A.21, pp. 42]{LeTreustTomala19}. \\
Equation \eqref{eq:MarkovIneqB4} comes from \eqref{eq:ControlFinal}.

Combining equations \eqref{eq:BoundError0}, \eqref{eq:ErrorTerm2}, \eqref{eq:MarkovIneqB4} and choosing $\eta>0$ small, we obtain the following statement:
\begin{align}
&\forall \varepsilon>0, \;\forall \alpha>0, \;\forall \gamma>0,\;\exists \bar{\delta}>0,\;\forall \delta< \bar{\delta}, \;\exists \bar{n}\in \N^{\star},\;\forall n\geq  \bar{n},  \exists \sigma, \text{ s.t. } \PP_{\sigma}(B_{\alpha,\gamma,\delta}^c) \leq \varepsilon.\label{eq:propWynerZivCoding}
\end{align}
This concludes the proof of Proposition \ref{prop:WynerZivCoding}.

%%%%%%%%%%%%%%%%%%%%%%%%%%%%%%%%%%%%%%%%%%%%%%%%%%%%%%%%%%%%%%%%%%%%%%%%%%%%%%%%%%%%%%%%%%%%%%%%%%%%%%%%%%%%%%%%%%%%%%%%%%%%%%%%%%%%%%%%%%%%%%%%%%%%%%%%%%%%%%%%%%%%%%%%%%%%%%%%%%%%%%%%%%%

%%%%%%%%%%%%%%%%%%%%%%%%%%%%%%%%%%%%%%%%%%%%%%%%%%%%%%%%%%%%%%%%%%%%%%%%%%%%%%%%%%%%%%%%%%%%%%%%%%%%%%%%%%%%%%%%%%%%%%%%%%%%%%%%%%%%%%%%%%%%%%%%%%%%%%%%%%%%%%%%%%%%%%%%%%%%%%%%%%%%%%%%%%%%%%%%%%%%%%%%%%%%%%%%%%%%%%%%%%%

%\newpage

\section{Converse proof of Theorem \ref{theo:MaxMinStackelberg}}\label{sec:ConverseProof}

We consider an encoding strategy $\sigma$ of length $n\in\N^{\star}$. We denote by $T$ the uniform random variable over $\{1,\ldots,n\}$ and let $Z^{-T}$ stand for $(Z_1,\ldots,Z_{T-1},Z_{T+1},\ldots Z_n)$, where $Z_T$ has been removed. We identify $(U,Z)=(U_T,Z_T)$ and we introduce the auxiliary random variable $W=(Y^n,Z^{-T},T)$ whose joint probability distribution $\PP_{UZW}$ is defined by
\begin{align}
\PP(u,z,w) =& \PP_{\sigma}\big(u_T,z_T,y^n,z^{-T},T\big) \nonumber \\
=& \PP(T=t) \cdot  \PP_{\sigma}\big(u_T,z_T,y^n,z^{-T}\big|T=t\big) \nonumber \\
=& \frac{1}{n} \cdot  \PP_{\sigma}\big(u_t, z_t,y^n,z^{-t}\big) ,\quad \forall (u,w,z,u^n,z^n,y^n). \label{eq:distributionW}
\end{align}  
This ensures that the Markov chain $W -\!\!\!\!\minuso\!\!\!\!- U -\!\!\!\!\minuso\!\!\!\!- Z$ is satisfied. Let us fix a decoding strategy $\tau_{V^n|Y^nZ^n}$ and define $\tilde{\tau}_{V|WZ} = \tilde{\tau}_{V|Y^nZ^{-T}TZ} = \tau_{V_T|Y^nZ^n}$. The encoder's long-run distortion writes:
\begin{align}
d^{\,n}_{\textsf{e}}(\sigma,\tau)
=& \sum_{u^n,z^n,y^n}\PP_{\sigma}(u^n,z^n,y^n )  \sum_{v^n} \tau(v^n| y^n,z^n )  \cdot \Bigg[    \frac{1}{n} \sum_{t=1}^n d_{\textsf{e}}(u_t,z_t,v_t)\Bigg] \label{eq:Reformulation2} \\
=& \sum_{t=1}^n  \sum_{u_t,z_t,\atop z^{-t},y^n}  \frac{1}{n}  \cdot \PP_{\sigma}(u_t,z^n,y^n )  \sum_{v_t}\tau(v_t| y^n ,z^n)  \cdot     d_{\textsf{e}}(u_t,z_t,v_t)\label{eq:Reformulation3} \\
=& \sum_{u_t,z_t,y^n,\atop z^{-t},t} \PP_{\sigma}(u_t,z_t,y^n,z^{-t},t)  \sum_{v_t}\tau(v_t| z_t, y^n,z^{-t},t )  \cdot     d_{\textsf{e}}(u_t,z_t,v_t) \label{eq:Reformulation4} \\
=& \sum_{u,z,w} \PP(u,z,w )  \sum_{v}\tilde{\tau}(v| w,z ) \cdot     d_{\textsf{e}}(u,z,v).\label{eq:Reformulation5}
\end{align}
Equations \eqref{eq:Reformulation2} - \eqref{eq:Reformulation4} are reformulations and re-orderings.\\
Equation \eqref{eq:Reformulation5} comes from replacing the random variables $(Y^{n},Z^{-T},T)$ by $W$ and $(U_T,Z_T)$ by $(U,Z)$, whose distribution is defined in \eqref{eq:distributionW}.

Equations \eqref{eq:Reformulation2} - \eqref{eq:Reformulation5} are also valid for the decoder's distortion $d^{\,n}_{\textsf{d}}(\sigma,\tau)= \sum_{u,z,\atop w,v} \PP(u,z,w )\tilde{\tau}(v| w,z ) \cdot     d_{\textsf{d}}(u,z,v)$. A best-reply strategy $\tau\in \textsf{BR}_{\textsf{d}}(\sigma)$ reformulates as:
\begin{align}
& \tau \in \argmin_{\tau'_{V^n|Y^nZ^n}} \sum_{u^n,z^n,\atop x^n,y^n,v^n} \PP_{\sigma}(u^n,z^n,x^n,y^n) \cdot \tau'(v^n|y^n,z^n)\cdot  \Bigg[ \frac{1}{n} \sum_{t=1}^n d_{\textsf{d}}(u_t,z_t,v_t)\Bigg]\\
\Longleftrightarrow&\tilde{\tau}_{V|WZ} \in \argmin_{\tilde{\tau}'_{V|WZ}}  \sum_{u,z, w} \PP(u,z,w) \cdot \tilde{\tau}'(v|w,z) \cdot   d_{\textsf{d}}(u,z,v)\\
\Longleftrightarrow&\tilde{\tau}_{V|WZ} \in \Q_2\big(\PP_{UZW}\big).\label{eq:Identification}
\end{align}  
We now prove that the distribution $\PP_{UZW}$ defined in \eqref{eq:distributionW}, satisfies the information constraint of the set ${\Q}_0$.
\begin{align}
0 \leq& I(X^n ; Y^n) - I(U^n,Z^n;Y^n) \label{eq:ConverseW1} \\
\leq& \sum_{t=1}^n H( Y_t)  -  \sum_{t=1}^n H(Y_t | X_t) -    I(U^n;Y^n | Z^n)   \label{eq:ConverseW2} \\
\leq& n \cdot \max_{\PP_X} I(X ; Y)  -    \sum_{t=1}^n   I(U_t ;Y^n | Z^n,U^{t-1}) \label{eq:ConverseW3} \\
=& n \cdot \max_{\PP_X} I(X ; Y)  -    \sum_{t=1}^n   I(U_t;Y^n, Z^{-t},U^{t-1} | Z_t) \label{eq:ConverseW4} \\
\leq& n \cdot \max_{\PP_X} I(X ; Y)  -    \sum_{t=1}^n   I(U_t;Y^n, Z^{-t} | Z_t) \label{eq:ConverseW5} \\
=& n \cdot \max_{\PP_X} I(X ; Y)  -   n \cdot     I(U_T;Y^n, Z^{-T} | Z_T,T) \label{eq:ConverseW6} \\
=& n \cdot \max_{\PP_X} I(X ; Y)  -   n \cdot     I(U_T;Y^n, Z^{-T},T | Z_T) \label{eq:ConverseW7} \\
%&=& n \cdot \max_{\PP_X} I(X ; Y)  -   n \cdot     I(U_T,Z_T;Y^n, Z^{-T},T ) + n \cdot     I(Z_T;Y^n, Z^{-T},T) \label{eq:ConverseW8} \\
%&=& n \cdot \max_{\PP_X} I(X ; Y)  -   n \cdot     I(U_T;Y^n, Z^{-T},T ) + n \cdot     I(Z_T;Y^n, Z^{-T},T) \label{eq:ConverseW8} \\
=& n \cdot \max_{\PP_X} I(X ; Y)  -   n \cdot     I(U;W | Z) \label{eq:ConverseW8} \\
=& n \cdot  \bigg(\max_{\PP_X} I(X ; Y)  -  I(U ; W) + I(Z ; W) \bigg) \label{eq:ConverseW9} .
\end{align}
Equation \eqref{eq:ConverseW1} comes from the Markov chain $Y^n  -\!\!\!\!\minuso\!\!\!\!- X^n   -\!\!\!\!\minuso\!\!\!\!- (U^n,Z^n)$.\\
Equation \eqref{eq:ConverseW2} comes from the memoryless property of the channel and from removing the positive term $I(U^n; Z^n)\geq0$.\\
Equation \eqref{eq:ConverseW3} comes from taking the maximum over $\PP_X$ and the chain rule.\\
Equation \eqref{eq:ConverseW4} comes from the i.i.d. property of the source $(U,Z)$ which implies $I(U_t,Z_t;Z^{-t},U^{t-1})=I(U_t;Z^{-t},U^{t-1} | Z_t)=0$.\\
Equation \eqref{eq:ConverseW5} comes from removing $I(U_t;U^{t-1} | Y^n, Z^{-t},Z_t)\geq0$.\\
Equation \eqref{eq:ConverseW6} comes from the introduction of the uniform random variable $T\in\{1,\ldots,n\}$.\\
Equation \eqref{eq:ConverseW7} comes from the independence between $T$ and $(U_T,Z_T)$, which implies $I(U_T,Z_T;T)  = I(U_T;T|Z_T)  =0$.\\
Equation \eqref{eq:ConverseW8} comes from the identification of $(U,Z)=(U_T,Z_T)$ and $W = (Y^{n},Z^{-T},T)$. \\
Equation \eqref{eq:ConverseW9} comes from the  Markov chain property $W -\!\!\!\!\minuso\!\!\!\!- U -\!\!\!\!\minuso\!\!\!\!- Z$. This proves that the distribution $\PP_{\sigma,UZW}$  belongs to the set $\Q_0$.

Therefore, for any encoding strategy $\sigma$ and all $n$, we have:
\begin{align}
&\max_{\tau \in \textsf{BR}_{\textsf{d}}(\sigma)}  d_{\textsf{e}}^{\,n}(\sigma,\tau)  \\
=&\max_{\tilde{\tau}_{V|WZ} \in \atop \Q_2(\PP_{UZW})}   \sum_{u,z,w} \PP(u,z,w )  \sum_{v}\tilde{\tau}(v| w,z ) \cdot     d_{\textsf{e}}(u,z,v)\\
=&\max_{\tilde{\tau}_{V|WZ} \in  \atop \Q_2(\PP_{UZW})} \E_{\PP(u,z,w) \atop  \tilde{\tau}_{V|WZ} } \bigg[d_{\textsf{e}}(U,Z,V)\bigg]\\
\geq&\inf_{ \QQ_{UZW} \in \Q_0} \max_{\QQ_{V|WZ} \in  \atop \Q_2(\QQ_{UZW})} \E_{\QQ_{UZW} \atop  \QQ_{V|WZ}} \bigg[d_{\textsf{e}}(U,Z,V)\bigg]= D_{\textsf{e}}^{\star}.
\end{align}
The last inequality holds because  $\PP_{\sigma,UZW}\in {\Q}_0$.

The proof for the cardinality bound $|\mc{W}| =  \min(|\mc{U}|+1,|\mc{V}|^{|\mc{Z}|})$ follows two arguments. The bound $|\mc{W}| =  |\mc{U}|+1$ comes \cite[Corollary 17.1.5, pp. 157]{rockafellar1970convex}, also in \cite[Corollary A.2, pp. 26]{LeTreustTomala19}. The bound $|\mc{W}| =  |\mc{V}|^{|\mc{Z}|}$ comes from assuming that the encoder tells to the decoder to select a function from the side information $\mc{Z}$ to the symbols $\mc{V}$, as discussed in Sec. \ref{sec:Concavification}. 

This concludes the proof of \eqref{eq:Converse} in Theorem \ref{theo:MaxMinStackelberg}.

%%%%%%%%%%%%%%%%%%%%%%%%%%%%%%%%%%%%%%%%%%%%%%%%%%%%%%%%%%%%%%%%%%%%%%%%%%%%%%%%%%%%%%%%%%%%%%%%%%%%%%%%%%%%%%%%%%%%%%%%%%%%%%%%%%%%%%%%%%%%%%%%%%%%%%%%%%%%%%%%%%%%%%%%%%%%%%%%%%%%%%%%%%%%%%%%%%%%%%%%%%%%%%%%

%%%%%%%%%%%%%%%%%%%%%%%%%%%%%%%%%
%%%%%%%%%%%%%%%%%%%%%%%%%%%%%%%%%%%%%%%%%%%%%%%%%%%%%%%%%%%%%%%%%%%%%%%%%%%%%%%%%%%%%%%%%%%%%%%%%%%%%%%%%%%%%%%%%%%%%%%%%%%%%%%%%%%%%%%%%%%%%%%%%%%%%%%%%%%%%%%%%%%%%%%%%%%%%%%%%%%%%%%%%%%%%%%%%%%%%%%%%%%%%%%%%%%%%%%%%%%%%

%%%%%%%%%%%%%%%%%%%%%%%%%%%%%%
%%%%%%%%%%%%%%%%%%%%%%%%%%%%%%%%%%%%%

\section{Proof of Proposition \ref{prop:WynerZivDSBS}}\label{sec:ProofPropDSBS}
The average distortion $\Psi_{\textsf{e}}(q)$ defined in \eqref{eq:AverageFunctionPsi1} is piece-wise linear, hence the optimal triple of posteriors may belong to distinct intervals $q_1\in [0, \nu_1)$, $q_2\in [\nu_1,\nu_2)$, $q_3\in [\nu_2,1]$. Since $\delta_0=\delta_1$ and $\kappa=0$, the function $\Psi_{\textsf{e}}(q)$ is symmetric and constant over the interval $[\nu_1,\nu_2]$, as depicted in Fig \ref{fig:DSBS_02}. The optimal splitting must satisfy $(q_1,q_2,q_3)=(q_1,\frac12,1- q_1)$, since $q_2=\frac12$ provides the highest entropy $h(q_2)=H(U|Z)$. Equations \eqref{eq:Lambda1}-\eqref{eq:Lambda3} reformulate as
\begin{align}
\lambda_1 =& \frac12 \cdot \frac{C}{H(U|Z) -h(q_1)},\\
\lambda_2 =& 1- \frac{C}{H(U|Z) -h(q_1)},\\
\lambda_3 =& \frac12 \cdot \frac{C}{H(U|Z) -h(q_1)}.
\end{align}
We examine the feasibility conditions, i.e. $(\lambda_1,\lambda_2,\lambda_3)\in [0,1]^3$. We recall that the notation $h^{-1}\big(H(U|Z)-C\big)$ stands for the unique solution $q\in[0,\frac12]$ of the equation $h(q)=H(U|Z)-C$. Since $H(U|Z)\geq h(q_1)$ for all $q_1$, we have $\lambda_1\geq 0$, $\lambda_2\leq 1$ and $\lambda_3\geq 0$, moreover
\begin{align}
\lambda_1\leq 1 
\Longleftrightarrow   h(q_1) \leq   H(U|Z)  - \frac12 \cdot C 
\Longleftrightarrow   q_1 \leq   h^{-1}\Big(H(U|Z)  - \frac12 \cdot C\Big)
\end{align}
\begin{align}
\lambda_2\geq 0 
\Longleftrightarrow h(q_1) \leq H(U|Z) -C
\Longleftrightarrow q_1 \leq h^{-1}\Big(H(U|Z) -C\Big).
\end{align}
Since the function $h^{-1}$ is increasing over the interval $[0,1]$, we have $h^{-1}\Big(H(U|Z) -\frac12 \cdot C\Big)\leq h^{-1}\Big(H(U|Z) -C\Big)$. This proves the following Lemma.
\begin{lemma}\label{lemma:FeasiblePosterior}
The splitting $(q_1,q_2,q_3)=(q_1,\frac12,1-q_1)$ is feasible if and only if 
\begin{align}
q_1 \leq h^{-1}\Big(H(U|Z) -C\Big).
\end{align}
\end{lemma}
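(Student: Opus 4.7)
My plan is to exploit the explicit formulas for $(\lambda_1,\lambda_2,\lambda_3)$ already obtained by inverting the system \eqref{eq:SystemEquation1}--\eqref{eq:SystemEquation3} at the symmetric triple $(q_1,\tfrac12,1-q_1)$, and simply inspect which of the six constraints $\lambda_i\in[0,1]$ actually cut the feasible region. The symmetry of the DSBS ($\delta_0=\delta_1$, $\kappa=0$) gives $\lambda_1=\lambda_3$, reducing to four inequalities; moreover, concavity of $h$ (Lemma \ref{lemma:Concavity}) combined with its symmetry about $q=\tfrac12$ yields $h(q_1)\leq h(\tfrac12)=H(U|Z)$ for every $q_1$, so the denominator $H(U|Z)-h(q_1)$ in the weight formulas is nonnegative. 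This makes three of the four surviving inequalities trivial: $\lambda_1\geq 0$, $\lambda_3\geq 0$, and $\lambda_2\leq 1$.

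It therefore suffices to handle $\lambda_1\leq 1$ and $\lambda_2\geq 0$. Rearranging each into an upper bound on $h(q_1)$ gives $h(q_1)\leq H(U|Z)-\tfrac{C}{2}$ and $h(q_1)\leq H(U|Z)-C$, respectively. Since $h$ is strictly concave and symmetric around $\tfrac12$, it is strictly increasing on $[0,\tfrac12]$, hence invertible on that branch, and the two conditions translate into the bounds $q_1\leq h^{-1}(H(U|Z)-\tfrac{C}{2})$ and $q_1\leq h^{-1}(H(U|Z)-C)$. Strict monotonicity of $h^{-1}$ together with $H(U|Z)-C\leq H(U|Z)-\tfrac{C}{2}$ shows the second bound is the tighter one, so it alone determines feasibility and gives the stated equivalence.

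The only subtlety is to ensure $h^{-1}(H(U|Z)-C)$ is well defined, i.e.\ $C\leq H(U|Z)$; when $C>H(U|Z)$ the trivial splitting $(q_1,q_3)=(0,1)$ is already feasible and case (3) of Proposition \ref{prop:WynerZivDSBS} takes over, so the hypothesis is consistent with the surrounding case analysis. No deeper obstacle appears: the lemma is essentially a clean monotonicity argument applied to the explicit weight formulas \eqref{eq:Lambda1}--\eqref{eq:Lambda3} specialised to the symmetric triple.
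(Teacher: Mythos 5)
Your argument is correct and follows essentially the same route as the paper's own proof: specialise the weight formulas \eqref{eq:Lambda1}--\eqref{eq:Lambda3} to $(q_1,\tfrac12,1-q_1)$, observe that $h(q_1)\leq H(U|Z)$ makes $\lambda_1\geq 0$, $\lambda_3\geq 0$, $\lambda_2\leq 1$ automatic, and compare the remaining conditions $\lambda_1\leq 1$ and $\lambda_2\geq 0$ via monotonicity of $h^{-1}$ on $[0,\tfrac12]$ to see that $q_1\leq h^{-1}\big(H(U|Z)-C\big)$ is the binding one. You even state the comparison of the two bounds in the correct direction, whereas the paper's text contains a harmless slip in that inequality; the conclusion is identical.
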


%%%%%%%%%%%%%%%%%%%%%%%%%%%%%%%%%%
%\subsection{Sender's distortion}
We assume that $(q_1,q_2,q_3)=(q_1,\frac12,1- q_1)$, we define the \emph{encoder's distortion function} by
\begin{align}
\Phi_{\textsf{e}}(q_1)=&\lambda_1 \Psi_{\textsf{e}}(q_1) + \lambda_2 \Psi_{\textsf{e}}\Big(\frac12\Big) + \lambda_3 \Psi_{\textsf{e}}(1-q_1)\\
=&  \frac12 \cdot \frac{C}{H(U|Z) -h(q_1)}\cdot   q_1 + \bigg( 1 - \frac{C}{H(U|Z) -h(q_1)}\bigg)\cdot \delta +  \frac12 \cdot \frac{C}{H(U|Z) -h(q_1)} \cdot q_1\\
=&\delta - \frac{( \delta - q_1) \cdot  C }{H(U|Z) -h(q_1)},
\end{align}
Its derivative is
\begin{align}
\Phi_{\textsf{e}}'(q_1)
=&  \frac{C}{(H(U|Z) -h(q_1))^2} \cdot \bigg(   H(U|Z) -h(q_1) - h'(q_1)\cdot ( \delta - q_1)\bigg),
\end{align}
where the derivative of the entropy $h'(q)$ writes
\begin{align}
h'(q) =& \log_2 \frac{1-q}{q} - (1 - 2\cdot \delta)\cdot \log_2\frac{1-q\star \delta}{q\star \delta}.
\end{align}
%{\red XXX base du logarithm dans la dérivée} c'est OK
We examine the term $k(q)= \Big(H(U|Z) -h(q) - h'(q)\cdot ( \delta - q)\Big)$ of the function $\Phi_{\textsf{e}}'(q)$. We have $\lim_{q\to 0}k(q) = - \infty$ since $\lim_{q\to0}h'(q) = +\infty$ and $k(\delta)= H(U|Z) -h(\delta) = H_b(\delta\star\delta) - H_b(\delta)>0$ since $\delta<\frac12$. The derivative $k'(q)= - h''(q)\cdot ( \delta - q)>0$ is strictly positive because $ \delta > q$ and the entropy is strictly concave $h''(q)<0$. Hence the equation $k(q)=0$ has a unique solution $q^{\star}\in(0,\delta[$. 

The derivative $\Phi_{\textsf{e}}'(q)$ is non-positive on the interval $q\in (0,q^{\star}]$ and non-negative on the interval $q\in [q^{\star},\delta)$, hence the distortion $\Phi_{\textsf{e}}(q)$ reaches its minimum in $q^{\star}$.

1) If $q^{\star}\leq h^{-1}\Big(H(U|Z) -C\Big)$, then the optimal splitting is \\\\
%\begin{tabular}{lll}
%$q_1 = q^{\star}$ & $q_2 = \frac12 $ & $q_3 = 1-q^{\star}$\\
%$\lambda_1=  \frac12 \cdot \frac{C}{H(U|Z) - h(q^{\star})}$& $\lambda_2=  1- \frac{C}{H(U|Z) - h(q^{\star})}$& $\lambda_3 =  \frac12 \cdot \frac{C}{H(U|Z) - h(q^{\star})}$\\
%\end{tabular}\\\\
\begin{tabular}{|l|l|l|}
\hline
$q_1 = q^{\star}$ & $q_2 = \frac12 $ & $q_3 = 1-q^{\star}$\\
\hline
$\lambda_1=  \frac12 \cdot \frac{C}{H(U|Z) - h(q^{\star})}$& $\lambda_2=  1- \frac{C}{H(U|Z) - h(q^{\star})}$& $\lambda_3 =  \frac12 \cdot \frac{C}{H(U|Z) - h(q^{\star})}$\\
\hline
\end{tabular}\\\\
corresponding to the strategies\\\\
\begin{tabular}{|l|l|l|}
\hline
$\alpha_1 =  (1 - q^{\star}) \cdot \frac{C}{H(U|Z) - h(q^{\star})}$& $\alpha_2 =  1- \frac{C}{H(U|Z) - h(q^{\star})}$& $\alpha_3 =  q^{\star} \cdot \frac{C}{H(U|Z) - h(q^{\star})}$\\
\hline
$\beta_1 =  q^{\star}  \cdot \frac{C}{H(U|Z) - h(q^{\star})}$& $\beta_2 =  1- \frac{C}{H(U|Z) - h(q^{\star})}$& $\beta_3 =  (1 - q^{\star}) \cdot \frac{C}{H(U|Z) - h(q^{\star})}$\\
\hline
\end{tabular}\\\\
and the optimal distortion is
\begin{align}
D_{\textsf{e}}^{\star} =& \delta - C\cdot \frac{ \delta - q^{\star}  }{H(U|Z) - h(q^{\star})} .
\end{align}

2) If $q^{\star}> h^{-1}\Big(H(U|Z) -C\Big)$, then the optimal splitting is\\\\
\begin{tabular}{|l|l|l|}
\hline
$q_1 = h^{-1}\big(H(U|Z)-C\big)$ & $q_2 = \frac12 $ & $q_3 = 1-h^{-1}\big(H(U|Z)-C\big)$\\
\hline
$\lambda_1=  \frac12 $& $\lambda_2=  0$& $\lambda_3 =  \frac12$\\
\hline
\end{tabular}\\\\
corresponding to the strategies\\\\
\begin{tabular}{|l|l|l|}
\hline
$\alpha_1 =  1 - h^{-1}\big(H(U|Z)-C\big)$& $\alpha_2 =  0$& $\alpha_3 =  h^{-1}\big(H(U|Z)-C\big)$\\
\hline
$\beta_1 =  h^{-1}\big(H(U|Z)-C\big)$& $\beta_2 =  0$& $\beta_3 =  1 - h^{-1}\big(H(U|Z)-C\big)$\\
\hline\hline
\end{tabular}\\\\
and the optimal distortion is
\begin{align}
D_{\textsf{e}}^{\star} =& h^{-1}\big(H(U|Z)-C\big).
\end{align}

3) If $C>H(U|Z)$, then the extreme splitting $(q_1,q_3)=(0,1)$ is feasible and $D_{\textsf{e}}^{\star} =0$.

This concludes the proof of Proposition  \ref{prop:WynerZivDSBS}.
%%%%%%%%%%%%%%%%%%%%%%%%%%%%%%%%%%%%%%%%%%%%%%%%%%%%%%%%%%%%%%%%%%%%%%%%%%%%%%%%%%%%%%%%%%%%%%%%%%%%%%
%
%\bibliographystyle{IEEEtran}
%\bibliography{/Users/maelletreust/Documents/Redaction/BiblioMael}

\begin{thebibliography}{10}
\providecommand{\url}[1]{#1}
\csname url@samestyle\endcsname
\providecommand{\newblock}{\relax}
\providecommand{\bibinfo}[2]{#2}
\providecommand{\BIBentrySTDinterwordspacing}{\spaceskip=0pt\relax}
\providecommand{\BIBentryALTinterwordstretchfactor}{4}
\providecommand{\BIBentryALTinterwordspacing}{\spaceskip=\fontdimen2\font plus
\BIBentryALTinterwordstretchfactor\fontdimen3\font minus
  \fontdimen4\font\relax}
\providecommand{\BIBforeignlanguage}[2]{{%
\expandafter\ifx\csname l@#1\endcsname\relax
\typeout{** WARNING: IEEEtran.bst: No hyphenation pattern has been}%
\typeout{** loaded for the language `#1'. Using the pattern for}%
\typeout{** the default language instead.}%
\else
\language=\csname l@#1\endcsname
\fi
#2}}
\providecommand{\BIBdecl}{\relax}
\BIBdecl

\bibitem{LeTreustTomala(Allerton)16}
M.~Le~Treust and T.~Tomala, ``Information design for strategic coordination of
  autonomous devices with non-aligned utilities,'' in \emph{Proc. 54th Annual Allerton Conference on Communication, Control, and Computing (Allerton)}, Monticello, Illinois, Sept. 2016,  pp. 233--242.

\bibitem{LeTreustTomala(Gretsi)17}
------, ``Persuasion bayésienne pour la coordination stratégique d'appareils
  autonomes ayant des objectifs non-alignés,'' in \emph{Actes de la Conférence
  du Groupement de Recherche en Traitement du Signal et des Images (GRETSI17)},
  Juan-les-Pins, France, Sept. 2017.

\bibitem{LeTreustTomala(IZS)18}
------, ``Strategic coordination with state information at the decoder,'' in
  \emph{Proc. International Zurich Seminar on Information and Communication (IZS18)}, Zurich, Switzerland, Feb. 2018, pp. 30--34.

\bibitem{LeTreustTomala19}
------, ``Persuasion with limited communication capacity,'' \emph{Journal of Economic
  Theory}, vol. 184, p. 104940, 2019.

\bibitem{wyner-it-1976}
A.~D. Wyner and J.~Ziv, ``The rate-distortion function for source coding with
  side information at the decoder,'' \emph{IEEE Transactions on Information
  Theory}, vol.~22, no.~1, pp. 1--11, Jan. 1976.

\bibitem{MerhavShamai03}
N.~Merhav and S.~Shamai, ``On joint source-channel coding for the {W}yner-{Z}iv
  source and the {G}el'fand-{P}insker channel,'' \emph{IEEE Transactions on
  Information Theory}, vol.~49, no.~11, pp. 2844--2855, Nov. 2003.

\bibitem{KamenicaGentzkow11}
E.~Kamenica and M.~Gentzkow, ``Bayesian persuasion,'' \emph{American Economic
  Review}, vol. 101, pp. 2590--2615, 2011.

\bibitem{AkyolLangbortBasar15}
E.~Akyol, C.~Langbort, and T.~Ba\c{s}ar, ``Strategic compression and
  transmission of information,'' in \emph{Proc. IEEE Information Theory Workshop -
  Fall (ITW)}, Jeju, South Korea, Oct. 2015, pp. 219--223.

\bibitem{AkyolLangbortBasar16}
------, ``On the role of side information in strategic communication,'' in
  \emph{IEEE International Symposium on Information Theory (ISIT)}, Barcelona, Spain, July 2016,  pp. 1626--1630.

\bibitem{AkyolLangbortBasarIEEE17}
------, ``Information-theoretic approach to strategic communication as a
  hierarchical game,'' \emph{Proceedings of the IEEE}, vol. 105, no.~2, pp.
  205--218, 2017.

\bibitem{CrawfordSobel1982StrategicInformation}
V.~P. Crawford and J.~Sobel, ``Strategic information transmission,''
  \emph{Econometrica}, vol.~50, no.~6, pp. 1431--1451, 1982.

\bibitem{NadendlaLangbortBasar(TCOM)18}
V.~S.~S. {Nadendla}, C.~{Langbort}, and T.~{Ba\c{s}ar}, ``Effects of subjective
  biases on strategic information transmission,'' \emph{IEEE Transactions on
  Communications}, vol.~66, no.~12, pp. 6040--6049, Dec. 2018.

\bibitem{SaritasYukselGezici(ArXiV)17}
S.~Sar{\i}ta\c{s}, S.~Yüksel, and S.~Gezici, ``Dynamic signaling games with
  quadratic criteria under nash and stackelberg equilibria,'' \emph{preprint
  available [on-line] https://arxiv.org/abs/1704.03816}, 2017.

\bibitem{SaritasYukselGezici(TAC)17}
------, ``Quadratic multi-dimensional signaling games and affine equilibria,''
  \emph{IEEE Transactions on Automatic Control}, vol.~62, no.~2, pp. 605--619,
  Feb. 2017.

\bibitem{SaritasGeziciYuksel(TSP)19}
S.~{Sar{\i}ta\c{s}}, S.~{Gezici}, and S.~{Yüksel}, ``Hypothesis testing under
  subjective priors and costs as a signaling game,'' \emph{IEEE Transactions on
  Signal Processing}, vol.~67, no.~19, pp. 5169--5183, Oct. 2019.

\bibitem{FarokhiTeixeiraLangbort17}
F.~Farokhi, A.~M.~H. Teixeira, and C.~Langbort, ``Estimation with strategic
  sensors,'' \emph{IEEE Transactions on Automatic Control}, vol.~62, no.~2, pp.
  724--739, Feb. 2017.

\bibitem{MarecekShortenYu15}
J.~Y.~Y. Jakub~Mare\v{c}ek, Robert~Shorten, ``Signaling and obfuscation for
  congestion control,'' \emph{ACM SIGecom Exchanges}, vol.~88, no.~10, pp.
  2086--2096, 2015.

  \bibitem{TavafoghiTeneketzis(Allerton)17}
H.~{Tavafoghi} and D.~{Teneketzis}, ``Informational incentives for congestion
  games,'' in   \emph{Proc. 55th Annual Allerton Conference on Communication,
  Control, and Computing (Allerton)}, Monticello, Illinois, Oct. 2017, pp. 1285--1292.

\bibitem{DughmiXu16}
S.~Dughmi and H.~Xu, ``Algorithmic bayesian persuasion,'' in \emph{Proc. 47th ACM Symposium on Theory of Computing (STOC)}, Portland, Oregon, June 2016.

\bibitem{Dughmi17}
S.~Dughmi, ``Algorithmic information structure design: A survey,'' \emph{ACM
  SIGecom Exchanges}, vol.~15, no.~2, pp. 2--24, Jan. 2017.

\bibitem{DughmiKempeQiang16}
S.~Dughmi, D.~Kempe, and R.~Qiang, ``Persuasion with limited communication,'' in
  \emph{Proc. 17th ACM conference on Economics and Computation
  (ACM EC'16)}, Maastricht, The Netherlands, June 2016.

\bibitem{BerryTse(ShannonMetNash)11}
R.~Berry and D.~Tse, ``{S}hannon meets {N}ash on the interference channel,''
  \emph{IEEE Transactions on Information Theory}, vol.~57, no.~5, pp.
  2821--2836, May 2011.

\bibitem{PerlazaTandonPoorHan15}
S.~M. Perlaza, R.~Tandon, H.~V. Poor, and Z.~Han, ``Perfect output feedback in
  the two-user decentralized interference channel,'' \emph{IEEE Transactions on
  Information Theory}, vol.~61, no.~10, pp. 5441--5462, Oct. 2015.

\bibitem{GamalCover82}
A.~E. Gamal and T.~Cover, ``Achievable rates for multiple descriptions,''
  \emph{IEEE Transactions on Information Theory}, vol.~28, no.~6, pp. 851--857,
  Nov. 1982.

\bibitem{LapidothMalarWigger14}
A.~Lapidoth, A.~Malär, and M.~Wigger, ``Constrained source-coding with side
  information,'' \emph{IEEE Transactions on Information Theory}, vol.~60,
  no.~6, pp. 3218--3237, June 2014.

\bibitem{DemboWeissman03}
A.~{Dembo} and T.~{Weissman}, ``The minimax distortion redundancy in noisy
  source coding,'' \emph{IEEE Transactions on Information Theory}, vol.~49,
  no.~11, pp. 3020--3030, Nov. 2003.

\bibitem{TimoGrantKramer13}
R.~{Timo}, A.~{Grant}, and G.~{Kramer}, ``Lossy broadcasting with complementary
  side information,'' \emph{IEEE Transactions on Information Theory}, vol.~59,
  no.~1, pp. 104--131, Jan. 2013.

\bibitem{Yamamoto88}
H.~{Yamamoto}, ``A rate-distortion problem for a communication system with a
  secondary decoder to be hindered,'' \emph{IEEE Transactions on Information
  Theory}, vol.~34, no.~4, pp. 835--842, July 1988.

\bibitem{Yamamoto97}
H.~Yamamoto, ``Rate-distortion theory for the shannon cipher system,''
  \emph{IEEE Transactions on Information Theory}, vol.~43, no.~3, pp. 827--835,
  May 1997.

\bibitem{SchielerCuff(RateDistortion14)}
C.~Schieler and P.~Cuff, ``Rate-distortion theory for secrecy systems,''
  \emph{IEEE Transactions on Information Theory}, vol.~60, no.~12, pp.
  7584--7605, Dec. 2014.

\bibitem{SchielerCuff(Henchman)16}
------, ``The henchman problem: Measuring secrecy by the minimum distortion in
  a list,'' \emph{IEEE Transactions on Information Theory}, vol.~62, no.~6, pp.
  3436--3450, June 2016.

\bibitem{Lapidoth97}
A.~Lapidoth, ``On the role of mismatch in rate distortion theory,'' \emph{IEEE
  Transactions on Information Theory}, vol.~43, no.~1, pp. 38--47, Jan. 1997.

\bibitem{MerhavKaplanLapidothShamai94}
N.~{Merhav}, G.~{Kaplan}, A.~{Lapidoth}, and S.~{Shamai Shitz}, ``On
  information rates for mismatched decoders,'' \emph{IEEE Transactions on
  Information Theory}, vol.~40, no.~6, pp. 1953--1967, Nov. 1994.

\bibitem{Zamir02}
R.~{Zamir}, ``The index entropy of a mismatched codebook,'' \emph{IEEE
  Transactions on Information Theory}, vol.~48, no.~2, pp. 523--528, Feb. 2002.

\bibitem{Somekh15}
A.~{Somekh-Baruch}, ``A general formula for the mismatch capacity,'' \emph{IEEE
  Transactions on Information Theory}, vol.~61, no.~9, pp. 4554--4568, Sep.
  2015.

\bibitem{ScarlettMartinezFabregas16}
J.~{Scarlett}, A.~{Martinez}, and A.~{Guillén i Fàbregas}, ``Multiuser random
  coding techniques for mismatched decoding,'' \emph{IEEE Transactions on
  Information Theory}, vol.~62, no.~7, pp. 3950--3970, July 2016.

\bibitem{ZhouTanMotani19}
L.~{Zhou}, V.~Y.~F. {Tan}, and M.~{Motani}, ``The dispersion of mismatched
  joint source-channel coding for arbitrary sources and additive channels,''
  \emph{IEEE Transactions on Information Theory}, vol.~65, no.~4, pp.
  2234--2251, April 2019.

\bibitem{Forges94}
F.~Forges, ``Non-zero-sum repeated games and information transmission,''
  \emph{in: N. Meggido, Essays in Game Theory in Honor of Michael Maschler,
  Springer-Verlag}, no.~6, pp. 65--95, 1994.

\bibitem{stackelberg-book-1934}
H.~von Stackelberg, \emph{Marketform und Gleichgewicht}.\hskip 1em plus 0.5em
  minus 0.4em\relax Oxford University Press, 1934.

\bibitem{Nash51}
J.~Nash, ``Non-cooperative games,'' \emph{Annals of Mathematics}, vol.~54, pp.
  286--295, 1951.

\bibitem{BergemannMorris16}
D.~Bergemann and S.~Morris, ``Information design, bayesian persuasion, and
  bayes correlated equilibrium,'' \emph{American Economic Review Papers and
  Proceedings}, vol. 106, no.~5, pp. 586--591, May 2016.

\bibitem{Taneva16}
I.~Taneva, ``Information design,'' \emph{Manuscript, School of Economics, The
  University of Edinburgh}, 2016.

\bibitem{BergemannMorris17}
D.~Bergemann and S.~Morris, ``Information design: a unified perspective,''
  \emph{Cowles Foundation Discussion Paper No 2075}, 2017.

\bibitem{AlonsoCamara(JET)2016}
R.~Alonso and O.~C\^{a}mara, ``Bayesian persuasion with heterogeneous priors,''
  \emph{Journal of Economic Theory}, vol. 165, pp. 672--706, 2016.

\bibitem{LaclauRenou17}
M.~Laclau and L.~Renou, ``Public persuasion,'' \emph{working paper}, Feb. 2017.

\bibitem{TsakasTsakas2017}
E.~Tsakas and N.~Tsakas, ``Noisy persuasion,'' \emph{working paper}, 2017.

\bibitem{Blume}
A.~Blume, O.~J. Board, and K.~Kawamura, ``Noisy talk,'' \emph{Theoretical
  Economics}, vol.~2, pp. 395--440, July 2007.

\bibitem{HernandezVonStengel14}
P.~Hern\'{a}ndez and B.~von Stengel, ``Nash codes for noisy channels,''
  \emph{Operations Research}, vol.~62, no.~6, pp. 1221--1235, Nov. 2014.

\bibitem{Sims03}
C.~Sims, ``Implication of rational inattention,'' \emph{Journal of Monetary
  Economics}, vol.~50, no.~3, pp. 665--690, April 2003.

\bibitem{GentzkowKamenica14}
M.~Gentzkow and E.~Kamenica, ``Costly persuasion,'' \emph{American Economic
  Review}, vol. 104, pp. 457--462, 2014.

\bibitem{NeymanOkada99}
A.~Neyman and D.~Okada, ``Strategic entropy and complexity in repeated games,''
  \emph{Games and Economic Behavior}, vol.~29, no. 1--2, pp. 191--223, 1999.

\bibitem{NeymanOkada00}
------, ``Repeated games with bounded entropy,'' \emph{Games and Economic
  Behavior}, vol.~30, no.~2, pp. 228--247, 2000.

\bibitem{NeymanOkada09}
------, ``Growth of strategy sets, entropy, and nonstationary bounded recall,''
  \emph{Games and Economic Behavior}, vol.~66, no.~1, pp. 404--425, 2009.

\bibitem{GossnerVieille02}
O.~Gossner and N.~Vieille, ``How to play with a biased coin?'' \emph{Games and
  Economic Behavior}, vol.~41, no.~2, pp. 206--226, 2002.

\bibitem{GossnerTomala06}
O.~Gossner and T.~Tomala, ``Empirical distributions of beliefs under imperfect
  observation,'' \emph{Mathematics of Operation Research}, vol.~31, no.~1, pp.
  13--30, 2006.

\bibitem{GossnerTomala07}
------, ``Secret correlation in repeated games with imperfect monitoring,''
  \emph{Mathematics of Operation Research}, vol.~32, no.~2, pp. 413--424, 2007.

\bibitem{GossnerLarakiTomala09}
O.~Gossner, R.~Laraki, and T.~Tomala, ``Informationally optimal correlation,''
  \emph{Mathematical Programming}, vol. 116, no. 1-2, pp. 147--172, 2009.

\bibitem{GossnerHernandezNeyman06}
O.~Gossner, P.~Hern\'{a}ndez, and A.~Neyman, ``Optimal use of communication
  resources,'' \emph{Econometrica}, vol.~74, no.~6, pp. 1603--1636, 2006.

\bibitem{Cuff(ImplicitCoordination)11}
P.~Cuff and L.~Zhao, ``Coordination using implicit communication,'' in \emph{Proc. IEEE
  Information Theory Workshop (ITW)}, Paraty, Brazil, Oct. 2011, pp. 467--471.

\bibitem{CuffPermuterCover10}
P.~Cuff, H.~Permuter, and T.~Cover, ``Coordination capacity,''
  \emph{IEEE Transactions on Information Theory}, vol.~56, no.~9, pp.
  4181--4206, Sept. 2010.

  \bibitem{CuffSchieler11}
P.~Cuff and C.~Schieler, ``Hybrid codes needed for coordination over the
  point-to-point channel,'' in \emph{Proc. 49th Annual Allerton Conference on Communication, Control, and Computing (Allerton)}, Monticello, Illinois, Sept. 2011, pp. 235--239.

\bibitem{LeTreust(EmpiricalCoordination)17}
M.~Le~Treust, ``Joint empirical coordination of source and channel,''
  \emph{IEEE Transactions on Information Theory}, vol.~63, no.~8, pp.
  5087--5114, Aug. 2017.

\bibitem{CerviaLuzziLeTreustBloch(IT)18}
G.~Cervia, L.~Luzzi, M.~Le~Treust, and M.~R. Bloch, ``Strong coordination of
  signals and actions over noisy channels with two-sided state information,'' \emph{IEEE Transactions on Information Theory}, vol. 66, no. 8, pp. 4681-4708, Aug. 2020.

\bibitem{shannon-bell-1948}
C.~E. Shannon, ``A mathematical theory of communication,'' \emph{Bell System
  Technical Journal}, vol.~27, pp. 379--423, 1948.

\bibitem{LeTreust(ISIT-TwoSided)15}
M.~Le~Treust, ``Empirical coordination with two-sided state information and
  correlated source and state,'' in \emph{Proc. IEEE International Symposium on
  Information Theory (ISIT)}, Hong-Kong, July 2015, pp. 466--470.

\bibitem{LeTreustBloch(ISIT)16}
M.~Le~Treust and M.~Bloch, ``Empirical coordination, state masking and state
  amplification: Core of the decoder's knowledge,'' in \emph{Proc. IEEE International
  Symposium on Information Theory (ISIT)}, Barcelona, Spain, July 2016, pp. 895--899.

\bibitem{LeTreustBloch(StateLeakageCoordination)18}
------, ``State leakage and coordination of actions: Core of decoder's
  knowledge,'' \emph{preprint available [on-line]
  https://arxiv.org/abs/1812.07026}, Dec. 2018.

\bibitem{Fekete1923}
M.~Fekete, ``{\"U}ber die verteilung der wurzeln bei gewissen algebraischen
  gleichungen mit ganzzahligen koeffizienten,'' \emph{Mathematische
  Zeitschrift}, vol.~17, no.~1, pp. 228--249, Dec. 1923.

\bibitem{ElGammalKim(book)11}
A.~E. Gamal and Y.-H. Kim, \emph{Network Information Theory}.\hskip 1em plus
  0.5em minus 0.4em\relax Cambridge University Press, Dec. 2011.

\bibitem{AM95}
R.~Aumann and M.~Maschler, \emph{Repeated Games with Incomplete
  Information}.\hskip 1em plus 0.5em minus 0.4em\relax MIT Press, Cambrige, MA,
  1995.

\bibitem{rockafellar1970convex}
R.~Rockafellar, \emph{Convex Analysis}, ser. Princeton landmarks in mathematics
  and physics.

\bibitem{cover-book-2006}
T.~M. Cover and J.~A. Thomas, \emph{Elements of information theory}.\hskip 1em
  plus 0.5em minus 0.4em\relax New York: 2nd. Ed., Wiley-Interscience, 2006.

\bibitem{MerhavShamai(StateMasking)07}
N.~Merhav and S.~Shamai, ``Information rates subject to state masking,''
  \emph{IEEE Transactions on Information Theory}, vol.~53, no.~6, pp.
  2254--2261, June 2007.

\end{thebibliography}

%%%%%%%%%%%%%%%%%%%%%%%%%%%%%%%%%%%%%%%%%%%%%%%%%%%%%%%%%%%%%%%%%

% Generated by IEEEtran.bst, version: 1.14 (2015/08/26)

%%%%%%%%%%%%%%%%%%%%%%%%%%%%%%%%%%%%%%%%%%%%%%%%%%%%%%%%%%%%%%%%%%%%%

\end{document}